\documentclass[10pt]{article}
\usepackage{latexsym}
\usepackage{graphicx}
\usepackage{amsfonts}
\usepackage{amsmath, amsthm, amssymb}
\usepackage{fullpage}
\usepackage{setspace}
\usepackage{longtable}
\usepackage{natbib}
\bibpunct{(}{)}{;}{a}{}{,}
\usepackage{dcolumn}
\newcolumntype{.}{D{.}{.}{-1}}
\newcolumntype{d}[1]{D{.}{.}{#1}}
\usepackage{bm}
\usepackage{threeparttable}
\usepackage{booktabs}
\usepackage{enumerate}
\usepackage{bbm}
\usepackage{floatpag}
\usepackage{subfig}
\usepackage[font=small,labelfont=bf]{caption}

\usepackage{amsmath}
\usepackage{amsthm}
\usepackage{amssymb}
\usepackage{siunitx}
\usepackage{titlesec}
\usepackage[margin=1in]{geometry}

\titleformat{\section}
  {\normalfont\fontsize{10}{15}\bfseries}{\thesection}{1em}{}
  \titleformat{\subsection}
  {\normalfont\fontsize{10}{15}\bfseries}{\thesubsection}{1em}{}
  
  \titlespacing\section{0pt}{12pt plus 4pt minus 2pt}{0pt plus 2pt minus 2pt} 
  \titlespacing\subsection{1pt}{12pt plus 4pt minus 2pt}{0pt plus 2pt minus 2pt}

\newtheorem{lemma}{Lemma}
\newtheorem{lemma*}{Lemma}

\newtheorem{definition}{Definition}
\newtheorem{theorem}{Theorem}
\newtheorem{theorem*}{Theorem}

%{\boldsymbol{\rho}}
%\mathbf{a}}

\newcommand{\bdk}{\mathbf{d}_k} 
\newcommand{\bY}{\mathbf{Y}}
\newcommand{\bX}{\mathbf{X}}
\newcommand{\bU}{\mathbf{U}}

\newcommand{\bZ}{\mathbf{Z}}

%newcommand{\bV}{\mathds{V}}

%\newcommand{\M}{\mathcal{M}_2}

%%% END Article customizations

%%% The "real" document content comes below...

%\date{} % Activate to display a given date or no date (if empty),
         % otherwise the current date is printed 
\title{Covariate-adaptive randomization inference in matched designs}
\author{Samuel D. Pimentel and Yaxuan Huang \thanks{Samuel D. Pimentel is an Assistant Professor and Yaxuan Huang is a doctoral student in the Statistics Department at University of California, Berkeley, 367 Evans Hall, Berkeley, CA 94720.  Correspondence should be addressed to spi@berkeley.edu. Samuel D. Pimentel is supported by the National Science Foundation under Grant No. 2142146, and both authors are supported by the Hellman Family Foundation.  {Part of this research was performed while Samuel D. Pimentel was visiting the Institute for Mathematical and Statistical Innovation (IMSI), which is supported by the National Science Foundation (Grant No. DMS-1929348).} The authors would like to thank Avi Feller, Kevin Guo, Ben Hansen, and Ruoqi Yu for helpful comments. }}

\begin{document}
\pagestyle{plain}

\maketitle

\begin{abstract}
It is common to conduct causal inference in matched observational studies by proceeding as though treatment assignments within matched sets are assigned uniformly at random and using this distribution as the basis for inference.  This approach ignores observed discrepancies in matched sets that may be consequential for the distribution of treatment, which are succinctly captured by within-set differences in the propensity score.  We address this problem via covariate-adaptive randomization inference, which modifies the permutation probabilities to vary with estimated propensity score discrepancies and avoids requirements to exclude matched pairs or model an outcome variable.  We show that the test achieves type I error control arbitrarily close to the nominal level when large samples are available for propensity score estimation. We characterize the large-sample behavior of the new randomization test  for a difference-in-means estimator of a constant additive effect.
 We also show that existing methods of sensitivity analysis generalize effectively to covariate-adaptive randomization inference.  Finally, we evaluate the empirical value of covariate-adaptive randomization procedures via comparisons to traditional uniform inference in matched designs with and without propensity score calipers and regression adjustment using simulations and analyses of genetic damage among welders and right-heart catheterization in surgical patients. { \textbf{Keywords:} causal inference, matching, permutation test, propensity score, sensitivity analysis.}
\end{abstract}

\onehalfspacing

\section{Introduction}

\subsection{Re-evaluating a common model for inference}

Randomized trials provide an effective means for measuring the effect of a treatment of interest relative to a control condition for at least two reasons.  First, random allocation of treatment to units ensures that large differences in pre-treatment characteristics between the group of units selected for treatment and the group of units selected for control arise in large samples only with very small probability. Second, the known distribution of indicators of treatment across units in the study provides a basis for inference.  By considering each unit's potential outcome values under treatment and under control as fixed latent values and repeatedly permuting treatment labels across study units (or using tests that rely on the large-sample behavior of such permutations) one may obtain inferences without making strong assumptions about the sampling procedure used to select the study units or the model for the outcome variable.  This conceptual approach, dating back at least  to \citet{fisher1935design}, is often known as randomization inference.

In contrast, in observational studies of a binary effect concerns arise about whether units receiving treatment and units receiving control are otherwise comparable.  If there is confounding, or systematic differences in variables (either observed or unobserved) that are predictive of the outcome of interest, the effect estimate from a simple group comparison will generally differ systematically from the effect that would have been measured in a randomized trial.  To adjust for observed confounding variables, researchers may estimate an outcome model in the absence of treatment and compare units with similar expected outcomes under control, estimate a treatment model or propensity score and compare units with similar propensities to receive treatment, or some combination of the two.  Matched observational studies adjust for observed confounding by grouping each treated unit with one or more similar control units and excluding controls not sufficiently similar to any treated unit \citep{stuart2010matching}.  When matching is conducted without replacement of controls so that matched sets are disjoint, and with exact agreement on a propensity score so that units grouped together shared identical propensities for treatment, then each matched set is like a miniature randomized trial; conditional on one unit within the group receiving treatment, each is uniformly likely to have been the one selected.  As such, methods of randomization inference are frequently applied  to matched observational studies as though in a stratified randomized trial.

Randomization inference in randomized trials depends on exact knowledge of the true randomization probabilities, and use of these methods in matched studies is motivated by an ideal setting in which the true propensity scores are known and matched exactly.  In reality, however, propensity scores must be estimated, and except in cases where the measured variables are few and discrete they are never matched exactly.  Optimal matching procedures such as those described in \citet{rosenbaum1989optimal}, \citet{zubizarreta2012using}, \citet{austin2015optimal}, and \citet{pimentel2020optimalmatching}  use estimated propensity score differences as important inputs, so that when a match is created the researcher has information easily available about which matched sets are relatively closer or further from achieving the ideal uniform distribution for treatment assignment.  Yet this information is not used when it comes time to do inference. By using uniform randomization inference, researchers implicitly assume a much simpler model and hope that these differences are all small enough not to create substantial lack of fit.  

We propose a new method for inference in matched observational studies, covariate-adaptive randomization inference, that explicitly uses estimated propensity score discrepancies to update permutation probabilities.  This approach retains most of the advantages of uniform randomization inference -- clear conceptual connections to a hypothetical randomized trial, ease of implementation, compatibility with interpretable methods of sensitivity analysis --- while addressing potential for lack of fit even in settings where propensity score differences need not disappear in large samples.   Furthermore, the lack of fit adjustment generally does not require users to alter the match itself in ways that reduce overall sample size. %or reduce other forms of match quality.
 
 In what follows we develop and evaluate covariate-adaptive randomization inference.  In Section \ref{sec:setup} we introduce a formal framework and describe the shortcomings of uniform randomization inference in detail.  In Section \ref{sec:adaptive} we introduce covariate-adaptive randomization inference, giving procedures for hypothesis testing and for estimation, and confidence interval construction under a constant additive effect model.
In Section \ref{sec:pscore_error} we bound the error introduced into the procedure by estimation of the propensity score.  In Section \ref{sec:sens}, we generalize sensitivity analysis procedures grounded in uniform randomization inference to covariate adaptive randomization inference to allow valid inference in the presence of unobserved confounding variables.  In Section \ref{sec:sims}, we demonstrate finite sample performance of covariate-adaptive hypothesis tests and confidence intervals and compare to alternative strategies such as caliper matching and regression adjustment.  In Section \ref{sec:data} we demonstrate implications for practice by reanalysis of two observational datasets: one measuring genetic damage experienced by welders and one assessing the impact of right-heart catheterization on patient mortality. Finally, Section \ref{sec:discussion} highlights important questions and connections raised by this work and outlines opportunities for further research.
%\subsection{Data example}

%Maybe?

%\subsection{Outline}

\section{Formal framework and problem setup}
\label{sec:setup}

\subsection{Uniform randomization inference in matched designs}
\label{subsec:uniform}
% Set up sampling framework, describe form of data.  Describe randomization inference, including tests of the sharp null, large-sample limits for difference-in-means and rank statistics, confidence intervals via normal approximation or via inversion of the test.

Consider a population of individuals each represented by a vector $(Y(1), Y(0), Z, X, U)$. $Z$ is a binary indicator for membership in the treatment group, ${X}$ is a vector of observed covariates, and $U$ is an unobserved covariate.  The values $Y(z)$ are {potential outcomes} under different treatment conditions as defined under the Neyman-Rubin causal model and the stable unit variation assumption \citep{rubin1980randomization, holland1986statistics}, which specifies that an individual's outcome depends only on its own treatment status, rather than on the treatment status of other individuals, and that the only versions of treatment are 0 and 1. Only the information $(Y, Z, {X})$ is observed by the analyst, where $Y$ represents the observed outcome $Y(Z)$.  Let $\lambda(\mathbf{x}) = P(Z | {X} = {x})$ be the conditional probability of treatment given observed covariates, or the {propensity score}, and let $\pi({x}, u) = P(Z |{X} = {x}, U=u)$ be the true probability of treatment (which also depends on the unobserved $U$).

We assume individuals are first sampled independently from the population,  and then formed into a matched design $\mathcal{M}$, consisting of $K$ matched sets, on the basis of their treatment variables $Z$ and covariates $X$ alone.  Each matched set, numbered $k= 1$ through $K$, contains exactly one treated individual and one or more control individuals. Individuals in set $k$ are numbered $k1$ through $kn_k$, where $n_k$ is the number of units in set $k$, in arbitrary order, so that we may refer to the treatment indicator of the $i$th unit in the $k$th matched set as $Z_{ki}$.  We also define $n = \sum^K_{k=1}n_k$.   Let vectors $\bY(1), \bY(0), \bY, \bZ, \bU \in \mathbb{R}^n$ contain the observed univariate data ordered so that units in the same matched set are contiguous; in the language of  \citet{rosenbaum2002observational}[\S 2]. Let matrix $\mathbf{X}$ contain all $n$ vectors $X_{ki}$ in its rows with an identical ordering; in addition, we abuse notation slightly to  let $\lambda(\mathbf{X})$ represent the $n$-vector of true propensity scores. %	the vectors $\mathbf{Z}$
%We assume without loss of generality that the units in the same matched set are contiguous in the vector $\mathbf{Z}_\mathcal{M}$.  Following \, we refer to vectors in $\mathbb{R}^{n'}$ (where $n' = \sum^K_{k=1}n_k$) with values organized into $K$ strata of sizes $n_1, \ldots, n_K$ and indexed using the $ki$ as stratified N-tuples.
%We sample $n$ individuals, indexed by $i = 1, \ldots, n$, from the population. Let vectors $\bY(1), \bY(0), \bY, \bZ, \bU$ contain the observed univariate data numbered 1 through $n$, and let matrix $\mathbf{X}$ contain all $n$ vectors $X_i$ in its rows; in addition, we abuse notation slightly to  let $\lambda(\mathbf{X})$ represent the $n$-vector of true propensity scores.
%We represent a match $\mathcal{M}$ as a set of non-overlapping subsets of the sample indices $1, \ldots, n$, each containing precisely one index $i$ such that $Z_i = 1$.  
A matched design is said to be exact on a particular variable or quantity $\mathbf{v} \in \mathbb{R}^n$ if for any matched set $k$,  $v_{ki} = v_{kj}$ for all $i,j \in \{1, \ldots,n_k\}$.  %Let $\mathbf{Z}_\mathcal{M}$ be the subset of variables $Z_i$ for which index $i$ appears in some matched set in $\mathcal{M}$, and 
Let $\mathcal{Z}_\mathcal{M}$ be the set of all treatment vectors $\mathbf{Z}'$ such that $\sum^{n_k}_{i=1}Z'_{ki} = 1$ for all matched sets $k$.  
%We re-index the subsample of the data that is included in the match as follows: e

We now consider the the distribution of treatment assignments conditional on the match selected and the potential outcomes, i.e $P(\mathbf{Z} \mid \mathcal{F})$ where $\mathcal{F} = \{\mathcal{Z}_\mathcal{M}, \bX, \bY(1), \bY(0)\}$.  \citet[\S 3]{rosenbaum2002observational} argues that this distribution is discrete uniform on $\mathcal{Z}_\mathcal{M}$  under two key assumptions: first, the absence of unobserved confounding, under which $\lambda(x) = \pi(x,u) \text{ for all }u$, and exact matching on covariates $\mathbf{X}$ (or more generally exact matching on true propensity scores $\lambda(X)$).
%If no unobserved confounding is present, propensity scores are  true probabilities of treatment:
%\begin{align}
%\label{eqn:noU}
%\lambda(x) = \pi(x,u) \text{ for all }u.
%\end{align}
%Under this assumption, \citet[\S 3]{rosenbaum2002observational} argues that under exact matching on covariates $\mathbf{X}$ or on true propensity scores $\lambda(X)$,  is uniform. %:  
%\[
%P(\bZ_\mathcal{M} \mid M = \mathcal{M}, \bX, \bY(1), \bY(0)) = P(\bZ_\mathcal{M} \mid \mathcal{Z}_\mathcal{M}) = \frac{1}{\mid \mathcal{Z}_\mathcal{M} \mid}.
%\]
%\end{proposition}
 %This randomization distribution can be used to derive 
 The null distribution for a test statistic $T$ can then be derived under the following sharp null hypothesis of no treatment effects for any individual in the sample:
\begin{align}
\label{eqn:sharpnull}
\mathbf{Y}(1) = \bY(0).
\end{align}
In particular, the sharp null hypothesis of no effect guarantees that the actual observed outcome $Y_i$ for individual $i$ would still have been observed had $Z_i$ taken on a different value. Thus under the sharp null the test statistic $T(Z,Y)$ varies conditional on $\mathcal{Z}_\mathcal{M}, \bX, \bY(1), \bY(0)$ only through the vector $\mathbf{Z}$, which is uniformly distributed over all permutations $\mathbf{Z}_{perm}$ of the observed element of $\mathbf{Z}$ within matched sets.  The exact p-value for the test of the sharp null, where we reject for larger values of $T(Z,Y)$, can be computed by computing the proportion of values of $\mathbf{Z}_{perm}$ for which $T(Z_{perm}, Y)$ exceeds $T(Z,Y)$.  In practice this quantity can be computed via Monte Carlo sampling from the permutation distribution or via a normal approximation to the permutation distribution.  For example, %\citet{imbens2009causal} CHECK THIS gives an asymptotic distribution for the difference-in-means statistic, and 
\citet{rosenbaum2002observational}[\S 2] 
gives asymptotic distributions for a variety of rank statistics.

Estimates of causal effects and confidence intervals can also be developed from the null distribution associated with a sharp null hypothesis.  Under a treatment effect model such as the constant additive model, the randomization test can be inverted to produce Hodges-Lehmann point estimates and corresponding confidence intervals.  Alternatively, when the test statistic itself is an estimator for an effect of interest (as in the case of the difference-in-means estimator), confidence intervals can be obtained under a normal approximation in large samples by using a variance estimate obtained from the null distribution and quantiles of the normal probability density function.

\subsection{Propensity score discrepancies imperil the uniform treatment distribution}

%We now return to the key assumptions underpinning the uniform conditional distribution of treatment within matched sets.  So far we have mentioned the absence of unobserved confounding and exact matching on the true propensity score. Assumptions that unobserved confounding is absent have long been viewed with concern and have inspired a substantial literature on sensitivity analysis; we review a method designed for matched pairs in Section 

%The assumption of exact matching on the true propensity score has not received so precise and careful a treatment.  
While the assumption of exact matching on a true propensity score yields convenient mathematical symmetry
 matching is never exact on all measured covariates in datasets of any substantial scale, nor is it practical to match exactly even on a univariate true propensity score, both because propensity scores must be estimated in practice and because they are often modeled as smooth functions of continuous variables that do not agree exactly for any two subjects.   As such it is important to consider the potential for resulting lack of fit between the nominal uniform distribution of treatment used for inference and the true distribution for a given matched design.  \citet{hansen2009propensity} addressed this question for the difference-in-means estimator of an additive treatment effect and found potential for slow-shrinking  finite sample bias when the true propensity score is not matched exactly.  %Similarly, \citet{abadie2006large}, working in a different formal framework, found evidence that in large samples matching may retain large bias when more than one or two covariates are present. 
 Other authors have shown that bias in treatment effect estimation \citep{saevje2021inconsistency} and failure of Type I error control for the uniform randomization test \citep{guo2023statistical} persist even in infinite samples in settings where all treated units are matched in pairs without replacement  except in unusual situations such as populations where the probability of propensity scores exceeding 0.5 is zero or the true outcome model is known.  Of course, even bigger problems may arise if unobserved confounding is also present, but methods of sensitivity analysis have been constructed with specific attention to this issue, as will be explored in greater detail in Section \ref{subsec:sensreview}.
 
 It has been argued that certain kinds of balance tests may be understood as tests for lack of fit between the actual distribution of treatment and the uniform model of inference used \citep{hansen2009propensity}, so that if balance tests do not reveal problems then this problem can be ignored.  This approach falls short of resolving the problem optimally for two reasons.  First and most importantly, the theory underlying these balance tests relies on an asymptotic regime in which propensity score differences within matched sets approach zero as sample size increases, which in turn comes from a pattern of ever-larger concentrations of control units in a region arbitrarily close to any given treated unit. In many settings, this assumption is not reasonable.  For example, \citet{pimentel2015large} consider the common setting under which matches are conducted within natural blocks or groups of units, such as matching patients within hospitals or students within schools.  When, as in these examples, the size of an individual block may reasonably be viewed as bounded and the most natural way to think about increasing sample size is by adding more blocks, there is no reason to expect propensity score differences within matched sets to shrink to zero, since the concentration of matchable controls near a given treated unit is limited by the upper bound on the size of a block.  More generally, \citet{saevje2021inconsistency} demonstrated that when propensity scores larger than 0.5 are present in the population with probability exceeding 0 and matching is conducted without replacement, then some matched sets will necessarily have propensity score discrepancies bounded away from zero.  A second problem with the balance testing solution is that it does not fully articulate how to resolve problems with a matched design when the balance  test fails.  Common solutions such as using a tighter propensity score caliper lead to tradeoffs by reducing other aspects of match quality such as the proportion of treated units retained.

Another proposed strategy is the use of regression adjustment to remove slow-shrinking bias not addressed by close matching on a propensity score or on covariates \citet{abadie2011bias, guo2023statistical}.  Under assumptions on the outcome model, this method removes the bias, and under sufficiently strong assumptions on the outcome model and the convergence of matched discrepancies to zero, a fast rate is achieved.  However, the assumptions may not always be plausible, particularly the condition that the propensity score discrepancies shrink to zero as discussed by \citet{saevje2021inconsistency}.  In addition, estimating an outcome model may be inconvenient if the outcome is multivariate, is related to observed covariates in a complex or poorly-understood manner, or is not yet measured at the time the match is conducted.
%
%Various strategies have been proposed for addressing the bias arising from inexact matching on observed covariates.  \citet{hansen2009propensity} shows that balance tests can identify settings where bias is present given constraints on the role of the covariates in the outcome model, suggesting the use of calipers to enforce closer pairs when balance tests reject;  \citet{saevje2021inconsistency} advocates for matching with replacement. %, although this introduces problems for randomization inference as discussed by \citet{glazer2022robust}.  
%instead propose combining matching with a regression adjustment that removes the bias; 

\subsection{Lack of fit due to dependence between the true treatment vector and the match itself}
\label{subsec:zdependence}

All of the above discussion focuses on discrepancies  between the uniform distribution of treatment given $\mathcal{F} = \{\mathcal{Z}_\mathcal{M}, \bX, \bY(1), \bY(0)\}$ and the actual distribution of treatment given these quantities, thinking of the match $\mathcal{M}$ as fixed over all possible $\mathbf{Z}$-values.  However, if a model of independently-sampled subjects is assumed on the original data prior to matching then one might view the matched design, which is constructed with reference both to $\mathbf{X}$ and $\mathbf{Z}$, as a random function of treatment. In the special case of exact matching this issue need not arise, as the specific match chosen may remain conditionally independent of $\mathbf{Z}$ given event $\mathcal{Z}_\mathcal{M}$. However, when matching is not exact this guarantee no longer holds, and for some $\mathbf{Z} \in \mathcal{Z}_\mathcal{M}$ it may be the case that match $\mathcal{M}$ never would have been selected. 
%additional lack of fit may arise if we acknowledge the dependence of $\mathcal{M}$ on the original observed treatment vector $\mathcal{Z}$.  Under this perspective, we rename $\mathcal{M}$ as $\mathbf{M}$ to denote its role as a random variable (specifically, a deterministic function of random variables $\mathbf{X}$ and $\mathbf{Z}$).  When matching is not exact, the distribution of $\mathbf{Z}$ conditional on $\mathbf{M}, \mathcal{Z}_\mathbf{M},\bX, \bY(1), \bY(0)$ differs from both the uniform distribution and the distribution $P(\mathbf{Z} \mid \mathcal{Z}_\mathcal{M}, \bX, \bY(1), \bY(0))$ holding $\mathcal{M}$ fixed because not all $\mathcal{Z} \in \mathcal{Z}_\mathbf{M}$ would have led to the selection of match $\mathbf{M}$. 
 For example, a treated unit with a high propensity score may be difficult to match and be paired with a control having an appreciably lower propensity score; however, if the treated unit had been a control instead, it would have been possible to form a better match with a treated unit having a similarly high propensity score value. Values of $\mathbf{Z}$ that would have produced a different match do not belong in the support of the distribution of treatment conditional on $\mathcal{M}$, but they appear with positive support in the other two distributions mentioned. %This issue does not arise if, as is frequently done in analysis of matched data, the stochastic model is assumed on the matched sets rather than on the original data \citep{rosenbaum }
The phenomenon of reduced support will be denoted $Z$-dependence (with reference to the dependence of the match chosen on the original treatment status vector). 

 This issue is also discussed by \citet{pashley2021conditional}, who note, ``proper conditional analysis would need to take into account the matching algorithm."  However, no such analysis has yet been proposed in the matching literature, and \citeauthor{pashley2021conditional} suggest that matching may perform well empirically even in the presence of  $Z$-dependence.  $Z$-dependence is not a central focus in what follows, and unless otherwise noted we will take the perspective that the matched sets $\mathcal{M}$ are fixed.  However, $Z$-dependence will be helpful in explaining the empirical performance of uniform and covariate-adaptive randomization inference in certain settings, as will be shown, and we believe it is an important topic for further study.

\section{Adapting randomization inference to covariate discrepancies}
\label{sec:adaptive}

\subsection{True and estimated conditional distributions of treatment status}

To adapt randomization inference to discrepancies in observed covariates, we begin by considering the case where no unobserved confounding is present and represent the true conditional distribution in terms of propensity scores.  Since treatment indicator $Z_{ki}$ is Bernoulli$(\lambda(X_{ki}))$, we have: 
\begin{align*}
&P\left\{Z_{ki} = 1 \mid \mathcal{Z}_\mathcal{M}, \bX, \bY(1), \bY(0)\right\} \\
&=\frac{P\left\{Z_{ki} = 1, Z_{k2} = 0, \ldots, Z_{kn_k} = 0 \mid \lambda(\mathbf{X}_{k})\right\} }{
\sum_{j = 1}^{n_k}P\left\{Z_{k1} = 0, \ldots Z_{k(j-1)} = 0,Z_{kj} = 1,Z_{k(j+1)} = 0, \ldots, Z_{kn_k} = 0 \mid \lambda(\mathbf{X}_{k})\right\} } \\
&= \frac{\lambda(X_{ki})\prod_{j \neq i }[1 - \lambda(X_{kj})]}{\sum_{j=1}^{n_k}\lambda(X_{kj})\prod_{\ell \neq j }[1 - \lambda(X_{kj\ell)})]} = \frac{\text{odds}\{\lambda(X_{ki})\}}{\sum^{n_k}_{j=1}\text{odds}\{\lambda(X_{kj})\}}= p_{ki}.
%&= \frac{\lambda(X_{k1})/[1- \lambda(X_{k1})]}{\lambda(X_{k1})/[1- \lambda(X_{k1})] + \lambda(X_{k2})/[1- \lambda(X_{k2})]} = p_{k1}.
\end{align*}
%This probability has a more intuitive form, specifically the ratio of the odds that unit $i$ is treated to the sum of the odds of treatment for each individual in the matched set.
%\begin{align*}
%p_{ki}&= \frac{\lambda(X_{ki})/[1- \lambda(X_{ki})]}{\lambda(X_{k1})/[1- \lambda(X_{k1})] + \lambda(X_{k2})/[1- \lambda(X_{k2})]}.
%\end{align*}
Because individuals are sampled independently, the joint conditional probability for a treatment vector $\mathbf{Z}$ is given by multiplying the appropriate $p_{ki}$ terms together:
\begin{align}
P\left\{\mathbf{Z}_\mathcal{M} = \mathbf{z} \mid \mathcal{Z}_\mathcal{M}, \bX, \bY(1), \bY(0)\right\} = \left\{ \begin{array}{cl} \prod_{k=1}^K\prod_{i=1}^{n_k}p_{ki}^{z_{ki}} & \mathbf{z} \in  \mathcal{Z}_\mathcal{M}\\ 0 & \mathbf{z} \notin \mathcal{Z}_\mathcal{M}\end{array} \right.
\label{eqn:true_distro}
\end{align}
If the true propensity score $\lambda(\cdot)$ were known, the $p_{ki}$s could be calculated exactly.  To conduct inference under the sharp null, the Monte Carlo strategy described in Section \ref{subsec:uniform} could be used, except that instead of permuting treatment assignments within matched sets uniformly at random one would sample the identity of the treated unit in each set from an independent multinomial random variable with 1 trial and sampling probabilities $p_{k1}, \ldots, p_{kn_k}$.  Exact finite-sample confidence intervals could also be obtained by inverting the test, and the large majority of the benefits of the uniform randomization inference procedure could be retained despite the reality of inexact matching on the propensity score.  Note that when propensity scores are matched exactly, this procedure reduces to the uniform test.

Unfortunately, the propensity score is typically unknown, so the true conditional distribution of treatment is also unknown.  However, it is straightforward to construct a plugin estimator for this distribution by substituting an estimate of the propensity score $\widehat{\lambda}(\cdot)$ for $\lambda(\cdot)$ in the formula for $p_{ki}$.  We denote the process of conducting hypothesis tests and creating confidence intervals using this plugin distribution as {covariate-adaptive randomization inference}.  In concrete terms, an investigator may conduct covariate-adaptive randomization inference via the Monte Carlo approach of Section \ref{subsec:uniform} by calculating the estimated propensity odds for each subject in a matched set, by dividing each estimated odds by the sum of the odds in the matched set,  by determining treatment via a single draw from a multinomial distribution with probabilities for each subject given by these normalized odds, and by conducting this process independently within all matched sets, repeating as necessary to generate a close approximation to the null distribution.  

%NOT SURE THIS IS THE RIGHT PLACE. FOR THIS PARAGRAPH.

\subsection{The difference-in-means estimator: large-sample distribution}
\label{subsec:largesample}

While Monte Carlo sampling from the covariate-adaptive permutation distribution is straightforward, it is also instructive to construct a normal approximation using the mean and variance of the test statistic.  %Normal approximations may also be derived for covariate-adaptive randomization inference.  
Here we demonstrate this approach for the standard difference-in-means estimator:
\[
T(\mathbf{Z}_\mathcal{M},\mathbf{Y}_\mathcal{M})= \frac{1}{K}\sum^K_{k=1}\left\{\sum^{n_k}_{i=1}Z_{ki}Y_{ki} - \frac{1}{n_k-1}\sum^{n_k}_{i=1}(1 - Z_{ki})Y_{ki}\right\} = \frac{1}{K}\sum^K_{k=1}\sum^{n_k}_{i=1}Y_{ki}\frac{n_kZ_{ki} - 1}{n_k -1}. 
\]
Under the sharp null hypothesis, the $\mathbf{Y}$ is fixed across all values for $\mathbf{Z}$ so the expectation and the variance are calculated only over the random variables $\mathbf{Z}$.
\begin{align}
\label{eqn:nullmean}
E(T&(\mathbf{Z}_\mathcal{M},\mathbf{Y}_\mathcal{M} )\mid\mathcal{Z}_\mathcal{M}, \bX_\mathcal{M}, \bY_\mathcal{M}(1), \bY_\mathcal{M}(0)) %&= %\frac{1}{K}\sum^K_{k=1}\left\{\sum^{n_k}_{i=1}p_{ki}Y_{ki} - \frac{1}{n_k-1}\sum^{n_k}_{i=1}(1 - p_{ki})Y_{ki}\right\}\\
=\frac{1}{K}\sum^K_{k=1}\sum^{n_k}_{i=1}Y_{ki}\frac{n_kp_{ki} - 1}{n_k - 1}\\\nonumber
Var(&T(\mathbf{Z}_\mathcal{M},\mathbf{Y}_\mathcal{M}) \mid\mathcal{Z}_\mathcal{M}, \bX_\mathcal{M}, \bY_\mathcal{M}(1), \bY_\mathcal{M}(0)) %\\\nonumber
%&= \frac{1}{K^2}\sum^K_{k=1}\left\{\sum^{n_k}_{i=1}\sum^{n_k}_{j=1}Y_{ki}Y_{kj}\frac{E[(n_kZ_{ki} - 1)(n_kZ_{kj} - 1)]}{(n_k -1)^2} - \sum^{n_k}_{i=1}\sum^{n_k}_{j=1}Y_{ki}Y_{kj}\frac{(n_kp_{ki} - 1)(n_kp_{kj} - 1)}{(n_k -1)^2} \right\}\\\nonumber
%&= \frac{1}{K^2}\sum^K_{k=1}\sum^{n_k}_{i=1}\sum^{n_k}_{j=1}Y_{ki}Y_{kj}\frac{n_k^2[E(Z_{ki}Z_{kj}) - p_{ki}p_{kj}]  - n_k[E(Z_{ki})-p_{ki} + E(Z_{kj}) - p_{kj} ]}{(n_k - 1)^2}    \\\nonumber
%&= \frac{1}{K^2}\sum^K_{k=1}\left\{\sum^{n_k}_{i=1}Y_{ki}^2\left(\frac{n_k}{n_k-1}\right)^2(p_{ki} - p_{ki}^2) + \sum^{n_k}_{i=1}\sum^{n_k}_{j\neq i}Y_{ki}Y_{kj}\left(\frac{n_k}{n_k-1}\right)^2[0 - p_{ki}p_{kj}]\right\}    \\\nonumber
%&
= \frac{1}{K^2}\sum^K_{k=1}\sum^{n_k}_{i=1}\left(\frac{n_k}{n_k-1}\right)^2Y_{ki}p_{ki}\left\{Y_{ki}(1- p_{ki}) - \sum^{n_k}_{j\neq i}Y_{kj}p_{kj}\right\} %  \\\nonumber
%&= \frac{1}{K^2}\sum^K_{k=1}\sum^{n_k}_{i=1}\frac{Y_{ki}^2\left\{ n_k^2E(Z_{ki}^2) - n_k^2p_{ki}^2 + 2n_kE(Z_{ki}) - 2n_kp_{ki}\right\}}{(n_k -1)^2}\\\nonumber
%&=  \frac{1}{K^2}\sum^K_{k=1}\left(\frac{n_k}{n_k -1}\right)^2\sum^{n_k}_{i=1}Y_{ki}^2(p_{ki} - p_{ki}^2).
\end{align}

Note that these formulas get considerably simpler in the case of a matched pair design, in which $n_k = 2$ for all $k$.  Here the inner sum  $\sum^{n_k}_{i=1}Y_{ki}\frac{n_kZ_{ki} - 1}{n_k -1}$ may be rewritten as $(Z_{k1} - Z_{k2})(Y_{k1} - Y_{k2})$, so that the mean of the test statistic is the sample average (across matched pairs $k$) of $V_kD_k$ where $V_k = p_{k1} - p_{k2}$ and $D_k = Y_{k1} - Y_{k2}$, and the variance is the sample average of $K^{-1}(1 - V_k^2)D_k^2$.

While this mean and variance  of the test statistic depend on the unknown true propensity score through the $p_{ki}$ terms, they can be estimated by substituting the $\widehat{p}_{ki}$ obtained by substituting the estimated propensity score $\widehat{\lambda}$ for the true propensity score $\lambda$. Large-sample inference may be conducted by computing a normalized deviate of the difference-in-mean statistic using estimates of the mean and variance above, and comparing to the standard normal distribution. This relies on a finite-sample central limit theorem based on a sequence of infinitely-expanding finite samples \citep{li2017general}  for which the mean and variance of the test statistic converge to stable limits.  
The most natural asymptotic regime places some upper bound on the size of a matched set $n_k$ across all these samples while $K$ grows towards infinity, since returns to matching large numbers of treated units to a single controls are quickly diminishing \citep{hansen2004full}; for more discussion of asymptotic regimes in matching with restricted or fixed matching ratios, see \citet{abadie2006large}.  The simplest available  central theorem is thus in the inner sums $\sum^{n_k}_{i=1}Y_{ki}\frac{n_kZ_{ki} - 1}{n_k -1}$, which are independent random variables with non-identical distributions, and can be proved using standard approaches when $n_k$ is uniformly bounded and a Lindeberg condition holds on the potential outcomes $Y(0)$.

The large-sample approximation just described does not explicitly account for random variation due to the estimation of the propensity score; in particular, the mean term, which is a function of  $\widehat{p}_{ki}$ random variables, is instead treated as fixed.  To probe the possible role of such variation, we used the M-estimation framework  \citep{stefanski2002calculus} to represent both the propensity score estimates and the test statistic of interest as part of a single multivariate estimation problem and to construct a new variance estimate for the test statistic that attempts to incorporate variation in the estimated propensity scores.  However, the new variance estimates were frequently near-identical or slightly smaller than those ignoring the propensity score estimation (presumably due to positive correlation between the estimated mean and the test statistic across propensity score fits); in our simulations the associated tests performed almost identically.  Given the simpler form and intuitive permutation analogue for the test ignoring propensity score estimation, we focus on this test going forward.  More details are provided on the M-estimation approach in Section A.3 of the appendix.

%, over which propensity scores differ from 0 and 1 by at least $\delta > 0$ for some fixed value $\delta$, d 

 \subsection{Estimates and confidence intervals for an additive effect}

So far we have discussed covariate-adaptive randomization inference primarily through the lens of testing the sharp null hypothesis of zero effect.  
It is also possible to use covariate-adaptive randomization inference to construct confidence intervals in setting where the primary goal is estimating a specific causal effect.  The exact details of this process depend on the estimand; for purposes of exposition, we focus on estimating a constant additive treatment effect.  More formally, we assume that there exists $\tau$ such that
\[
Y_i(1) = Y_i(0) + \tau \quad \quad \text{for all $i$}.
\]
and we seek to estimate $\tau$. 

One way to leverage a randomization inference procedure to produce such estimates is Hodges-Lehmann estimation.   First, note that for $\tau > 0$, the sharp null hypothesis of no effect no longer holds and observed outcomes $Y_i$ are no longer invariant to treatment status, but  the transformed outcomes $Y_i - Z_i\tau$ will be invariant, and a randomization test can be constructed by using $Y_i - Z_i\tau$ as input to the test statistic rather than $Y_i$.  Hodges-Lehmann estimation considers the resulting transformed test statistic and asks which value of $\tau$ makes it closest in value to its expectation under the null hypothesis.  For a helpful review in the context of randomization inference in observational studies, see \citet[\S 2.7.2]{rosenbaum2002observational}.  Hodges-Lehmann estimates for treatment effect parameters are consistent under mild regularity assumptions \citep[\S 1.4]{maritz1995distribution}

%Suppose now that a constant additive effect model holds but that matching is not exact on the propensity score.  
To conduct Hodges-Lehmann estimation of $\tau$, %constructed from the incorrect assumption that the uniform randomization distribution is correct, and examine the true distribution of this estimator
let $D(\tau)$ represent the difference in means of the transformed outcome $Y_i - Z_i\tau$:
\begin{align*}
D(\tau) &= \frac{1}{K}\sum^K_{k=1}\sum^{n_k}_{i=1}(Y_{ki} - \tau Z_{ki})\frac{n_kZ_{ki} - 1}{n_k -1}\\
&= \frac{1}{K}\sum^K_{k=1}\sum^{n_k}_{i=1}Y_{ki}\frac{n_kZ_{ki} - 1}{n_k -1} - \frac{1}{K}\sum^K_{k=1} \tau\frac{n_k - 1}{n_k -1}\\
&= D(0) - \tau.
\end{align*}
The Hodges-Lehmann estimate $\widehat{\tau}_{HL}$ is the value of $\tau$ such that $D(\tau) =D(0)-\tau$ is equal to  t$E[D(0)]$, so 
\[
\widehat{\tau}_{HL}  = D(0) - E[D(0)]
\]
%\] If we construct  $\widehat{\tau}_{HL}$ using the (incorrect) uniform randomization distribution to assess the null expectation of $D(0)$, we obtain $E_{unif}[D(0)] = 0$ so $\widehat{\tau}_{HL}^{unif} = D(0)$.  If instead we use the correct distribution based on $\lambda(\mathbf{X}_\mathcal{M})$,  we would obtain the estimator $\widehat{\tau}^\lambda_{HL} = D(0) -  \frac{1}{K}\sum^K_{k=1}\sum^{n_k}_{i=1}Y_{ki}\frac{n_kp_{ki} - 1}{n_k - 1}\\  In summary, the discrepancy between the uniform-randomization-based estimator and the correct Hodges-Lehmann estimator is as follows:
%\begin{align}
%\label{eqn:finsamp_discrep}
%\frac{1}{K}\sum^K_{k=1}(Y_{k1} - Y_{k2})v_k
%\end{align}

In practice, we conduct covariate adaptive randomization inference using the estimated propensity score $\widehat{\lambda}(\cdot)$ and the associated estimates $\widehat{p}_{ki}$.  The discrepancy between the Hodges-Lehmann estimate for this distribution and the corresponding Hodges-Lehmann estimate for true distribution is now given as follows:
\begin{align}
\label{eqn:finsamp_weighted}
\frac{1}{K}\sum^K_{k=1}\sum^{n_k}_{i=1}Y_{ki}\frac{n_k(\widehat{p}_{ki} - p_{ki}) - 1}{n_k - 1}
\end{align}
To the degree the $\widehat{v}_k$s are better estimates of the true $p_{ki}$s than the uniform estimate of $1/n_k$, model-weighted randomization inference recovers a more accurate approximation of the target Hodges-Lehmann estimator than the uniform randomization distribution.  

To obtain confidence intervals for the parameter $\tau$, one may either invert the Monte Carlo randomization test by searching over values $\tau_0$ for which the sharp null hypothesis $\tau = \tau_0$ is not rejected using test statistic $D(\tau_0)$, or one may use a normal approximation for the Hodges-Lehmann estimate.  Since the Hodges-Lehmann estimate differs from the usual difference-in-means only by a constant shift (given the estimated propensity score), their variances are identical and the formula in Section \ref{subsec:largesample} can be used to determine the width of the confidence intervals.

\section{Assessing the impact of propensity score estimation error}
\label{sec:pscore_error}

A primary concern in determining the empirical value of covariate-adaptive randomization inference is understanding the impact of errors in estimating the propensity score. To trust the method, we need some guarantee that small deviations between $\widehat{\lambda}(x)$ and true $\lambda(x)$ values result in only small deviations in nominal and actual test size and confidence interval coverage.  The following results provide partial reassurance in this regard.  To lay the groundwork, let $\widehat{\lambda}_N(\cdot)$ be an estimated propensity score fit on an external sample of size $N$ from the infinite population, and let the quantities $\mathbf{Z},\mathbf{X}, \mathbf{Y}(1), \mathbf{Y}(0)$ refer to a separate sample organized into $K$ matched pairs containing $\sum_{k=1}^Kn_k = n$ total units.   Let $F_\lambda$ represent the  true conditional distribution of $\mathbf{Z}$ as a function of the true propensity score $\lambda(\cdot)$ and let $F_{\widehat{\lambda}_N}$ represent the distribution of $\mathbf{Z}$ used to conduct covariate-adaptive inference in practice, based on $\widehat{\lambda}_N(\cdot)$.  Furthermore,  let $p_{\widehat{\lambda}_N,n}$
% $p_{\lambda,n}$ 
be the p-value produced by a nominal level-$\alpha$ covariate-adaptive randomization test of the sharp null hypothesis using %$F_\lambda$ and let $p_{\widehat{\lambda}_N,n}$ be the corresponding p-value produced by a test using the 
$F_{\widehat{\lambda}_N}$.    The following result, adapted from \citet{berrett2020conditional} (who consider the slightly different case in which permutations are across the entire dataset and not within matched sets), relates these quantities using the total variation distance of $F_\lambda$ and $F_{\widehat{\lambda}_N}$.
\begin{theorem}
If no unobserved confounding is present and the sharp null hypothesis of no effect is true, then
\[
P(p_{\widehat{\lambda}_N,n} \leq \alpha) \,\, \leq \,\, \alpha + d_{TV}\left(F_\lambda, F_{\widehat{\lambda}}\right)
\]
where $d_{TV}(P,Q)$ gives the total variation distance between two probability distributions $P$ and $Q$.
\label{thm:berrett}
\end{theorem}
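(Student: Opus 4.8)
The plan is to follow the template of \citet{berrett2020conditional}: the covariate-adaptive test would be \emph{exactly} valid if the distribution used to resample treatment were equal to the true conditional law of $\mathbf{Z}$, and replacing that law by a different one can inflate the Type I error rate by no more than their total variation distance. First I would condition on $\mathcal{F} = \{\mathcal{Z}_\mathcal{M}, \bX, \bY(1), \bY(0)\}$ and on the external sample that produced $\widehat{\lambda}_N$. Because the external sample is drawn independently of the study sample, $F_{\widehat{\lambda}_N}$ is then a fixed probability distribution on $\mathcal{Z}_\mathcal{M}$, and $F_\lambda$ is fixed as well; under the sharp null $\bY = \bY(0)$ is determined by $\mathcal{F}$, so the p-value is a deterministic function of the observed treatment vector $\mathbf{Z}$ (together with the independently generated reference draws, in the Monte Carlo implementation). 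Finally, since there is no unobserved confounding, \eqref{eqn:true_distro} gives $\mathbf{Z} \mid \mathcal{F} \sim F_\lambda$. It therefore suffices to prove the bound conditionally on $(\mathcal{F}, \widehat{\lambda}_N)$; the unconditional statement follows by iterated expectation, with $d_{TV}(F_\lambda, F_{\widehat{\lambda}})$ read as the corresponding averaged quantity.

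Next I would write $p_{\widehat{\lambda}_N,n} = g(\mathbf{Z})$, where for the exact test $g(z) = P_{\mathbf{Z}' \sim F_{\widehat{\lambda}_N}}\{T(\mathbf{Z}', \bY) \ge T(z, \bY)\}$ is the $F_{\widehat{\lambda}_N}$-survival function of the statistic at $z$ (for the Monte Carlo version, $g$ also takes as arguments the i.i.d.\ draws $\mathbf{Z}^{(1)}, \dots, \mathbf{Z}^{(M)} \sim F_{\widehat{\lambda}_N}$, and the normal-approximation version is analogous). The key step is the observation that the covariate-adaptive randomization test is exact at level $\alpha$ whenever the treatment law coincides with the reference law used to build it --- precisely the fact already invoked in Section \ref{sec:adaptive} for the oracle test based on the known propensity score, but now applied to the \emph{hypothetical} in which $F_{\widehat{\lambda}_N}$ itself plays the role of the true conditional law of $\mathbf{Z}$. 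That is, $P_{\mathbf{Z} \sim F_{\widehat{\lambda}_N}}\{g(\mathbf{Z}) \le \alpha\} \le \alpha$, and with the ``$\ge$'' tie convention this holds even when the distribution of $T$ has atoms.

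It then remains to compare the two sampling distributions for $\mathbf{Z}$. Since $\{z : g(z) \le \alpha\}$ is a fixed measurable set,
\[
P\big(p_{\widehat{\lambda}_N,n} \le \alpha\big) = P_{\mathbf{Z} \sim F_\lambda}\{g(\mathbf{Z}) \le \alpha\} \;\le\; P_{\mathbf{Z} \sim F_{\widehat{\lambda}_N}}\{g(\mathbf{Z}) \le \alpha\} + \sup_{A}\big| F_\lambda(A) - F_{\widehat{\lambda}_N}(A)\big| \;\le\; \alpha + d_{TV}\big(F_\lambda, F_{\widehat{\lambda}_N}\big),
\]
using the previous paragraph for the first bracket and the definition of total variation for the second. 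For the Monte Carlo implementation one enlarges the ``data'' to $(\mathbf{Z}, \mathbf{Z}^{(1)}, \dots, \mathbf{Z}^{(M)})$, whose law is $F_\lambda \otimes F_{\widehat{\lambda}_N}^{\otimes M}$ under the truth and $F_{\widehat{\lambda}_N}^{\otimes (M+1)}$ in the hypothetical; these product measures share their last $M$ independent coordinates and differ only in the first, so their total variation distance again collapses to $d_{TV}(F_\lambda, F_{\widehat{\lambda}_N})$ (couple the shared coordinates identically and the first coordinate by an optimal coupling), and the bound is unchanged.

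I expect the only genuine subtlety to be the \emph{framing}: the right comparison is not against an ``oracle'' test built from the true propensity score, but against the very same covariate-adaptive test run in a world where the data are generated from the estimated law --- this is what confines $F_\lambda$ to a single factor so that a one-step total-variation bound suffices. The remaining points need only routine care: invoking absence of unobserved confounding (via \eqref{eqn:true_distro}, which also treats the matched sets as fixed, consistent with the standing convention of Section \ref{sec:setup}) to obtain $\mathbf{Z} \mid \mathcal{F} \sim F_\lambda$; checking super-uniformity of the survival-function p-value under its own reference law in the presence of ties; and the bookkeeping in passing from the conditional bound --- in which $d_{TV}(F_\lambda, F_{\widehat{\lambda}_N})$ depends on $\bX$ and on the external sample --- to an unconditional one, which carries an expectation over those quantities.
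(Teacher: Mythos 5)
Your proposal is correct and takes essentially the same route as the paper, which does not write out a proof of this theorem but presents it as an adaptation of the conditional permutation test argument of Berrett et al.\ (2020): exactness of the test under its own reference law $F_{\widehat{\lambda}_N}$, followed by a single change of measure from $F_{\widehat{\lambda}_N}$ to $F_\lambda$ costing at most the total variation distance, with the Monte Carlo reference draws absorbed by a product-measure coupling. Your closing remarks on the correct framing (comparing to the same test in a world where $\mathbf{Z}\sim F_{\widehat{\lambda}_N}$, not to an oracle test) and on reading the bound conditionally on the external sample are consistent with how the paper subsequently applies the result in its proof of Theorem 2.
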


In words, this theorem says that the true type I error of a covariate-adaptive randomization test performed with nominal level $\alpha$ is no larger than $\alpha$ plus the total-variation discrepancy of the distributions implied by the true and estimated propensity score.  Intuitively, when the estimated propensity score is close to the true propensity score, this means that the nominal type I error rate is close to the true type I error rate.  We formalize this intuition for the case in which the true propensity score obeys a logistic regression model.

\begin{theorem}
Suppose that $P(Z=1  \mid X) = \lambda(X) = \frac{1}{1 + \exp\left(-\beta^TX \right)}$ and that $\widehat{\lambda}_N$ is obtained by estimating this model using maximum likelihood. Suppose furthermore that $N$ increases with the primary sample size $n$ such that $\lim_{n\longrightarrow \infty} n/N = 0$, and suppose the covariates $X$ have compact support.  Then under the conditions of Theorem \ref{thm:berrett},
\[
\limsup_{n,N \longrightarrow \infty}P(p_{\widehat{\lambda}_N,n} \leq \alpha) \,\, \leq \,\,  \alpha.
\]
%with probability at least $1 - \delta$.
\label{thm:logistic}
 \end{theorem}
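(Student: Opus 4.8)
The plan is to deduce the claim from Theorem~\ref{thm:berrett} by showing that the total variation discrepancy it features vanishes in the stated regime. Conditioning on the external sample, and hence on $\widehat{\lambda}_N$, Theorem~\ref{thm:berrett} gives $P\big(p_{\widehat{\lambda}_N,n}\le\alpha \,\big|\, \widehat{\lambda}_N\big) \le \alpha + d_{TV}\big(F_\lambda, F_{\widehat{\lambda}_N}\big)$; taking expectations over the external sample reduces the theorem to showing $E\,d_{TV}\big(F_\lambda, F_{\widehat{\lambda}_N}\big)\longrightarrow 0$ as $n,N\to\infty$ with $n/N\to 0$. Because $d_{TV}\le 1$ deterministically, it suffices to establish $d_{TV}\big(F_\lambda, F_{\widehat{\lambda}_N}\big)\to 0$ in probability and then appeal to bounded convergence.

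Both $F_\lambda$ and $F_{\widehat{\lambda}_N}$ are products over the $K$ matched sets (pairs, in the setup of this section) of a categorical distribution on the identity of the treated unit, governed by the within-set probability vectors $p_{k\cdot} = (p_{k1},\dots,p_{kn_k})$ and $\widehat{p}_{k\cdot}$. A crude union bound, $d_{TV}(F_\lambda, F_{\widehat{\lambda}_N}) \le \sum_{k=1}^K d_{TV}(p_{k\cdot},\widehat{p}_{k\cdot})$, would only deliver a bound of order $n\,\|\widehat{\beta}_N-\beta\|$ and would require the stronger hypothesis $n/\sqrt N\to 0$. Instead I would use the additivity of Kullback--Leibler divergence (equivalently, the multiplicativity of Hellinger affinity) across independent coordinates, $D_{KL}\big(F_\lambda \,\|\, F_{\widehat{\lambda}_N}\big) = \sum_{k=1}^K D_{KL}\big(p_{k\cdot}\,\|\,\widehat{p}_{k\cdot}\big)$, and then apply Pinsker's inequality $d_{TV}\le\sqrt{D_{KL}/2}$.

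The per-set bound is where compactness of the support of $X$ enters. Since $\beta^T X_{ki}$ is bounded, each odds $e^{\beta^T X_{ki}}$ lies in a fixed compact subinterval of $(0,\infty)$, so $p_{ki}$ is bounded away from $0$ and $1$ by constants independent of $k$; the same holds for $\widehat{p}_{ki}$ once $\widehat{\beta}_N$ lies in a fixed bounded neighborhood of $\beta$, an event of probability tending to one by consistency of the logistic maximum likelihood estimator. On that event the map $\beta\mapsto p_{k\cdot}$ is continuously differentiable with gradient bounded uniformly over $k$ and over the compact covariate range, so $\|\widehat{p}_{k\cdot}-p_{k\cdot}\|\le C\|\widehat{\beta}_N-\beta\|$; moreover, because $D_{KL}(p\,\|\,q)$ between two probability vectors in a fixed compact subset of the interior of the simplex vanishes to second order in $p-q$, one has $D_{KL}\big(p_{k\cdot}\,\|\,\widehat{p}_{k\cdot}\big)\le C'\|\widehat{\beta}_N-\beta\|^2$. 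Summing over the $K\le n$ matched sets gives $D_{KL}\big(F_\lambda\,\|\,F_{\widehat{\lambda}_N}\big)\le C' n\,\|\widehat{\beta}_N-\beta\|^2$.

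It remains to invoke the standard large-sample theory for the logistic regression MLE: with compactly supported covariates (and the usual non-degeneracy of the design ensuring positive definite Fisher information), $\widehat{\beta}_N$ is consistent and $\sqrt N\,(\widehat{\beta}_N-\beta)$ is asymptotically normal, whence $\|\widehat{\beta}_N-\beta\| = O_p(N^{-1/2})$. Combining with the previous paragraph and Pinsker, $d_{TV}\big(F_\lambda, F_{\widehat{\lambda}_N}\big) = O_p\big(\sqrt{n/N}\big)\to 0$ on the good event, while on its complement $d_{TV}\le 1$ and the complement has probability tending to $0$; hence $d_{TV}\to 0$ in probability, and then in expectation by boundedness, which by the first paragraph yields $\limsup P(p_{\widehat{\lambda}_N,n}\le\alpha)\le\alpha$. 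The main obstacle is the bookkeeping in the second and third paragraphs: one must route the aggregation through a quadratic divergence (KL or squared Hellinger) and exploit the interior-of-simplex lower bound on $p_{k\cdot}$ and $\widehat{p}_{k\cdot}$ so that the total discrepancy scales like $\sqrt n\,\|\widehat{\beta}_N-\beta\|$ rather than $n\,\|\widehat{\beta}_N-\beta\|$; the linear-in-$n$ bound from a naive coupling is too weak to be absorbed by $n/N\to 0$ alone.
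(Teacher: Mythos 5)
Your proposal is correct and follows essentially the same route as the paper's proof: Pinsker's inequality to pass to a sum of Kullback--Leibler divergences, a second-order (quadratic) bound on each term in $\|\widehat{\beta}_N-\beta\|$ made uniform by the compact support of $X$, the $O_p(N^{-1/2})$ rate of the logistic MLE to get $d_{TV}=O_p(\sqrt{n/N})\to 0$, and a final bounded-convergence step to transfer the Theorem~\ref{thm:berrett} bound to the $\limsup$ statement. The only cosmetic difference is that you aggregate KL over the per-set categorical distributions of the treated-unit identity while the paper sums per-unit Bernoulli divergences and exhibits the quadratic decay via an explicit Taylor expansion with vanishing gradient; both yield the same $\sqrt{n/N}$ scaling.
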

The proof, which uses Pinsker's inequality to bound the total variation distance by a sum of Kullback-Leibler divergences and a Taylor expansion to show that this sum converges to zero, is deferred to Section A.1 of the appendix.

A natural question is whether similar results can be obtained when the propensity model is not necessarily logistic. \citet{berrett2020conditional} sketch a proof for a result similar to Theorem \ref{thm:logistic} when the propensity score is estimated nonparametrically using a kernel method; this requires only mild smoothness conditions rather than a correctly-specified model, although the bound on the rate of convergence of the size of the Type I error violation towards zero is much weaker.  More generally, one may turn to Theorem \ref{thm:berrett} directly to explore misspecification: this result provides a bound on Type I error violation whenever the degree of misspecification can be quantified by the total variation distance between estimated and true conditional distributions of treatment.  Such metrics for probing robustness appear elsewhere in the literature; for example, \citet{guo2022partial} assess robustness of inferences to covariate noise by considering a regime in which the true conditional distribution of covariates given treatment has bounded total variation distance from the true conditional distribution.

Another limitation of Theorem \ref{thm:logistic}, which extends also to the nonparametric kernel approach just mentioned, is the asymptotic regime.  The issues extend beyond the usual question of whether a particular sample size is large enough to ensure that error is small; here the requirement that the pilot sample used to estimate the propensity score grow at a larger rate than the analysis sample raises similar concerns even for very large samples, since the key question is whether the pilot sample's size sufficiently exceeds that of the analysis sample. Fitting the propensity score in a large independent sample consisting of most of the observed data while reserving a relatively smaller portion for the analysis is uncommon in applied matching studies. However, we note that this approach is natural in settings where treatment and covariate values are readily available but outcomes are expensive or difficult to measure, as when researchers must collect outcomes by administering a test to study subjects \citep{reinisch1995utero} or by abstracting medical charts \citep{silber2001multivariate}.  In Section \ref{sec:sims}, we evaluate the performance of the method by simulation in multiple settings, including some that plausibly adhere to the assumed asymptotic regime and others that likely do not.  In general we find only minor difference in performance for tests based on estimated propensity scores fit out-of-sample in samples much larger than the analysis samples versus tests based on propensity scores estimated in-sample on the analysis data, suggesting that the test may be fairly robust to violations of the asymptotic regime given in Theorem \ref{thm:logistic} in many finite sample settings.

 \section{Covariate-adaptive randomization inference under unobserved confounding}
\label{sec:sens}

 \subsection{Review of sensitivity analysis framework under exact matching}
 \label{subsec:sensreview}

Results in Sections \ref{sec:setup} -\ref{sec:pscore_error} have all depended on the absence of unobserved confounding, but in practice it is not plausible that all confounders are observed.   To address this issue with the matched randomization inference framework, \citet[\S 4]{rosenbaum2002observational}  presents a method of {sensitivity analysis} to relax the no unobserved confounding 
assumption.  Specifically, the true probabilities of treatment are restricted as follows:
\begin{align}
\label{eqn:gammabound}
1/\Gamma \leq \frac{\pi(x,u)( 1-\pi(x,u')}{\pi(x,u')(1 - \pi(x,u))} \leq \Gamma \quad \text{for all $x,u,u'$}.
\end{align}
This is equivalent to the following model for treatment assignment, with arbitrary $\kappa(\cdot)$ and  and $\lambda (X_i) = \text{logit}^{-1}(\kappa({X}_i))$:
\begin{align}
\label{eqn:logitmodel}
\log\left(\frac{\pi(X_i, U_i)}{1 - \pi(X_i, U_i)} \right) = \kappa({X_i}) + \gamma U_i \quad \text{where }\gamma = \log(\Gamma)\text{ and } 0 \leq U_i \leq 1.
\end{align}
To complete the sensitivity analysis, some method is needed to compute worst-case p-values over all possible values of of $\mathbf{U}$ allowed by the model.  %randomization distributions included in this model, each associated with a distinct value of $\mathbf{U}$.  
The method depends on the structure of matched sets formed and the test statistic used; we focus on the approach of \citet{rosenbaum2018sensitivity} which allows for arbitrary strata structure and applies to any sum statistic, i.e. any statistic $T(\mathbf{Z},\mathbf{Y}) = \sum^K_{k=1}\sum^{n_k}_{i=1}Z_{ki}f_{ki}(\mathbf{Y})$  %$\mathbf{f} = (f_{11}(\mathbf{Y}), f_{12}(\mathbf{Y}), \ldots, f_{1n_1}(\mathbf{Y}), f_{21}(\mathbf{Y}), \ldots f_{Kn_K}((\mathbf{Y}))$ is a vector containing functions of the $\mathbf{Y}$.  
for chosen functions  $f_{ki}$ of $\mathbf{Y}$. As shown in \citet[\S 5]{rosenbaum2018sensitivity}, the difference-in-means statistic, the bias-corrected difference-in-means statistic of the previous section, and other M-statistics are all members of the family of sum statistics.

Without loss of generality, suppose the $n_k$ units in each matched set $k$ are arranged in increasing order of  their values $f_{ki}$ so $f_{k1} \leq f_{k2} \leq \ldots, f_{kn_k}$ for all $k$, and suppose that we are interested in a one-sided test where larger values of the test statistic will lead to rejection.  Let $U^+$ be the set of all stratified N-tuples $\mathbf{u}$ such that $u_{ki} \in \{0,1\}$ and $u_{k1} \leq u_{k2} \leq \ldots \leq u_{kn_k}$ for all $k$, and let $U^-$ be the set of all stratified N-tuples $\mathbf{u}$ such that $u_{ki} \in \{0,1\}$ and $u_{k1} \geq u_{k2} \geq \ldots \geq u_{kn_k}$ for all $k$.  Let $\alpha_{unif}(\mathbf{Y},\mathbf{u})$ represent the p-value for the uniform randomization test performed with test statistic $T(\mathbf{Z}, \mathbf{Y})$ and the uniform randomization distribution when model (\ref{eqn:logitmodel}) holds with unobserved confounder $\mathbf{U} = \mathbf{u}$.  When matching on the propensity score is exact, \citet{rosenbaum1990sensitivity} showed the following:
\[
\min_{\mathbf{u} \in U^-}\alpha_{unif}(\mathbf{Y},\mathbf{u}) \leq \alpha_{unif}(\mathbf{Y},\mathbf{U}) \leq \max_{\mathbf{u} \in U^+}\alpha_{unif}(\mathbf{Y},\mathbf{u}).
\]
Although $\mathbf{U}$ is still unknown, this statement allows us to bound its impact on the result of the hypothesis test by searching over a highly structured finite set of candidate $\mathbf{u}$-values.
We now show that this approach still works to identify the worst-case p-value when matches are not exact on the propensity score and permutation probabilities are covariate-adaptive.
% Note that although the difference-in-means estimator and the bias-corrected difference-in-means estimator of the previous section are not technically members of the family of sum statistics, they 
%If our design consists of matched pairs, in which all matched sets contain exactly one treated unit and one control, and statistic $T$ is an M-estimate or an $R$-estimate, its distribution 
%is stochastically bounded by two other random variables of known distribution $T^+$ and $T^-$ \citep[\S 4]{rosenbaum2007sensitivity, rosenbaum2002observational}. These variables are obtained by choosing known adversarial values $\mathbf{u}^+$ and $\mathbf{u}^-$ for the unobserved confounding vector $\mathbf{U}$, and the true p-value can be bounded by the p-values associated with their null distributions.  For more discussion of the choice of test statistic and its impact on sensitivity see \citet{rosenbaum2010design} and the PI's previous work in  \citet{howard2019uniform}.%For more complex designs and more general test statistics, less direct methods may be used to obtain bounds on inference \citep{gastwirth2000asymptotic}.

\subsection{Sensitivity analysis model under covariate-adaptive randomization inference.}

As in the previous section, let  $T(\mathbf{Z}_\mathcal{M}, \mathbf{Y}_\mathcal{M})$ be a sum statistic and focus on the case of a one-sided where large values lead to rejection.  We now define $\alpha_{adapt}(\mathbf{Y},\mathbf{u})$ as the p-value obtained by conducting a covariate-adaptive randomization test (using the true propensity score) when model (\ref{eqn:logitmodel}) holds with unobserved confounder $\mathbf{U} = \mathbf{u}$.

\begin{theorem}
\label{prop:omega_bound}
For any $\mathbf{u} \in [0,1]^n$,
$\min_{\mathbf{u}' \in U^-}\alpha_{adapt}(\mathbf{Y},\mathbf{u}') \leq \alpha_{adapt}(\mathbf{Y},\mathbf{u}) \leq \max_{\mathbf{u}'' \in U^+}\alpha_{adapt}(\mathbf{Y},\mathbf{u}'').
$
\end{theorem}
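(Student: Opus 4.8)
The plan is to mimic Rosenbaum's argument for exact matching \citep{rosenbaum1990sensitivity, rosenbaum2018sensitivity}, the one new ingredient being that the within-set ``covariate odds'' $\exp(\kappa(X_{ki}))$ need not be constant across $i$ inside a matched set. I would fix $\mathbf{Y}$, the matched structure, and the realized value $t = T(\mathbf{Z}_\mathcal{M},\mathbf{Y}_\mathcal{M})$ of the sum statistic, and for a confounding pattern $\mathbf{u} \in [0,1]^n$ observe that model~(\ref{eqn:logitmodel}) with the true $\kappa(\cdot)$ makes the covariate-adaptive reference distribution a product over matched sets: in set $k$ the treated position is $i$ with probability
\[
q_{ki}(\mathbf{u}) = \frac{\exp\!\bigl(\kappa(X_{ki}) + \gamma u_{ki}\bigr)}{\sum_{j=1}^{n_k}\exp\!\bigl(\kappa(X_{kj}) + \gamma u_{kj}\bigr)},
\]
so that $\alpha_{adapt}(\mathbf{Y},\mathbf{u}) = P\{T(\mathbf{Z}',\mathbf{Y}) \ge t\}$ under this distribution. (When $\mathbf{u}$ is constant within sets this is the $p_{ki}$ distribution of Section~\ref{sec:adaptive}; if moreover $\kappa$ is constant within sets it is Rosenbaum's model, so the claim generalizes his.) It suffices to prove the upper bound, the lower bound following by applying it to $-T$.

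Since $T$ is a sum statistic, $T(\mathbf{Z}',\mathbf{Y}) = \sum_{k=1}^K T_k$ with the $T_k = \sum_i Z'_{ki}f_{ki}$ mutually independent and $T_k$ equal to $f_{kj}$ with probability $q_{kj}(\mathbf{u})$. Fix $\mathbf{u}_{k'}$ for all $k' \ne k$ and set $\psi_{kj} = P\{\sum_{k' \ne k}T_{k'} \ge t - f_{kj}\}$; because units in set $k$ are ordered so $f_{k1} \le \cdots \le f_{kn_k}$, the sequence $\psi_{k1} \le \cdots \le \psi_{kn_k}$ is nondecreasing, and independence gives
\[
\alpha_{adapt}(\mathbf{Y},\mathbf{u}) = \sum_{j=1}^{n_k} q_{kj}(\mathbf{u})\,\psi_{kj} = \frac{\sum_j w_j \psi_{kj}}{\sum_j w_j},
\]
where (assuming, as usual, $n_k \ge 2$) $w_j := \exp(\kappa(X_{kj}) + \gamma u_{kj})$ ranges over the box $\prod_j[\exp(\kappa(X_{kj})),\, \exp(\kappa(X_{kj}) + \gamma)]$ as $\mathbf{u}_k$ runs over $[0,1]^{n_k}$. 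Thus, as a function of a single set's $\mathbf{u}_k$, the p-value is a weighted average of the ordered values $\psi_{k1} \le \cdots \le \psi_{kn_k}$ with unnormalized weights $w_j$ confined to a box.

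The crux is to show this weighted average is maximized over the box at a sorted $0/1$ vector $u_{k1} \le \cdots \le u_{kn_k}$. Along any axis-parallel segment it is a linear-fractional map $w_j \mapsto (A + \psi_{kj}w_j)/(B + w_j)$ with $A \ge 0$, $B > 0$, hence monotone, and a continuous function that is monotone on every axis-parallel line of a box attains its maximum at a vertex, i.e.\ at some $\mathbf{u}_k \in \{0,1\}^{n_k}$. At a maximizing vertex $v^{\ast}$ with value $\bar\alpha$, differentiating in $w_j$ shows $\partial \alpha_{adapt}/\partial w_j$ has the constant sign of $\psi_{kj} - \bar\alpha$ (using $w_j \ge \exp(\kappa(X_{kj})) > 0$, so $\bar\alpha$ is a strict convex combination of $\psi_{kj}$ and $A/B$); hence $u_{kj} = 1$ whenever $\psi_{kj} > \bar\alpha$, $u_{kj} = 0$ whenever $\psi_{kj} < \bar\alpha$, and $u_{kj}$ is free when $\psi_{kj} = \bar\alpha$. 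Because $j \mapsto \psi_{kj}$ is nondecreasing, the indices with $\psi_{kj} = \bar\alpha$ form a contiguous block lying above all ``$0$'' indices and below all ``$1$'' indices, so they can be set to $0$'s followed by $1$'s without changing $\alpha_{adapt}$, yielding a maximizer that is a sorted $0/1$ vector over set $k$.

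Finally I would iterate over matched sets: from an arbitrary $\mathbf{u} \in [0,1]^n$, replace $\mathbf{u}_1$ by the sorted $0/1$ maximizer over set $1$ (all other sets held fixed), then $\mathbf{u}_2$, and so on; the single-set lemma applies verbatim at each step whatever the other sets currently are, so $\alpha_{adapt}$ never decreases and the terminal vector lies in $U^+$, giving $\alpha_{adapt}(\mathbf{Y},\mathbf{u}) \le \max_{\mathbf{u}'' \in U^+}\alpha_{adapt}(\mathbf{Y},\mathbf{u}'')$; the bound with $\min$ and $U^-$ is symmetric. I expect the main obstacle to be precisely the vertex-and-threshold step: with nonconstant within-set covariate odds one cannot use the clean mass-transfer argument of the exact-matching proof (swapping a ``$1$'' from a low-$f$ to a high-$f$ unit changes the normalizing constant), so the exchange argument must be replaced by the first-order characterization above, which still succeeds because the ordering of the $\psi_{kj}$ depends only on the ordering of the $f_{ki}$ and not at all on the covariate odds.
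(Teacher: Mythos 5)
Your proof is correct, and although it shares its skeleton with the paper's argument it replaces one of the two key lemmas with a genuinely different device. Both proofs rest on the same first step: holding all other coordinates of $\mathbf{u}$ fixed, the quantity being bounded is a linear-fractional function of $\exp(\gamma u_{ki})$, hence monotone in each coordinate, so the extremum over $[0,1]^n$ is attained at a binary vector --- your axis-parallel-segment observation is exactly the content of Lemma \ref{lem:omega_monotone} in the appendix. The two arguments diverge on the second step, namely sorting the binary vector within each stratum. The paper follows Rosenbaum's Propositions 17--19 and establishes that $\omega(\mathbf{f},\mathbf{u})$ is arrangement-increasing (Lemma \ref{lem:omega_AI}) by applying the composition theorem of Hollander, Proschan, and Sethuraman to the assignment density $g(\mathbf{z},\mathbf{u})$; you instead use a first-order characterization of a maximizing vertex, observing that $\partial\alpha_{adapt}/\partial w_{kj}$ carries the sign of $\psi_{kj}-\bar\alpha$ and that the partial tail probabilities $\psi_{kj}$ are monotone in $j$ because the $f_{ki}$ are sorted. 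Your route is more elementary and is noticeably more careful at precisely the point where covariate adaptivity bites: when $\kappa(X_{ki})$ is not constant within a set, the normalizing constant $\prod_k \sum_i \exp\{\kappa(X_{ki})+\gamma u_{ki}\}$ is \emph{not} invariant under a within-set swap of two $u$-entries, so the cancellation of denominators in the paper's verification that $g$ is arrangement-increasing (and the permutation-invariance the definition presupposes) requires more justification than the displayed one-line computation provides; your threshold argument sidesteps this entirely since the ordering of the $\psi_{kj}$ depends only on the ordering of the $f_{ki}$ and not on the covariate odds. What the paper's machinery buys in exchange is immediate generality in the functional $h$ --- any arrangement-increasing functional, such as a Hodges--Lehmann estimate, is covered at once --- whereas your write-up is phrased for tail probabilities of a sum statistic (though it extends, since $E[h\mid \text{treated unit in set }k\text{ is }j]$ is also nondecreasing in $j$ for arrangement-increasing $h$). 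For the p-value bound asserted in Theorem \ref{prop:omega_bound}, your proof is complete, including the coordinate-ascent iteration over matched sets and the symmetric handling of the lower bound.
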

The proof, which is adapted from results for the exact-matching case in \citet{rosenbaum2002observational}, is deferred to Section A.2 of the appendix; the argument is not specific to the p-value and may also be applied to obtain a broader class of functions of the covariate-adaptive randomization distribution, including Hodges-Lehmann estimates. While the result is proved under the assumption that true propensity scores are used, the bounds here may be combined with the bounds of Theorem \ref{thm:berrett} to obtain guarantees with an estimated propensity score.

The bound in Theorem \ref{prop:omega_bound} gives us a finite set of possible distributions over which to search to identify the worst-case p-value for a covariate-adaptive randomization test.  However, under certain configurations of strata size and number, this set may grow large and complicated so that it is difficult to compute the exact maximum efficiently.  %Intuitively, the set $U^+$ ($U^-$) consists of all the ways to organize zeroes and ones in each stratum so that they are non-decreasing (non-increasing) when read in increasing order of $f_{ki}$; when there are several large strat  
\citet{rosenbaum2018sensitivity} provided a close, conservative approximation to the worst-case p-value that aggregates over easily-computed quantities each using information from only a single stratum.  Specifically, the calculations for the upper bound depend on the mean $\mu_{k\ell}$ and variance $\nu_{k\ell}$ of the stratum-specific contribution to the test statistic given by $\sum^{n_k}_{i=1}Z_{ki}f_{ki}(\mathbf{Y}_\mathcal{M})$ when $u_{k1} = \ldots =  u_{k\ell} = 0 $ and $u_{k(\ell + 1)} =  \ldots, u_{kn_k} = 1$ for each stratum $k$ and each value $\ell = 0,1, \ldots, n_k$, given the value for the sensitivity parameter $\Gamma$ (for the lower bound similar quantities are used but with a different choice of binary $u_{ki}$ values).

 Under covariate-adaptive randomization inference these quantities take on the following values:
\begin{align*}
\mu_{k\ell} = \frac{\sum^{\ell}_{i =1}p_{ki}f_{ki} + \Gamma\sum^{n_k}_{\ell + 1}p_{ki}f_{ki}}{\sum^{\ell}_{i =1}p_{ki} + \Gamma\sum^{n_k}_{\ell + 1}p_{ki}} \quad \quad \quad 
\nu_{k\ell} = \frac{\sum^{\ell}_{i =1}p_{ki}f_{ki}^2 + \Gamma\sum^{n_k}_{\ell + 1}p_{ki}f_{ki}^2}{\sum^{\ell}_{i =1}p_{ki} + \Gamma\sum^{n_k}_{\ell + 1}p_{ki}} - \mu_{k\ell}^2
\end{align*}
The method for combining these quantities to provide an overall bound on the p-value is identical to that described in \citet{rosenbaum2018sensitivity}.

\section{Finite-sample performance via simulation}
\label{sec:sims}

%\subsection{Preliminaries}
%
%
%Several aspects of the matching and inference problem are of interest and will be varied in our simulation framework: 
%\begin{enumerate}
%\item Uniform vs. covariate-adaptive randomization inference. We will compute uniform randomization inferences in each simulation setting as a baseline against which to assess the value of the covariate-adaptive approach.  In addition, we will 
%\end{enumerate}
%LIST THEM

\subsection{Simulation setup}
The results of Section \ref{sec:pscore_error} provide some assurance that in extremely large samples, covariate-adaptive randomization inference will closely approximate the true conditional distribution of treatment and outperform uniform randomization inference.  However, many matched studies work with relatively small sample sizes, and researchers do not always have strong guarantees that fitted models are well-specified.  In what follows we assess the empirical performance of covariate-adaptive randomization inference in small to moderate samples via a simulation, considering cases where models are correctly and incorrectly specified.

We create a matrix of covariates $X$ by drawing $p$ vectors of independent 
standard normal random variables, each vector of length $n$.  $p$ is 
2, 5, or 10, and $n$ is either 100 or 1000. The true propensity score is then given by one of the following 
two %four 
functions, one that specifies the logit of treatment as a linear function of the 
columns of $X$ and
another specifying it as a nonlinear function of those columns:
\begin{align}
\text{logit}\left[P(Z=1 \mid X) \right] &= \log\left(\frac{0.3}{0.7}\right) + 
    \tau \cdot X_1\\
\text{logit}\left[P(Z=1 \mid X) \right] &= \log\left(\frac{0.3}{0.7}\right) + 
    \frac{\tau}{\sqrt{265}}\left(X_1 + 4X_1^3 \right)%\\
%\text{logit}\left[P(Z=1 \mid X) \right] &= \log\left(\frac{0.3}{0.7}\right) + 
%    \frac{\tau}{\sqrt{42}}\left(X_1 + 3.5X_3 + 4.5X_5 + 1.5X_7 + 2.5X_9 \right)\\
%\text{logit}\left[P(Z=1 \mid X) \right] &= \log\left(\frac{0.3}{0.7}\right) -
%    \frac{5.5}{\sqrt{378}} +\\ 
%    &\hspace{2em} \frac{\tau}{\sqrt{378}}\left(X_1 + 3.5X_3 + 4.5X_5 + 1.5X_7 + 
%    2.5X_9 + 4X_1^3 + 2.5\cdot \text{sign}(X_3)\sqrt{|X_3|} + 5.5X_5^2 \right)
\end{align}
The functional forms of these models are adapted from those used by Resa and
Zubizarreta (\emph{Stat. in Med.} 2016).  The parameter $\tau$ controls the strength of the propensity score signal; we
examine two signals, ``weak signal" with $\tau = 0.2$ and ``strong signal" with
$\tau = 0.6$.  The intercept is chosen to ensure a a treated:control ratio in the
general neighborhood of 3:7 (since all  $X_i$ variables have mean zero over 
repeated samples), 
providing a relatively large control pool for pair
matching. %; the additional correction factor to the intercept in the final model
%adjusts for the $X_5^2$ term  that otherwise shifts the average value of the
%linear form higher.  
The scaling factor $1/\sqrt{265}$ %, $1\sqrt{42}$, and $1/\sqrt{378}$ are 
chosen to render the overall signal-to-noise ratio 
comparable across %the four 
models for a given value of $\tau$.  
%DO NOT ERASE:
%Specifically, 
%the variance of each linear form with  respect to the $X_i$ variables is
%calculated numerically to the nearest whole number, and the linear form  is then 
%divided by the square root of that number.
Once the true propensity  score has been established, indicators for treatment
are drawn according to the model. %Note that models (1) and (2) are used both
%for $p=2$ (a case labeled dense2 in our simulation results) and for $p=10$ (a
%case labeled sparse10) while models (3) and (4) are used only for $p=10$ 
%(dense10).  (1) and (3) are referred to as linear propensity scores, since the
%logit is a linear function of the covariates, while (2) and (4) are nonlinear
%propensity scores.
For a more detailed look at the distribution of the propensity scores in the 
treated and control groups under this setup, see Figure 4 %\ref{fig:sim_densities} 
in Section A.4 of the appendix.

Matching is conducted on the joint $(X,Z)$ datasets created by this process.
The matching is done using a robust Mahalanobis distance on the columns of $X$,
either with or without a propensity score caliper.  When the caliper is used,
a propensity score must first be estimated.  This is done by fitting a logistic
model with linear additive terms using maximum likelihood. %, either in the  same
%sample, in a new independent sample $(X',Z')$ with $n=100$ observations, or in 
%a new independent sample with $n=1000$  observations. 
Then matches are restricted
to occur only between individuals separated by no more than 0.2 sample  standard
deviations  of the fitted propensity scores (computed for the original dataset);
if treated units must be excluded from the match to meet this condition  the 
minimal possible number of such exclusions is made.

Finally, outcomes are drawn using a linear model with the same right-hand side
as the model used to generate the true propensity score, albeit with no intercept, with $\tau = 1$, and with additive independent
mean-zero normal errors with variance 4. 
We conduct randomization inference
by reshuffling treatment indicators within pairs uniformly at random,
based on propensity scores estimated as described above, or based on true 
propensity scores.
  The
observed test statistic is compared to the null distribution obtained by 
Monte Carlo sampling of 5000 draws to see  whether one-sided tests for a difference greater than zero
 reject
at the 0.05
level.  Results are evaluated for both raw outcomes  and outcomes adjusted by ordinary least-squares linear regression using the approach of \citet{rosenbaum2002covariance}.% and raw
%outcomes.

In addition, for each combination of simulation parameters we also test the results of inference when the observed
test statistic is not generated by the original $\mathbf{Z}$-vector but by a within-match permutation
of treatment indicators using distribution (\ref{eqn:true_distro}) with the true propensity scores.  This is designed to
approximate a setting in which matched pairs are sampled independently, rather than the typical setting in which individual subjects are sampled independently prior to matching,
and eliminates the potential for $Z$-dependence as discussed in Section \ref{subsec:zdependence}.

%For inference, two different models of sampling are used.  The  more familiar
%one thinks of the original $(z_i,x_i)$ as independent, identically distributed draws from an infinite population
%and evaluates the performance of matching  across many  such independent draws,
%each  with its own data-dependent match.  The other perspective is to view the
%matched pairs as the basic object of interest, either fixed a priori or sampled
%iid from some population of interest, and to consider the variation in treatment
%assignment within such pairs.  The data-sampling perspective is addressed by
%simply averaging results over individual simulation runs.  To address the 
%pair-sampling perspective however, it is important to break the systematic
%relationship between the pairs and the $Z$-vector used to construct them, since
%in general this relationship may differ from the relationship between the pairs
%and a new $Z$-draw taken conditional on the pairs alone.  In this setting we 
%reshuffle $Z$-values within pairs conditional on true propensity scores and the
%event that exactly one unit in each pair is treated and average ove

For datasets of size $n=100, p = 2$ and $n=100, p = 5$ we ran each unique combination of the remaining simulation
parameters (propensity model, signal strength, 
caliper indicator, 
 inference method, regression adjustment
indicator, and indicator for reshuffled $Z$-vector in observed test statistic)  8000 times.  Type I error rates are calculated as the sample average
of rejection indicators over the 8000 draws for each parameter combination.  We also ran each combination of parameters 8000 times for datasets of size $n=1000, p=10$.  We opted not to examine the $n=100, p=10$ case because we wanted to focus on cases in which completed matches met standard balance criteria and sample runs suggested that most matches were not successful in balancing all ten covariates, and we studied only $p=10$ for the much more computationally expensive $n=1000$ case in order to produce simulation results in a reasonable timeframe.

\subsection{Results}
\label{subsec:simresults}

Figure \ref{fig:sim_heatmaps} shows the primary results on Type I error rate from the simulations.  The most apparent pattern is the high type I error rates for uniform inference on uncalipered matches with the difference-in-means statistic, much higher than the rates for any method that accounts in any additional way for in-pair covariate discrepancies.  This is true across every simulation setting shown but especially so at $n=1000$.  

More nuanced are the distinctions among settings using covariate-adaptive inference, calipers, regression-adjusted test statistics, and combinations thereof.  In the first three columns of each table, corresponding to settings in which the sampling model is on the rows of the original dataframe and the match itself is viewed as a random variable, settings with calipers or regression adjusted test statistics tend to perform best, either with or without covariate-adaptive inference; in contrast settings relying entirely on covariate-adaptive inference tend to do slightly worse or even substantially worse when $n=1000$.  However, when sampling is conducted over random treatment assignments within matched pairs (a setting in which $Z$-dependence is eliminated), covariate-adaptive inference alone is generally just as effective as regression adjustment and does  better than calipers.  

Another lens through which to view the results is across the four horizontal groupings in each table, each of which denotes a different combination of correctly/incorrectly specified treatment and outcome models.  Overall covariate-adaptive inference with estimated propensity scores tends to fare more poorly when the treatment model is misspecified, even when $Z$-dependence is not present.  While regression adjustment is somewhat robust to the degree of nonlinearity introduced in this simulation's outcome model, when both outcome and treatment models are nonlinear type I error control often fails for $p > 2$.  At $n=1000$ the case with both models misspecified leads to gross violation of type I control under any form of adjustment except oracle covariate-adaptive inference (which uses true, nonlinear treatment probabilities that are unavailable in practice) although using some form of adjustment is still a vast improvement on raw uniform inference.

All the results shown above focus on the strong propensity signal case ($\tau = 0.6$), in which matching tends to be substantially more difficult than in the weak signal case.  Results for the  Density plots and type I error rates %analogous to those in Figure 4 (in Section A.4 of the appendix) %\ref{fig:sim_densities} 
%and \ref{fig:sim_heatmaps} 
for the weak-signal case are given in Figures 5 and 6 in Section A.4 of the appendix.  While the overall ordering among approaches remains similar, with uniform inference absent any other adjustments underperforming methods with some form of covariate adjustment and covariate-adaptive inference alone lagging other methods slightly when $Z$-dependence is present, the size of type I error violations is greatly reduced %across the board 
such that the simple uniform approach often controls Type I error when $p=2$.  %The density plots reveal that the weak-signal settings correspond to a setting in which the distribution of true propensity scores is very similar between groups; when propensity score distributions show such similarity in real datasets this suggests that uniform randomization inference may be warranted.

We also conducted two robustness checks to confirm that our results are not sensitive to secondary aspects of our simulation setting. % and both our theory results.  
Firstly, %we explored different forms for the linear and nonlinear models inspired by the simulations in \citet{hill2011bayesian}, in which the true nonlinear basis elements are exponential transformations of the original covariates rather than polynomial transformations, and in which the number of basis elements in the true model increases up to $p=5$ rather than remaining fixed at $p=2$.  Secondly,
 to address concerns about whether all the automated matches we performed were successful in balancing covariates, we reproduced the Type I error results excluding all simulation runs in which a matched sample failed to achieve absolute standardized differences less than 0.2 on all measured covariates. Secondly, 
in recognition  that our theory results rely on propensity scores estimated in pilot samples larger in size than the actual analysis sample, we repeated our simulations with propensity scores fit on independent pilot samples of size 10,000, simulated from the same data-generating process as the analysis sample.  In general, these robustness checks all substantiated the patterns observed in the primary simulations.  For full results, see Figures 7-8 in Section A.4 of the appendix.
 
Finally, in order to further elucidate differences among adjustment strategies that appear comparable in Type I error control, we study precision of inference, as measured by confidence interval width (Figure \ref{fig:sim_heatmaps_ci}).  Specifically, for all simulation settings that achieved approximate Type I control, in the sense that the Bonferroni-corrected test against the null that the true error rate is $0.05$ did not reject with a positive $z$-statistic, we used the central limit theorem approximation of Section \ref{subsec:largesample} to create large-sample confidence intervals based on the normal distribution for an additive shift effect.  For cases with regression-adjusted test statistics, the procedures of Section \ref{subsec:largesample} were used with residuals from the regression model in place of raw outcomes $Y_i$.  The two main takeaways are the value of regression adjustment for improving precision, a pattern that appears in other parts of the causal inference literature \citep{rosenbaum2002covariance, fogarty2018regression, antonelli2018doubly}, and a cost in precision associated with imposing calipers.  This cost comes from reduced sample size, since treated units may be excluded by the caliper leading to fewer matched sets formed.

In summary, the simulations reveal several important messages.  First, they emphasize the importance of taking measures of some kind beyond optimal matching itself to control covariate discrepancies within matched pairs and guarantee Type I error control for post-match inference, at least in settings with any marked degree of distinction between propensity score distributions in the treated and control samples.  Moreover, the importance of such measures appears to increase with the size and complexity of the dataset.  With respect to Type I error control, the type of correction or adjustment employed seems to matter less than the fact of employing it.   Secondly,  while covariate-adaptive inference in isolation is slightly less successful than calipers or regression adjustment in many settings, this appears to be largely a product of $Z$-dependence, since changing the sampling approach to view pairs as the units of analysis rather than pre-match study units tends to erase this gap in most cases.  This finding suggests the importance of the specific stochastic model used to justify matched randomization inference, and the value of further work on understanding $Z$-dependence and finding effective ways to remove it.  Finally, the simulations show that among adjustment strategies that do not rely on fitting some form of outcome model, covariate-adaptive inference holds a small edge over caliper approaches because it does not require a reduction in sample size to fix problems with in-pair covariate discrepancies.

% they suggest that although the issues raised in Sections ???-??? have a relatively minor impact in cases where overlap between propensity scores is perfect and where controls substantially outnumber treated units, they can grossly distort inference in settings where the treated and control units remain more substantially different even in large samples.  Furthermore the role of $\mathbf{Z}$-dependence appears to be limited relative to the role of propensity score discrepancies, although it is clearly present even in cases where control of propensity score diiscrepancies is fairly tight The simulations also highlight the substantial added value of regression-adjusted test statistics over simple differences in means, a message also emphasized in CITE, although we note that all simulations discussed so far assume a well-specified outcome regression REVISE ONCE I ADD NONLINEAR OUTCOME SIMS.  Calipers and covariate-adaptive inference perform almost as well as regression adjustment in enforcing Type I error control, but covariate-adjusted inference does so while preserving a much greater degree of precision.  Finally, while adaptive inference based on an estimated propensity score is not quite as effective as inference based on the oracle score, the difference is relatively minor and the specific sample on which the propensity score is fit has an even more negligible impact.

  \section{Performance in case studies}
  \label{sec:data}

\subsection{Welders and genetic damage}

We now  apply the tools developed for covariate-adaptive randomization
inference to a real dataset due originally to \citet{costa1993dna}.  This dataset compares the rate
 of DNA-protein  cross-links, a risk factor  for gene expression problems,  among
 welders  and controls in other occupations.  In addition to DNA-protein linkage,
 age, race, and smoking behavior  are measured for  all subjects (each of whom
 is male).  \citet{rosenbaum2010designof} analyzed this data
 by matching each of the 21  welders to one of the 26 controls.  Here we  replicate
 this matching approach and   consider the impact of covariate-adaptive
 randomization inference in place of the uniform randomization inference typically
 used for pair-matched designs.
 
Following Rosenbaum, we estimate the  propensity  score using logistic regression
against the three covariates discussed above and match each welder to a control 
using a  robust Mahalanobis distance with a propensity score caliper equal
to half the standard deviation of the fitted propensity score values across the
dataset. Since it is not possible to match all 21 welders within  the caliper,
a  soft caliper is used in which violations of the caliper are  penalized  linearly.

As noted by Rosenbaum, pair matching  cannot remove all confounding in this 
data, particularly because only six of the 27 controls are removed by the 
matching process. Table \ref{tab:welder_pairs} shows the pair differences on each
of the three matching variables and the estimated propensity score, with
average discrepancies at  the bottom.   
Note that although the average discrepancies are small on all variables (the average 
difference in age is  only 1.5  years), in the  large majority of pairs the 
treated unit has a slightly 
higher propensity score, several with differences larger than 0.1.  In contrast
to standard uniform randomization inference, covariate-adaptive randomization
inference will  pay  attention to  these differences.

Using the difference in mean DNA-protein linkage between welders and matched
controls as a test statistic, we contrast uniform randomization inference 
structured by the matched
pairs with covariate-adaptive inference based on the estimated propensity score.
The density plot in Figure \ref{fig:welder_density} illustrates the difference between the two null
distributions.  As the consistent positive sign on the propensity score discrepancies would suggest, the
covariate-adaptive randomization distribution is shifted to the right of the uniform randomization distribution,
showing how accounting for residual propensity score differences leads us to expect larger values even under the null.

The data was collected based on a hypothesis that welders experience elevated
 levels  of genetic damage compared to controls, so we conduct a one-sided
 test of the  Fisher sharp null hypothesis of no difference in DNA-protein
 linkage due to welder status.  Using the null distributions shown above,
 this corresponds to calculating the proportion of null draws that exceed
 the observed value of  0.64.  For the uniform distribution this is 
0.015, and for the covariate-adaptive 
distribution it is 0.029.  The 
covariate-adaptive distribution adjusts for the residual propensity score
differences in the  pairs, which are biased towards treatment for the welders
who actually experienced treatment; as such, it recognizes that part of the
apparent treatment effect in the  uniform test is likely a result of bias and
concludes that the weight of evidence  in   favor of a treatment is weaker.
 While  both
tests are significant at the  0.05 level in this case, if researchers had been 
interested in effects of both signs and had  conducted  a two-sided  test the
covariate-adaptive test does not reject in this case.  As such, failure to account
properly for propensity score discrepancies in the two-sided case leads to 
an anticonservative result in which the null hypothesis is rejected when most properly 
it should not be.  When a sensitivity analysis is run, the uniform analysis is
similarly more optimistic about evidence for a treatment effect, reporting
similar qualitative results for $\Gamma$ up to 1.20, while the covariate-adaptive
analysis allows a maximum value of only 1.11.

\subsection{Right-heart catheterization and death after surgery}

We next conduct covariate-adaptive randomization inference in the right-heart
catheterization data of \citet{connors1996effectiveness}.  In this study
the effectiveness of the right-heart catheterization (RHC) procedure for improving
outcomes of critically ill patients was assessed by matching patients receiving
the procedure to those not receiving it and comparing mortality rates.  We 
follow the original authors by fitting a propensity score on 31 variables 
measured at intake and forming matches only between patients with propensity 
scores that differ by no more than 0.03.  Unlike the authors, who use a greedy
matching procedure, we use an optimal matching procedure that minimizes a
robust Mahalanobis distance formed from the 31 variables in the propensity score
model within propensity score calipers, and also matches
 exactly on primary disease category.  Not all of the 2,184 RHC patients
can be matched to distinct controls within the caliper, and in this case we
exclude the minimal number of RHC patients necessary for the caliper to be 
respected.  This leaves us with 1,538 matched pairs, which is still a 
substantial improvement on the match conducted by the authors, which had only
1,008 pairs.  Table  \ref{tab:balance_rhc} summarizes the effectiveness of the
match in removing bias on the observed pre-treatment variables.  The numbers
shown are standardized treatment-control differences in means for each variable,
both before (first column) and after (second column) matching.  Only the variables with
the 25 largest pre-matching absolute standardized differences are shown (but
post-matching standardized differences remain below 0.04 for all variables not
shown).  Clearly, although large differences between groups are present in the 
raw data for numerous  variables, notably APACHE risk score (aps1), mean blood
pressure (meanbp1), and P/F oxygen ratio (pafi1), matching transforms these
into small differences.  Of special note is the row showing balance on the
estimated propensity score, which shows a reduction from a standardized 
difference exceeding 1 to a value of only 0.04. % Under standard rules of thumb
%CITE, this level of balance is typically viewed as acceptable.  

To assess the role of the caliper in influencing the study's results, we 
construct an alternative matched design using a generalized version
of  optimal subset matching
\citep{rosenbaum2012optimal}, as implemented in the R package 
\texttt{rcbsubset}. Instead of using a propensity score caliper, this 
match enforces exact balance on ventiles of the propensity score,
ensuring close balance on the propensity score without 
imposing hard restrictions on the propensity score discrepancy
within a matched pair.  Treated units are excluded from the match
if their inclusion induces the average Mahalanobis distance across
matched pairs to cross a specific threshold given by a tuning parameter;
we chose a value of the tuning parameter that induces a similar overall
sample size as in the calipered match (1,507 matched pairs).   Balance on important pre-treatment
variables is similar to that in the caliper match, as shown in the third column
of Table \ref{tab:balance_rhc}.  However, there are important differences between the 
two matches at the level of the pairs, as shown in Table \ref{tab:pair_rhc}; the caliper
match achieves much greater similarity of propensity scores within pairs, at 
the cost of achieving slightly reduced similarity on a range of other variables.
In practice the caliper match is likely to be more attractive, since it achieves
 a major improvement in propensity score control at the cost of relatively minor
 changes in other variables, but both matches are Pareto optimal in the sense of
 \citet{pimentel2020optimaltradeoffs}.

The outcome of interest in this study is patient death within thirty days.
For each match we conduct four outcome analyses: uniform and covariate-adaptive
 randomization inference for the difference in means, and uniform and
 covariate-adaptive randomization inference for a regression-adjusted 
 difference-in-means statistic \citep{rosenbaum2002covariance}.
Although the outcome is binary, the mortality rate is sufficiently
 high that an ordinary least-squares fit is reasonable.  Table \ref{tab:outcome_rhc} 
 summarizes the results of the analysis, providing point estimates,
 p-values,
 and estimated confidence intervals from all four approaches, with
  the point estimates for the covariate-adaptive procedures 
  shifted by the mean of the covariate-adaptive null distribution given in formula (\ref{eqn:nullmean}).  In addition,
results from a sensitivity analysis are reported, giving the largest values of $\Gamma$ consistent with a rejection of the null
hypothesis (the ``threshold" $\Gamma$).

  Results are generally consistent with an increased risk of 
  mortality associated with right-heart catheterization, with all confidence intervals contained in the range 3\%-11\%.  As in the welders data, the covariate-adaptive procedure reports a treatment effect slightly smaller than the uniform procedure; however, in this case the magnitude of the difference is small compared to the overall size of the effect so the methods all agree qualitatively.  Notice also that the regression-adjusted test statistics have narrower confidence intervals than the raw risk differences, in line with principles described in \citet{rosenbaum2002covariance}. The regression-adjusted analyses
  lead to smaller estimates of the treatment effect which explains their greater sensitivity
  to unmeasured bias.  The caliper matching has more stable threshold $\Gamma$ values across analyses, 
  consistent with the tighter control of the propensity score it achieves.
For the subset match, the covariate adaptive procedure tends to increase the threshold $\Gamma$ 
despite having smaller treatment effects; this is because it also lowers the variance of the estimator
and leads to tighter confidence intervals.

\section{Discussion}
\label{sec:discussion}
  Uniform randomization inference for matched studies relies, often at least in part implicitly, on a model in which unobserved confounders are absent, in which propensity scores are matched exactly, and (depending on the sampling model) in which the matched sets selected are conditionally independent of the original treatment vector.  Substantial failures of these assumptions lead to substantial problems with Type I error control and require some form of correction.  Covariate-adaptive randomization provides such correction by altering permutation probabilities in the randomization test based on estimated propensity scores.  Relative to the na\"{i}ve analysis for the difference-in-means statistic, covariate-adaptive randomization tends to restore approximate control of Type I error in ma	ny settings and constitutes an attractive option alongside approaches based on regression-adjusted test statistics and matching calipers.
  
 The applied examples show that covariate-adaptive randomization inference need not change the qualitative results of an observational study, especially when the study designer has given careful attention to propensity score discrepancies.  In these settings the covariate-adaptive procedure can still be a productive robustness check to help build confidence that lingering propensity score discrepancies are not corrupting the study's key findings.
 
 Covariate-adaptive randomization inference builds on a quickly-growing literature that explores using estimated propensity scores to structure permutation tests.  \citet{rosenbaum1984conditional} recommended permuting treatment assignment conditional on the sufficient statistic for a propensity score fit, a closely related idea which works very well for settings with only one or two discrete covariates with a limited number of categories.  In a graduate dissertation \citet{baiocchi2011methodologies} briefly proposed permuting treatment assignments within matched pairs in a manner similar to that described above, although with a slightly different distribution based on a ratio of propensities on the  probability scale rather than the odds scale.
  %(treatment probability $\lambda(X_{k1})/[\lambda(X_{k1}) + \lambda(X_{k2})]$ for individual $k1$).
  More recently, the conditional permutation test of \citet{berrett2020conditional} proposes using permutations of observed variables based on an estimated conditional distribution for the purposes of testing conditional independence.  Covariate-adaptive randomization tests in matched designs may be viewed as as modified versions of this conditional permutation test, a connection exploited repeatedly in the results of Section \ref{sec:pscore_error}; the structure of the matched sets resolves many of the challenges that arise in  \citet{berrett2020conditional}'s construction, such as the question of how to sample from the permutation distribution.  \citet{shaikh2021randomization} also use permutations of observations based on estimated propensity scores to conduct causal inference in an observational study and obtain even stronger guarantees about large-sample performance by leveraging specific focus on a setting in which only one unit receives treatment.

Covariate-adaptive inference raises several interesting future directions for theory and methods development.  First, the theoretical guarantee given in this work is limited both by a strong restriction on the relative sample sizes of the analysis sample and of a hypothetical pilot sample used to fit the propensity score.  Clarifying whether this assumption can be relaxed and whether same-sample estimation of some kind, such as cross-fitting, could be applied instead would be a valuable contribution.  For the difference-in-means estimator, one likely challenge is that the bias of the covariate-adaptive randomization distribution based on the estimated propensity score may not shrink to zero at a strictly faster rate than the variance when propensity-fitting and analysis samples are of a similar order, so that preserving valid inference may require widening confidence intervals or reducing the significance threshold $\alpha$ to account for the bias. 

Secondly, covariate-adaptive randomization inference offers opportunities to develop stronger model-based guidance about the relative importance of the propensity score and the prognostic score in the construction of matched designs.  Currently many design objectives proposed as bases for constructing matched sets, including mean balance, caliper matching, and multivariate distance matching have a heuristic element in the sense that it is not clear for which class of overall models for treatment and outcome they provide optimal results.  While \citet{kallus2020generalized} lays out a helpful overarching framework under sampling-based inference, unifying descriptions are not present for randomization-based inferences in matching.  Covariate-adaptive inference provides important progress towards this goal by clarifying the impact of inexact propensity score matching on the operating characteristics of the ultimate estimate and associated test.  Power analysis and design sensitivity calculations \citep{rosenbaum2010designsensitivity} based on the covariate-adaptive model, if developed, would provide valuable design-stage insight about how to properly use the propensity score in constructing the matches, and could provide more definitive guidance to users selecting among many Pareto optimal matches \citep{pimentel2020optimaltradeoffs}.

Thirdly, the evidence provided by our simulation study suggests that $Z$-dependence may play a limited but non-trivial role in 
failures of type I error control for matched randomization tests.  While from a technical standpoint it may be sufficient to consider the matched pairs as the basic units of interest, sampled from some distribution, this makes it difficult to develop formal understanding of how choices about matching influence inference.  In particular, some statistical model on the original subjects of the study is needed to obtain the strong model-based guidance about the proper construction of matched sets discussed in the previous paragraph.  $Z$-dependence differs fundamentally from propensity score discrepancies in the sense that it alters the support of the randomization distribution, not just the values of nonzero sampling probabilities, and new methods, perhaps algorithmic approaches to restricting randomization distributions, must be constructed to understand and control it.

{
\singlespace
\bibliographystyle{asa}
\bibliography{../../../bibmaster/matching}}

\begin{thebibliography}{43}
\newcommand{\enquote}[1]{``#1''}
\expandafter\ifx\csname natexlab\endcsname\relax\def\natexlab#1{#1}\fi

\bibitem[{Abadie and Imbens(2006)}]{abadie2006large}
Abadie, A. and Imbens, G.~W. (2006), \enquote{Large sample properties of
  matching estimators for average treatment effects,} \textit{Econometrica},
  74, 235--267.

\bibitem[{Abadie and Imbens(2011)}]{abadie2011bias}
--- (2011), \enquote{Bias-corrected matching estimators for average treatment
  effects,} \textit{Journal of Business \& Economic Statistics}, 29, 1--11.

\bibitem[{Antonelli et~al.(2018)Antonelli, Cefalu, Palmer, and
  Agniel}]{antonelli2018doubly}
Antonelli, J., Cefalu, M., Palmer, N., and Agniel, D. (2018), \enquote{Doubly
  robust matching estimators for high dimensional confounding adjustment,}
  \textit{Biometrics}, 74, 1171--1179.

\bibitem[{Austin and Stuart(2015)}]{austin2015optimal}
Austin, P.~C. and Stuart, E.~A. (2015), \enquote{Optimal full matching for
  survival outcomes: a method that merits more widespread use,}
  \textit{Statistics in medicine}, 34, 3949--3967.

\bibitem[{Baiocchi(2011)}]{baiocchi2011methodologies}
Baiocchi, M. (2011), \enquote{Methodologies for observational studies of health
  care policy,} Ph.D. thesis, University of Pennsylvania.

\bibitem[{Berrett et~al.(2020)Berrett, Wang, Barber, and
  Samworth}]{berrett2020conditional}
Berrett, T.~B., Wang, Y., Barber, R.~F., and Samworth, R.~J. (2020),
  \enquote{The conditional permutation test for independence while controlling
  for confounders,} \textit{Journal of the Royal Statistical Society: Series B
  (Statistical Methodology)}, 82, 175--197.

\bibitem[{Connors et~al.(1996)Connors, Speroff, Dawson, Thomas, Harrell,
  Wagner, Desbiens, Goldman, Wu, Califf, et~al.}]{connors1996effectiveness}
Connors, A.~F., Speroff, T., Dawson, N.~V., Thomas, C., Harrell, F.~E., Wagner,
  D., Desbiens, N., Goldman, L., Wu, A.~W., Califf, R.~M., et~al. (1996),
  \enquote{The effectiveness of right heart catheterization in the initial care
  of critically III patients,} \textit{Jama}, 276, 889--897.

\bibitem[{Costa et~al.(1993)Costa, Zhitkovich, and Toniolo}]{costa1993dna}
Costa, M., Zhitkovich, A., and Toniolo, P. (1993), \enquote{DNA-protein
  cross-links in welders: molecular implications,} \textit{Cancer research},
  53, 460--463.

\bibitem[{Fisher(1935)}]{fisher1935design}
Fisher, R.~A. (1935), \textit{The Design of Experiments}, Edinburgh: Oliver and
  Boyd.

\bibitem[{Fogarty(2018)}]{fogarty2018regression}
Fogarty, C.~B. (2018), \enquote{Regression-assisted inference for the average
  treatment effect in paired experiments,} \textit{Biometrika}, 105, 994--1000.

\bibitem[{Guo and Rothenh{\"a}usler(2023)}]{guo2023statistical}
Guo, K. and Rothenh{\"a}usler, D. (2023), \enquote{On the statistical role of
  inexact matching in observational studies,} \textit{Biometrika}, 110,
  631--644.

\bibitem[{Guo et~al.(2022)Guo, Yin, Wang, and Jordan}]{guo2022partial}
Guo, W., Yin, M., Wang, Y., and Jordan, M.~I. (2022), \enquote{Partial
  Identification with Noisy Covariates: A Robust Optimization Approach,}
  \textit{arXiv preprint arXiv:2202.10665}.

\bibitem[{Hansen(2004)}]{hansen2004full}
Hansen, B.~B. (2004), \enquote{Full matching in an observational study of
  coaching for the SAT,} \textit{Journal of the American Statistical
  Association}, 99, 609--618.

\bibitem[{Hansen(2009)}]{hansen2009propensity}
--- (2009), \enquote{Propensity score matching to recover latent experiments:
  diagnostics and asymptotics,} Tech. Rep. 486, University of Michigan.

\bibitem[{Holland(1986)}]{holland1986statistics}
Holland, P.~W. (1986), \enquote{Statistics and causal inference,}
  \textit{Journal of the American statistical Association}, 81, 945--960.

\bibitem[{Hollander et~al.(1977)Hollander, Proschan, Sethuraman,
  et~al.}]{hollander1977functions}
Hollander, M., Proschan, F., Sethuraman, J., et~al. (1977), \enquote{Functions
  decreasing in transposition and their applications in ranking problems,}
  \textit{The Annals of Statistics}, 5, 722--733.

\bibitem[{Kallus(2020)}]{kallus2020generalized}
Kallus, N. (2020), \enquote{Generalized optimal matching methods for causal
  inference.} \textit{J. Mach. Learn. Res.}, 21, 62--1.

\bibitem[{Li and Greene(2013)}]{li2013weighting}
Li, L. and Greene, T. (2013), \enquote{A weighting analogue to pair matching in
  propensity score analysis,} \textit{The international journal of
  biostatistics}, 9, 215--234.

\bibitem[{Li and Ding(2017)}]{li2017general}
Li, X. and Ding, P. (2017), \enquote{General forms of finite population central
  limit theorems with applications to causal inference,} \textit{Journal of the
  American Statistical Association}, 112, 1759--1769.

\bibitem[{Liang and Zeger(1986)}]{liang1986longitudinal}
Liang, K.-Y. and Zeger, S.~L. (1986), \enquote{Longitudinal data analysis using
  generalized linear models,} \textit{Biometrika}, 73, 13--22.

\bibitem[{Maritz(1995)}]{maritz1995distribution}
Maritz, J.~S. (1995), \textit{Distribution-free statistical methods}, vol.~17,
  Chapman \& Hall.

\bibitem[{Pashley et~al.(2021)Pashley, Basse, and
  Miratrix}]{pashley2021conditional}
Pashley, N.~E., Basse, G.~W., and Miratrix, L.~W. (2021), \enquote{Conditional
  as-if analyses in randomized experiments,} \textit{Journal of Causal
  Inference}, 9, 264--284.

\bibitem[{Pimentel et~al.(2020)Pimentel, Forrow, Gellar, and
  Li}]{pimentel2020optimalmatching}
Pimentel, S.~D., Forrow, L.~V., Gellar, J., and Li, J. (2020), \enquote{Optimal
  matching approaches in health policy evaluations under rolling enrolment,}
  \textit{Journal of the Royal Statistical Society: Series A (Statistics in
  Society)}, 183, 1411--1435.

\bibitem[{Pimentel and Kelz(2020)}]{pimentel2020optimaltradeoffs}
Pimentel, S.~D. and Kelz, R.~R. (2020), \enquote{Optimal tradeoffs in matched
  designs comparing US-trained and internationally trained surgeons,}
  \textit{Journal of the American Statistical Association}, 115, 1675--1688.

\bibitem[{Pimentel et~al.(2015)Pimentel, Kelz, Silber, and
  Rosenbaum}]{pimentel2015large}
Pimentel, S.~D., Kelz, R.~R., Silber, J.~H., and Rosenbaum, P.~R. (2015),
  \enquote{Large, sparse optimal matching with refined covariate balance in an
  observational study of the health outcomes produced by new surgeons,}
  \textit{Journal of the American Statistical Association}, 110, 515--527.

\bibitem[{Reinisch et~al.(1995)Reinisch, Sanders, Mortensen, and
  Rubin}]{reinisch1995utero}
Reinisch, J.~M., Sanders, S.~A., Mortensen, E.~L., and Rubin, D.~B. (1995),
  \enquote{In utero exposure to phenobarbital and intelligence deficits in
  adult men,} \textit{Jama}, 274, 1518--1525.

\bibitem[{Rosenbaum(1984)}]{rosenbaum1984conditional}
Rosenbaum, P.~R. (1984), \enquote{Conditional permutation tests and the
  propensity score in observational studies,} \textit{Journal of the American
  Statistical Association}, 79, 565--574.

\bibitem[{Rosenbaum(1989)}]{rosenbaum1989optimal}
--- (1989), \enquote{Optimal matching for observational studies,}
  \textit{Journal of the American Statistical Association}, 84, 1024--1032.

\bibitem[{Rosenbaum(2002{\natexlab{a}})}]{rosenbaum2002covariance}
--- (2002{\natexlab{a}}), \enquote{Covariance adjustment in randomized
  experiments and observational studies,} \textit{Statistical Science}, 17,
  286--327.

\bibitem[{Rosenbaum(2002{\natexlab{b}})}]{rosenbaum2002observational}
--- (2002{\natexlab{b}}), \textit{Observational Studies}, New York, NY:
  Springer.

\bibitem[{Rosenbaum(2007)}]{rosenbaum2007sensitivity}
--- (2007), \enquote{Sensitivity Analysis for m-Estimates, Tests, and
  Confidence Intervals in Matched Observational Studies,} \textit{Biometrics},
  63, 456--464.

\bibitem[{Rosenbaum(2010{\natexlab{a}})}]{rosenbaum2010designof}
--- (2010{\natexlab{a}}), \textit{Design of Observational Studies}, New York,
  NY: Springer.

\bibitem[{Rosenbaum(2010{\natexlab{b}})}]{rosenbaum2010designsensitivity}
--- (2010{\natexlab{b}}), \enquote{Design sensitivity and efficiency in
  observational studies,} \textit{Journal of the American Statistical
  Association}, 105, 692--702.

\bibitem[{Rosenbaum({2012})}]{rosenbaum2012optimal}
--- ({2012}), \enquote{{Optimal Matching of an Optimally Chosen Subset in
  Observational Studies},} \textit{{Journal of Computational and Graphical
  Statistics}}, {21}, {57--71}.

\bibitem[{Rosenbaum(2018)}]{rosenbaum2018sensitivity}
--- (2018), \enquote{Sensitivity analysis for stratified comparisons in an
  observational study of the effect of smoking on homocysteine levels,}
  \textit{The Annals of Applied Statistics}, 12, 2312--2334.

\bibitem[{Rosenbaum and Krieger(1990)}]{rosenbaum1990sensitivity}
Rosenbaum, P.~R. and Krieger, A.~M. (1990), \enquote{Sensitivity of two-sample
  permutation inferences in observational studies,} \textit{Journal of the
  American Statistical Association}, 85, 493--498.

\bibitem[{Rubin(1980)}]{rubin1980randomization}
Rubin, D.~B. (1980), \enquote{Comment on `Randomization analysis of
  experimental data: The Fisher randomization test',} \textit{Journal of the
  American Statistical Association}, 75, 591--593.

\bibitem[{S{\"a}vje(2021)}]{saevje2021inconsistency}
S{\"a}vje, F. (2021), \enquote{On the inconsistency of matching without
  replacement,} \textit{Biometrika}.

\bibitem[{Shaikh and Toulis(2021)}]{shaikh2021randomization}
Shaikh, A.~M. and Toulis, P. (2021), \enquote{Randomization tests in
  observational studies with staggered adoption of treatment,} \textit{Journal
  of the American Statistical Association}, 116, 1835--1848.

\bibitem[{Silber et~al.(2001)Silber, Rosenbaum, Trudeau, Even-Shoshan, Chen,
  Zhang, and Mosher}]{silber2001multivariate}
Silber, J.~H., Rosenbaum, P.~R., Trudeau, M.~E., Even-Shoshan, O., Chen, W.,
  Zhang, X., and Mosher, R.~E. (2001), \enquote{Multivariate matching and bias
  reduction in the surgical outcomes study,} \textit{Medical care}, 1048--1064.

\bibitem[{Stefanski and Boos(2002)}]{stefanski2002calculus}
Stefanski, L.~A. and Boos, D.~D. (2002), \enquote{The calculus of
  M-estimation,} \textit{The American Statistician}, 56, 29--38.

\bibitem[{Stuart(2010)}]{stuart2010matching}
Stuart, E.~A. (2010), \enquote{Matching methods for causal inference: A review
  and a look forward,} \textit{Statistical Science}, 25, 1--21.

\bibitem[{Zubizarreta(2012)}]{zubizarreta2012using}
Zubizarreta, J.~R. (2012), \enquote{Using mixed integer programming for
  matching in an observational study of kidney failure after surgery,}
  \textit{Journal of the American Statistical Association}, 107, 1360--1371.

\end{thebibliography}

\begin{table}[ht]
\small
\centering
\begin{tabular}{rrrrr}
  \hline
 & African-American & Age & Smoker & PS \\ 
  \hline
Pair 1 & 0.00 & -6.00 & 0.00 & 0.12 \\ 
  Pair 2 & 0.00 & -3.00 & 0.00 & 0.05 \\ 
  Pair 3 & 0.00 & 4.00 & 0.00 & 0.05 \\ 
  Pair 4 & 0.00 & -8.00 & 0.00 & 0.16 \\ 
  Pair 5 & 0.00 & 0.00 & 0.00 & 0.00 \\ 
  Pair 6 & 0.00 & 3.00 & 0.00 & -0.06 \\ 
  Pair 7 & 0.00 & -3.00 & 0.00 & 0.06 \\ 
  Pair 8 & 0.00 & -2.00 & 0.00 & 0.17 \\ 
  Pair 9 & 0.00 & 3.00 & 0.00 & -0.06 \\ 
  Pair 10 & 0.00 & -5.00 & 0.00 & 0.06 \\ 
  Pair 11 & 0.00 & -3.00 & 0.00 & 0.06 \\ 
  Pair 12 & 0.00 & -5.00 & 0.00 & 0.10 \\ 
  Pair 13 & 0.00 & -5.00 & 0.00 & 0.11 \\ 
  Pair 14 & 0.00 & -5.00 & 0.00 & 0.10 \\ 
  Pair 15 & 0.00 & 0.00 & 0.00 & 0.00 \\ 
  Pair 16 & 0.00 & -4.00 & 0.00 & 0.08 \\ 
  Pair 17 & 0.00 & 1.00 & 0.00 & -0.02 \\ 
  Pair 18 & 0.00 & 1.00 & 0.00 & -0.01 \\ 
  Pair 19 & -1.00 & 7.00 & 0.00 & 0.04 \\ 
  Pair 20 & 0.00 & -5.00 & 0.00 & 0.10 \\ 
  Pair 21 & 0.00 & 4.00 & 0.00 & 0.05 \\ 
  Average & -0.05 & -1.48 & 0.00 & 0.06 \\ 
   \hline
\end{tabular}
\caption{\small Treated- control differences in matched pairs in the welders dataset.}
\label{tab:welder_pairs}
\end{table}

 \begin{table}[ht]
 \small
\centering
\begin{tabular}{rrrr}
  \hline
 & Before Matching & Caliper Match & Subset Match \\ 
  \hline
Est. Propensity Score & 1.254 & 0.002 & 0.033 \\ 
  Acute Physiology Score & 0.501 & 0.016 & 0.068 \\ 
  Mean Blood Pressure & -0.455 & -0.026 & -0.058 \\ 
  PaO2/(.01*FiO2) & -0.433 & 0.001 & 0.030 \\ 
  Serum Creatinine & 0.270 & 0.021 & 0.044 \\ 
  Hematocrit & -0.269 & 0.001 & -0.008 \\ 
  Weight (kg) & 0.256 & 0.002 & -0.017 \\ 
  PaCO2 & -0.249 & 0.004 & -0.013 \\ 
  MOSF w/Sepsis (secondary) & 0.230 & -0.036 & -0.014 \\ 
  Albumin & -0.230 & -0.018 & -0.021 \\ 
  Predicted Survival & -0.198 & -0.031 & -0.049 \\ 
  MOSF w/Sepsis (primary) & 0.172 & 0.000 & 0.000 \\ 
  Respiration Rate & -0.165 & -0.002 & -0.011 \\ 
  Heart Rate & 0.147 & 0.027 & 0.000 \\ 
  Bilirubin & 0.145 & -0.013 & 0.010 \\ 
  Heart disease & 0.139 & 0.011 & 0.014 \\ 
  MOSF w/Malignancy (secondary) & -0.135 & 0.009 & -0.017 \\ 
  Respiratory Disease & -0.128 & -0.007 & -0.020 \\ 
  Serum pH & -0.120 & -0.002 & -0.044 \\ 
  Missing ADL Score & 0.117 & -0.001 & 0.005 \\ 
  SUPPORT Coma Score & -0.110 & 0.045 & 0.022 \\ 
  Neurological disease & -0.108 & 0.006 & 0.001 \\ 
  Serum Sodium & -0.092 & -0.011 & -0.001 \\ 
  Years of Education & 0.091 & 0.022 & 0.024 \\ 
  Sepsis & 0.091 & 0.012 & 0.017 \\    \hline
\end{tabular}
  \caption{\small Standardized differences in means before matching and under two different matches for the 25 variables with largest initial imbalance in the right-heart catheterization dataset.}
 \label{tab:balance_rhc}
\end{table}
 
 \begin{table}[ht]
 \small
\centering
\begin{tabular}{rrr}
  \hline
& Caliper Match & Subset Match \\ 
Avg. Propensity Score Discrepancy & 0.016 & 0.166 \\ 
  Max Propensity Score Discrepancy & 0.030 & 0.672 \\ 
  Prop. Matched Exactly, Male Sex & 0.351 & 0.297 \\ 
  Avg. Discrepancy, Age & 13.469 & 10.892 \\ 
  Avg. Discrepancy, Mean Blood Pressure & 27.809 & 25.506 \\ 
  Avg. Discrepancy, Heart Rate & 32.677 & 28.858 \\ 
  Avg. Discrepancy, Respiration Rate & 11.614 & 10.426 \\ 
  Avg. Discrepancy, PaO2/(.01*FiO2) & 76.651 & 75.882 \\ 
 \hline
\end{tabular}
  \caption{\small Matched pair quality under two different matches for selected variables the right-heart catheterization dataset.}
 \label{tab:pair_rhc}
\end{table}

  \begin{table}[ht]
  \small
\centering
\begin{tabular}{rrr|rr}
  \hline
  &\multicolumn{2}{c|}{Caliper match} & \multicolumn{2}{c}{Subset match}\\
& Uniform & Covariate-Adaptive & Uniform  & Covariate-Adaptive  \\ 
  \hline
Risk difference & 0.074 & 0.073 & 0.068 & 0.062 \\ 
  Lower conf. limit & 0.043 & 0.041 & 0.038 & 0.034 \\ 
  Upper conf. limit & 0.106 & 0.104 & 0.098 & 0.089 \\ 
  Threshold $\Gamma$ & 1.26 & 1.26 & 1.24 & 1.26 \\ 
  Risk difference, OLS-adjusted & 0.051 & 0.051 & 0.055 & 0.054 \\ 
  Lower conf. limit, OLS-adjusted & 0.022 & 0.022 & 0.027 & 0.028 \\ 
  Upper conf. limit, OLS-adjusted & 0.081 & 0.081 & 0.083 & 0.079 \\ 
  Threshold $\Gamma$, OLS-adjusted & 1.12 & 1.12 & 1.15& 1.19 \\ 
   \hline
\end{tabular}
  \caption{\small Outcome analysis for 30-day mortality in the right-heart catheterization data.  Results are shown for both the caliper match and the subset match, with and without regression adjustment, and using both uniform and covariate-adaptive inference.}
  \label{tab:outcome_rhc}
\end{table}

\begin{figure}
\hspace{-2em}\includegraphics{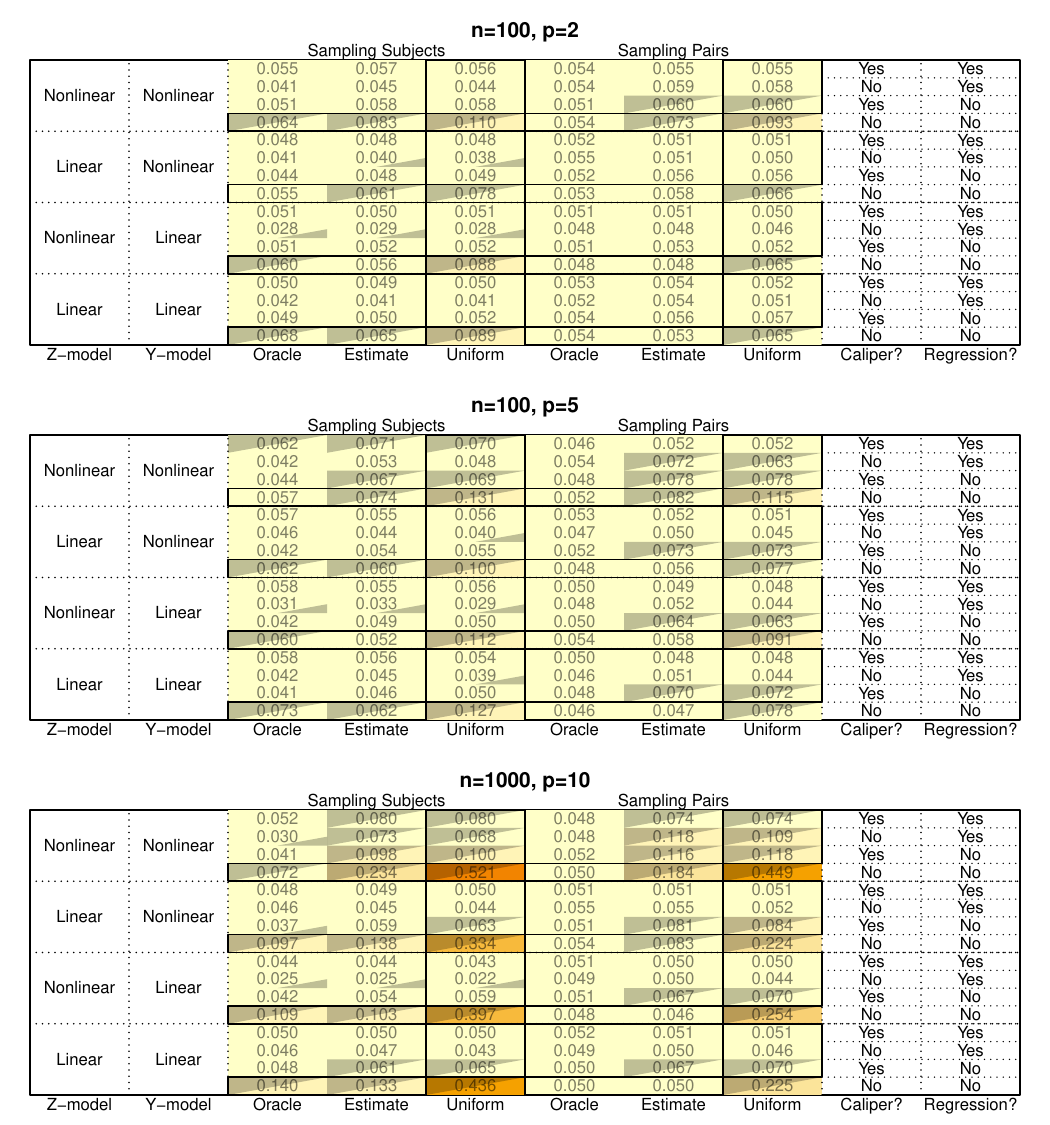}

\caption{\small Type I error results for uniform and covariate-adaptive inference across multiple simulation settings.   Each of the three tables corresponds to a separate dataset size.  Within each table the first three columns  contrast three inferential approaches when the subjects are sampled independently prior to matching; the last three columns do the same comparison when treatment assignments within matched sets are sampled independently after matching.  The rows of the table demonstrate different combinations of calipers and regression adjustment and correct or incorrect specification of treatment and outcome models.  Numbers give type I error rates with colors associated to their magnitude; triangles indicate that a one-sample z-test rejected the null hypothesis that the error rate was 0.05 (under a Bonferroni correction scaled to the number of results across the entire figure), with a large upper triangle indicating a positive z-statistic and a small lower triangle indicating a negative z-statistic. }
\label{fig:sim_heatmaps}
\end{figure}

\begin{figure}
\includegraphics{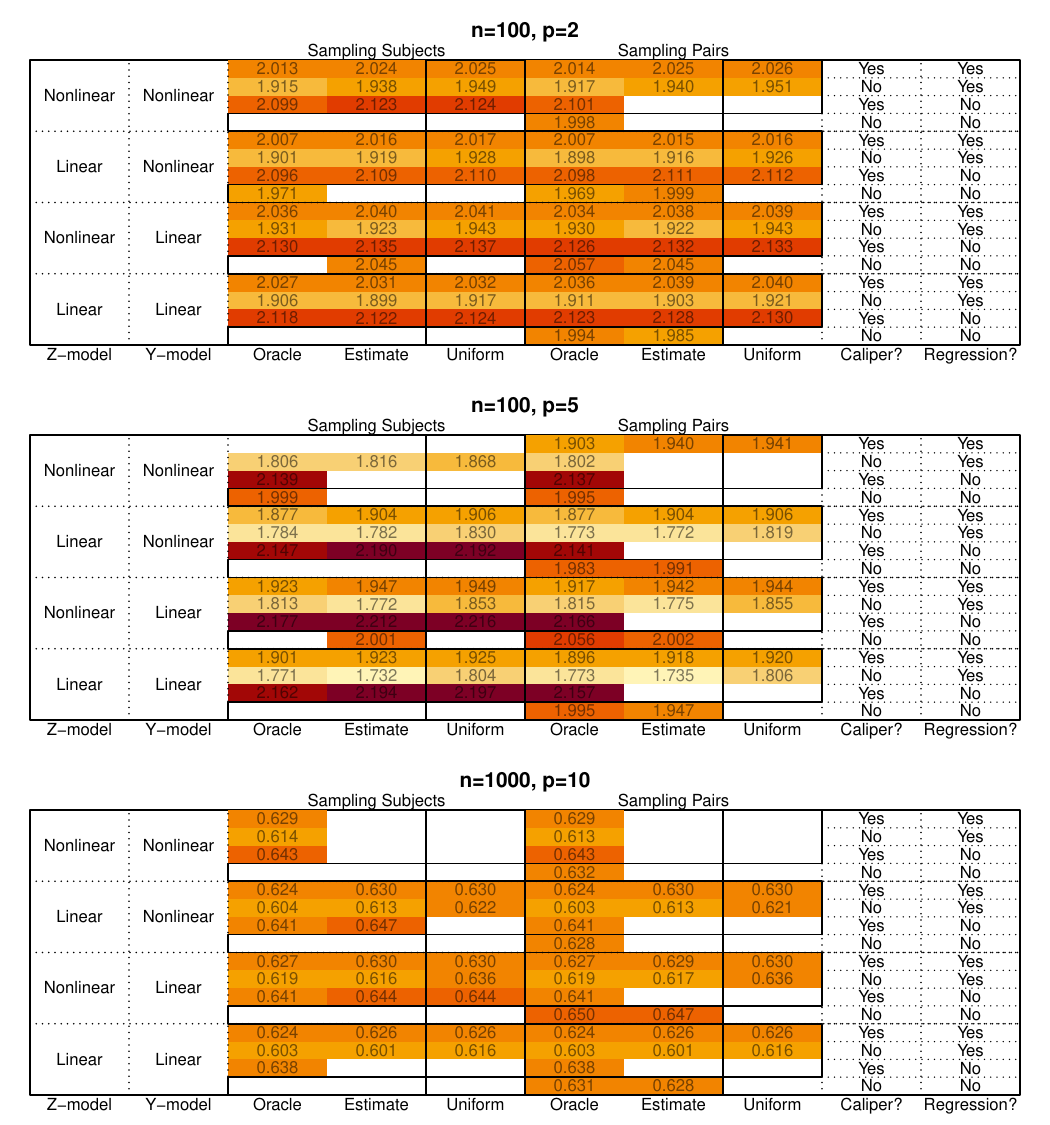}

\caption{\small Average confidence interval length for uniform and covariate-adaptive inference across multiple simulation settings for cases with approximate control of type I error at 0.05 or less.   Within each table the first three columns  contrast three inferential approaches when the subjects are sampled independently prior to matching; the last three columns do the same comparison when treatment assignments within matched sets are sampled independently after matching.  The rows of the table demonstrate different combinations of calipers and regression adjustment and correct or incorrect specification of treatment and outcome models. }
\label{fig:sim_heatmaps_ci}
\end{figure}

\begin{figure}%{L}{0.4\textwidth}
\centering
\includegraphics[scale=0.4]{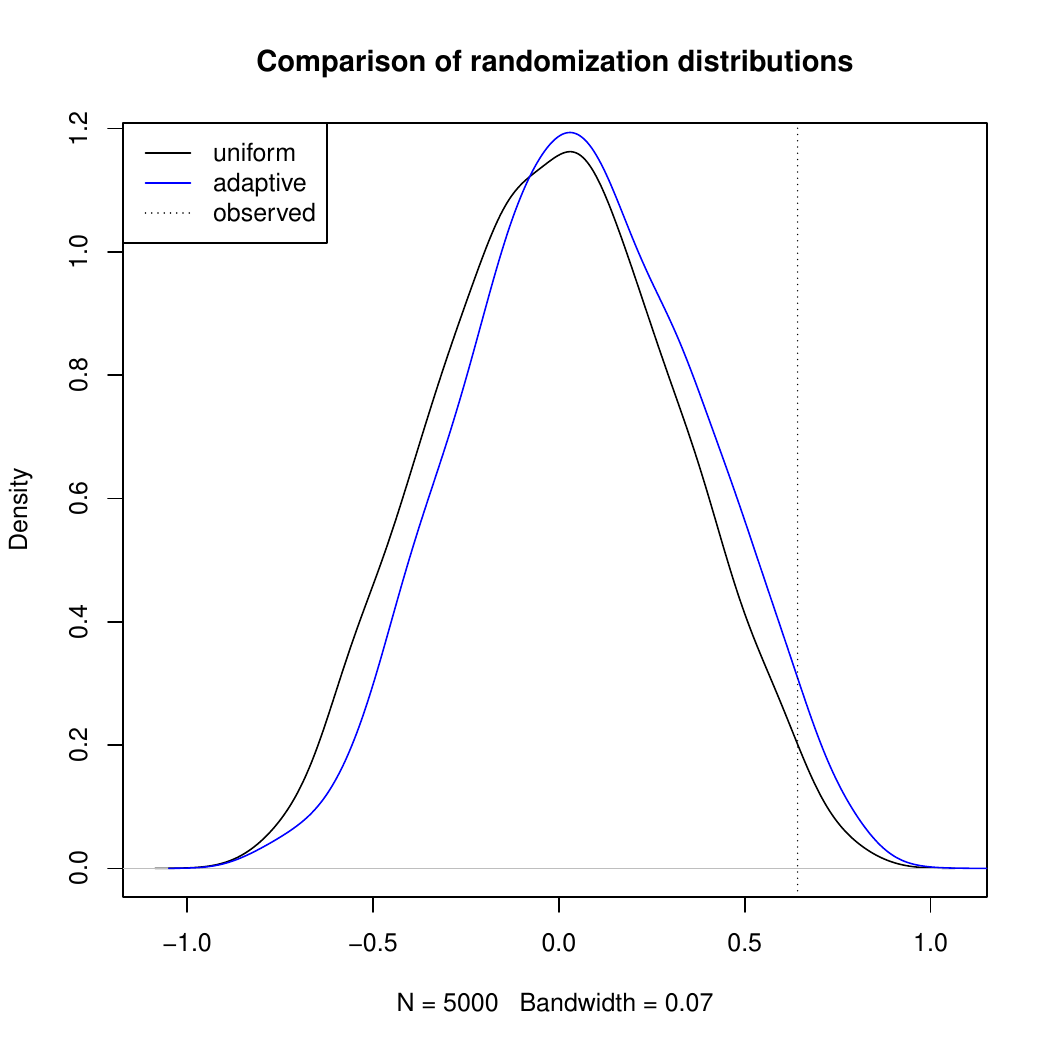}
\caption{\small Smoothed densities for the uniform randomization distribution of the difference in means statistic and the covariate-adaptive randomization distribution in the welders dataset, with the value of the observed statistic.}
\label{fig:welder_density}
\end{figure}

\setcounter{section}{0}
\renewcommand{\thesection}{\Alph{section}}
\section{Appendix}

\subsection{Proof of Theorem 2}%\ref{thm:logistic}}
\label{subsec:logit_proof}

\begin{proof}
We use Pinsker's inequality to bound the quantity $d_{TV}(P_\lambda,P_{\widehat{\lambda}})$ by the square root of the sum of the Kullback-Leibler divergences between true and estimated conditional distributions of treatment for each unit in the study, then use a Taylor expansion to show that this quantity converges in probability to zero.  In combination with Theorem 1,%\ref{thm:berrett}, 
this suffices for the result.  

We let $\widehat{\beta}$ represent the maximum likelihood estimate of $\beta$ obtained from the size-$N$ pilot sample.
\begin{align*}
d_{TV}^2(P_\lambda,P_{\widehat{\lambda}}) &\leq \frac{1}{2}\sum^n_{i=1}d_{KL}\left(P_{\lambda_i}, P_{\widehat{\lambda}_i}\right)\\
&= \frac{1}{2}\sum^n_{i=1}E\log\left\{\left(\frac{1 + \exp[-\widehat{\beta}^Tx_i ]}{1 + \exp[-{\beta}^Tx_i ]}\right)^{Z_i}\left(\frac{1 + \exp[\widehat{\beta}^Tx_i ]}{1 + \exp[{\beta}^Tx_i ]}\right)^{1-Z_i} \right\}\\
%&=  \frac{1}{2}\sum^n_{i=1}E_{\lambda_i}\left\{Z_i\log\left(\frac{1 + \exp[-\widehat{\beta}^Tx_i ]}{1 + \exp[-{\beta}^Tx_i ]}\right) - Z_i\log\left(\frac{1 + \exp[\widehat{\beta}^Tx_i ]}{1 + \exp[{\beta}^Tx_i ]}\right)+ \log\left(\frac{1 + \exp[\widehat{\beta}^Tx_i ]}{1 + \exp[{\beta}^Tx_i ]}\right)\right\}\\
%&=  \frac{1}{2}\sum^n_{i=1}E_{\lambda_i}\left\{Z_i\log\left(\frac{1 + \exp[-\widehat{\beta}^Tx_i ]}{1 + \exp[-{\beta}^Tx_i ]}\cdot\frac{1 + \exp[{\beta}^Tx_i ]}{1 + \exp[\widehat{\beta}^Tx_i ]}\right)+ \log\left(\frac{1 + \exp[\widehat{\beta}^Tx_i ]}{1 + \exp[{\beta}^Tx_i ]}\right)\right\}\\
&=  \frac{1}{2}\sum^n_{i=1}E\left\{Z_i\log\left(\frac{1 + \exp[-\widehat{\beta}^Tx_i ]}{1 + \exp[-{\beta}^Tx_i ]}\right)  + (1- Z_i)\log\left(\frac{1 + \exp[\widehat{\beta}^Tx_i ]}{1 + \exp[{\beta}^Tx_i ]}\right)\right\}\\
&= \frac{1}{2}\sum^n_{i=1}E\left\{\frac{1}{1 + \exp[-\beta^Tx_i]}\log\left(\frac{1 + \exp[-\widehat{\beta}^Tx_i ]}{1 + \exp[-{\beta}^Tx_i ]}\right)  + \frac{1}{1 + \exp[\beta^Tx_i]}\log\left(\frac{1 + \exp[\widehat{\beta}^Tx_i ]}{1 + \exp[{\beta}^Tx_i ]}\right)\right\}\\
&=\frac{1}{2}\sum^n_{i=1}\left\{C_{1i}f(-\widehat{\beta}^Tx_i) + C_{2i}f(\widehat{\beta}^Tx_i) + C_{3i}\right\} = g(\widehat{\beta}).
\end{align*} 
where 
\begin{align*}
C_{1i} &=1/(1 + \exp[-\beta^Tx_i]) \in (0,1)\\
C_{2i} &=1/(1 + \exp[\beta^Tx_i] ) \in (0,1)\\
C_{3i} &= -C_{1i}\log(1 + \exp[-\beta^Tx_i]) - C_{2i}\log(1 + \exp[\beta^Tx_i]) \\
f(b) &= \log(1 + \exp(b))
\end{align*}

We use a first-order multivariate Taylor expansion to understand the behavior of the vector-valued nonlinear function $g$ around the vector $\beta^TX$.
First note that 
\begin{align*}
f'(b) &= \frac{\exp(b)}{1 + \exp(b)} = \frac{1}{1 + \exp(-b)}\\
f''(b) &= \frac{\exp(-b)}{(1 + \exp(-b))^2} = \frac{1}{\exp(b) + 2 + \exp(-b)} \leq \frac{1}{2}\\
%f'''(b) &= -\frac{\exp(b) - \exp(-b)}{[\exp(b) + 2 + \exp(-b)]^2}  
\end{align*}
so:
\begin{align*}
\nabla g(\beta^T{X}) &= \frac{1}{2}\sum^n_{i=1}\left\{-C_{1i}f'(-\beta^Tx_i) +C_{2i}f'(\beta^Tx_i)   \right\}x_i =  \frac{1}{2}\sum^n_{i=1}\left\{-C_{1i}C_{2i} +C_{2i}C_{i1}   \right\}x_i = \mathbf{0}.\\
H_g(\beta^TX) &= \frac{1}{2}\sum^n_{i=1}\left\{C_{1i}f''(-\beta^Tx_i) +C_{2i}f''(\beta^Tx_i)   \right\}x_ix_i^T %\\
%&=  \frac{1}{2}\sum^n_{i=1}\left\{C_{1i}\frac{\exp(\beta^Tx_i)}{(1 + \exp(\beta^Tx_i))^2} + C_{2i}\frac{\exp(-\beta^Tx_i)}{(1 + \exp(-\beta^Tx_i))^2} \right\}x_ix_i^T\\
%&= \frac{1}{2}\sum^n_{i=1}\left\{\frac{\exp(\beta^Tx_i)}{(1 + \exp[-\beta^Tx_i])(1 + \exp(\beta^Tx_i))^2} + \frac{\exp(-\beta^Tx_i)}{(1 + \exp[\beta^Tx_i])(1 + \exp(-\beta^Tx_i))^2} \right\}x_ix_i^T\\
%&= \frac{1}{2}\sum^n_{i=1}\frac{1 + \exp(-\beta^Tx_i) \,\, + \,\, 1 + \exp(\beta^Tx_i)}{(1 + \exp[\beta^Tx_i])^2(1 + \exp[-\beta^Tx_i])^2}x_ix_i^T =\frac{1}{2}\sum^n_{i=1}C_{4i}x_ix_i^T
\end{align*}
%For future convenience, we also define $C_{5i}$ as the derivative of $C_{1i}f''(-\widehat{\beta}^Tx_i) + C_{2i}f''(\widehat{\beta}^Tx_i)$ with respect to $\widehat{\beta}^Tx_i$ and evaluated at ${\beta}^Tx_i$: 
%\begin{align*}
%C_{5i} &= -C_{1i}\frac{\exp(-\beta^Tx_i) - \exp(\beta^Tx_i)}{[\exp(-\beta^Tx_i) + 2 + \exp(\beta^Tx_i)]^2} - C_{2i}\frac{\exp(\beta^Tx_i) - \exp(-\beta^Tx_i)}{[\exp(\beta^Tx_i) + 2 + \exp(-\beta^Tx_i)]^2}\\
%&= \frac{1}{1 + \exp(-\beta^Tx_i)}\cdot \frac{\exp(\beta^Tx_i)[1 - \exp(-2\beta^Tx_i)]}{[\exp(-\beta^Tx_i) + 2 + \exp(\beta^Tx_i)]^2} + \frac{1}{1 + \exp(\beta^Tx_i)}\frac{\exp(-\beta^Tx_i)[1 - \exp(2\beta^Tx_i)]}{[\exp(\beta^Tx_i) + 2 + \exp(-\beta^Tx_i)]^2}\\
%&= \frac{\exp(\beta^Tx_i)[1 - \exp(-\beta^Tx_i)]}{[\exp(-\beta^Tx_i) + 2 + \exp(\beta^Tx_i)]^2} + \frac{\exp(-\beta^Tx_i)[1 - \exp(\beta^Tx_i)]}{[\exp(\beta^Tx_i) + 2 + \exp(-\beta^Tx_i)]^2}\\
%&= \frac{\exp(\beta^Tx_i) -2 +\exp(-\beta^Tx_i) }{[\exp(-\beta^Tx_i) + 2 + \exp(\beta^Tx_i)]^2}  \leq \frac{1}{16}.
%\end{align*}
%Note in addition that $0 < C_{4i} \leq 1/4$ for all $i$. This gives us the following:

Taylor's theorem gives us the following:
\begin{align*}
 g(\widehat{\beta})&= g(\beta) +\nabla g(\beta^T{X})^T (\widehat{\beta} - \beta) + %(\widehat{\beta} - \beta)^TH_g(\beta^TX) (\widehat{\beta} - \beta) + 
R_2\\
&= 0 + 0 + R_2 \end{align*}
Furthermore, using the Lagrange bound we can bound the second-order remainder $R_2$  (letting $|A|$ represent a matrix with each of its elements replaced by its absolute value).  Specifically, for some $\widetilde{\beta}$ near $\beta$, we have the following:
\begin{align}
g(\widehat{\beta}) &\leq (\widehat{\beta} - \beta)^T|H_g(\widetilde{\beta})|(\widehat{\beta} - \beta)\nonumber\\
&\leq \frac{1}{2}\sum^n_{i=1}\left|C_{1i}f''(-\beta^Tx_i) +C_{2i}f''(\beta^Tx_i)   \right|(\widehat{\beta} - \beta)^T|x_ix_i^T|(\widehat{\beta} - \beta)\nonumber\\
&\leq  \frac{1}{2}\sum^n_{i=1}\left(\frac{1}{2} + \frac{1}{2}  \right)(\widehat{\beta} - \beta)^T|x_ix_i^T|(\widehat{\beta} - \beta)\nonumber\\
&=  \frac{1}{2}\sum^n_{i=1}\frac{\left[\sqrt{N}(\widehat{\beta} - \beta)^T|x_i|\right]^2}{N} \label{eqn:g_summation}
%&= \frac{1}{2}(\widehat{\beta} - \beta)^T\left\{\sum^n_{i=1}|x_ix_i^T|\right\}(\widehat{\beta} - \beta)
%&= 0 + 0 + \frac{1}{2}\sum^n_{j=1}C_{4i}(\widehat{\beta} - \beta)^Tx_ix_i^T(\widehat{\beta} - \beta) + R_3\\
%&= \frac{1}{2}\sum^n_{j=1}C_{4i}[(\widehat{\beta} - \beta)^Tx_i]^2+ R_3 \\
%&\leq \frac{n}{8} \cdot \frac{[\sqrt{N}(\widehat{\beta} - \beta)^Tx_i]^2}{N} + \frac{1}{2}\sum^n_{i=1}\frac{1}{16}|(\widehat{\beta} - \beta)^Tx_i|^3\\
%&= \frac{n}{8} \cdot \frac{[\sqrt{N}(\widehat{\beta} - \beta)^Tx_i]^2}{N} + \frac{n}{32}\frac{|\sqrt{N}(\widehat{\beta} - \beta)^Tx_i|^3}{N^{3/2}}\\
\end{align}
Since $\widehat{\beta}$ is a maximum likelihood estimate fitted on a sample of size $N$, for any $x_i$  the quantity $\sqrt{N}(\widehat{\beta} - \beta)^Tx_i$ converges in distribution to a normal random variable with mean zero and variance $x_i^T\Sigma x_i$ for some variance-covariance matrix $\Sigma$, and by the continuous mapping theorem, each term in summation (\ref{eqn:g_summation}) converges in distribution to 
\[
\frac{Wx_i^T\Sigma x_i}{N}
\] where $W \sim \chi^2_1$.  Since $x_i$ is chosen from compact support, there is a uniform upper bound $\widetilde{\sigma}^2$ on these variances $x_i^T\Sigma x_i$ for all $i = 1, \ldots, n$.  Then for any $\epsilon > 0$ and any $\delta > 0$, we may choose $N_{\epsilon, \delta}$ sufficiently large to ensure that  for all $N \geq N_{\epsilon, \delta}$,
\[
 P\left(W > \left( \frac{N}{\widetilde{\sigma}^2}\right)\delta\right) < \epsilon
\]
Thus each term in summation (\ref{eqn:g_summation}) is $O_p\left(\frac{\widetilde{\sigma}^2}{N}\right)$, and in turn the entire summation is $O_p\left(\frac{n\widetilde{\sigma}^2}{N}\right)$, and it square root is $O_p\left(\sqrt{\frac{n\widetilde{\sigma}^2}{N}}\right)$.  
%\[
% \lim g(\widehat{\beta})  \leq \lim \frac{n}{2} \cdot \frac{\widetilde{\sigma}^2}{N}
%\]
Since $\lim_{n\longrightarrow \infty} n/N = 0$, this is sufficient for $\sqrt{g(\widehat{\beta})}$ to converge in probability to zero.

To complete the proof, fix any $\epsilon > 0$.   We now establish that there exists $N_\epsilon$ such that for all $N > N_\epsilon$, 
\[
P(p_{\widehat{\lambda}_N,n} \leq \alpha) - \alpha < \epsilon.
\]
which suffices for the desired result.
To do this, note that the convergence of $d_{TV}(P_\lambda,P_{\widehat{\lambda}})$ to zero in probability implies that there exists $N'$ such that for all $N > N'$,
\[
P(d_{TV}(P_\lambda,P_{\widehat{\lambda}}) > \epsilon/2) < \epsilon/2. 
\]
Then when $N > N'$, by  Theorem 1:%\ref{thm:berrett}:
\begin{align*}
P(p_{\widehat{\lambda}_N,n} \leq \alpha) - \alpha &\leq d_{TV}(P_\lambda,P_{\widehat{\lambda}})\\
&= 1\{d_{TV}(P_\lambda,P_{\widehat{\lambda}}) \leq \epsilon/2\} d_{TV}(P_\lambda,P_{\widehat{\lambda}}) + 1\{ d_{TV}(P_\lambda,P_{\widehat{\lambda}}) > \epsilon/2\} d_{TV}(P_\lambda,P_{\widehat{\lambda}})\\
\text{so}&\\
E_{\widehat{\lambda}}\left[P(p_{\widehat{\lambda}_N,n} \leq \alpha) - \alpha  \right]&\leq E_{\widehat{\lambda}}\left[1\{d_{TV}(P_\lambda,P_{\widehat{\lambda}}) \leq \epsilon/2\} d_{TV}(P_\lambda,P_{\widehat{\lambda}}) \right]+ E_{\widehat{\lambda}}\left[1\{ d_{TV}(P_\lambda,P_{\widehat{\lambda}}) > \epsilon/2\} d_{TV}(P_\lambda,P_{\widehat{\lambda}})\right]\\
P(p_{\widehat{\lambda}_N,n} \leq \alpha) - \alpha &\leq \epsilon/2 + P_{\widehat{\lambda}}(d_{TV}(P_\lambda,P_{\widehat{\lambda}}) > \epsilon/2) \\
&< \epsilon/2 + \epsilon/2 = \epsilon
%E_{\widehat{\lambda}\left[P(p_{\widehat{\lambda}_N,n} \leq \alpha) - \alpha\right] &= P(p_{\widehat{\lambda}_N,n} \leq \alpha, d_{TV}(P_\lambda,P_{\widehat{\lambda}}) \leq \epsilon/2) + P(p_{\widehat{\lambda}_N,n} \leq \alpha, d_{TV}(P_\lambda,P_{\widehat{\lambda}}) > \epsilon/2) - \alpha\\
%&\leq P(p_{\widehat{\lambda}_N,n} \leq \alpha,  d_{TV}(P_\lambda,P_{\widehat{\lambda}}) \leq \epsilon/2)  + P(d_{TV}(P_\lambda,P_{\widehat{\lambda}}) > \epsilon/2) - \alpha\\
%&< \alpha + \epsilon/2 + \epsilon/2 - \alpha  = \epsilon.
%E\left\{ d_{TV}(P_\lambda,P_{\widehat{\lambda}}) \leq \epsilon/2\right\}P(p_{\widehat{\lambda}_N,n} \leq \alpha)  + 1\{_{TV}(P_\lambda,P_{\widehat{\lambda}}) \leq \epsilon/2\}P(p_{\widehat{\lambda}_N,n} \leq \alpha)
\end{align*}
as desired.  As indicated, expections are taken with respect to the sampling distribution of the estimated propensity scores, which does not affect the left-hand side because it is a constant. The second-to-last line follows from the law of total probability and the upper bound of 1 on the total variation distance.
%note that Theorem  \ref{thm:berrett} 
\end{proof}

\subsection{Proof of Theorem  3}%\ref{prop:omega_bound} } 
\label{subsec:sens_proof}

We first define some terms, following \citet{rosenbaum2002observational}[\S 2].   Consider an arbitrary function $g: \mathbb{R}^{n'} \times \mathbb{R}^{n'} \longrightarrow \mathbb{R}$. We say that $g$ is \emph{invariant} if the value of $g(\mathbf{a}, \mathbf{b})$ remains unchanged when the two arguments are subjected to the same permutation, and we define certain invariant functions as \emph{arrangement-increasing}.  

\begin{definition}
\label{def:AI}
For any $\mathbf{x} \in \mathbb{R}^{n'}$ with indices partitioned into $K$ strata, let $\mathbf{x}
_{kij}$ be the vector obtained by interchanging the $i$th and $j$th entries for stratum $k$ within vector $\mathbf{x}$.  An invariant function $g: \mathbb{R}^{n'} \times \mathbb{R}^{n'} \longrightarrow \mathbb{R}$ is \emph{arrangement-increasing} if and only if $g(\mathbf{a}, \mathbf{b}_{kij}) \geq g(\mathbf{a}, \mathbf{b})$ whenever
\[
(a_{ki} - a_{kj})(b_{ki} - b_{kj}) \leq 0.
\]
\end{definition}

Intuitively, a function is arrangement-increasing if it takes on larger values as its two arguments are sorted into a more similar ordering; for more properties and examples see \citet{hollander1977functions} and \citet[ch. 2]{rosenbaum2002observational}.   If we abuse notation slightly to rewrite the sum statistic $T(\mathbf{Z}_\mathcal{M}, \mathbf{Y}_\mathcal{M})$ as a function $T(\mathbf{Z}_\mathcal{M}, \mathbf{Y}_\mathcal{M})$ of $\mathbf{f}(\mathbf{Y})$, then $T$ is arrangement-increasing as shown in \citet{rosenbaum2002observational,rosenbaum2007sensitivity}.
As an immediate consequence, the indicator function $I_C(\mathbf{Z}, \mathbf{f}(\mathbf{Y})) = 1\left\{T(\mathbf{Z}_\mathcal{M}, \mathbf{f}(\mathbf{Y}_\mathcal{M})) > C\right\}$ is also arrangement-increasing for all $C$.
  %This quantity can represent our test statistic itself or an indicator for whether the test statistic exceeds a given threshold. %this includes the class of so-called sum statistics $T(\mathbf{Z}, \mathbf{Y}) = \mathbf{z}^t\mathbf{f}$  and the class of indicators for whether a sum statistics exceeds a certain critical value.  
Assuming model (6) from the main manuscript % (\ref{eqn:logitmodel}) 
for treatment assignment, take any arrangement-increasing $h(\mathbf{f}, \mathbf{u})$ and define:
\[
\omega(\mathbf{f}, \mathbf{u}) = E_\mathbf{Z}\{h(\mathbf{Z}, \mathbf{f})\} = \sum_{\mathbf{z} \in \boldsymbol{\Omega}}h(\mathbf{z}, \mathbf{f})P(\mathbf{Z} = \mathbf{z}) = \mathbf{z}) = \frac{\sum_{\mathbf{z} \in \boldsymbol{\Omega}}h(\mathbf{z}, \mathbf{f})\exp\left\{\mathbf{z}^T(\boldsymbol{\kappa} + \gamma \mathbf{u}) \right\}}{\sum_{\mathbf{b} \in \boldsymbol{\Omega}}\exp\left\{\mathbf{b}^T(\boldsymbol{\kappa} + \gamma \mathbf{u}) \right\}}
\]
By choosing $h(\cdot)$ as an indicator function $I_C(\cdot)$ with $C$ equal to the critical value for our one-sided test, we obtain $\omega(\mathbf{f},\mathbf{u}) = \alpha_{adapt}(\mathbf{Y},\mathbf{u})$, so it will be sufficient to show that
\begin{align}
\min_{\mathbf{u}' \in U^-}\omega(\mathbf{f},\mathbf{u}') \leq\omega(\mathbf{f},\mathbf{u}) \leq \max_{\mathbf{u} \in U^+}\omega(\mathbf{f},\mathbf{u}).
\label{eqn:omegabound}
\end{align}
to complete the proof.  Note that other choices of $h(\cdot)$ allows similar bounding arguments for quantities such as the Hodges-Lehmann estimate.%where as before $\mathbf{a}_k$ gives the subvector of $\mathbf{a}$ for indices in matched set $k$. 
%Depending on the choice of $h(\cdot)$ this quantity can represent the expectation or the tail probability of our  test statistic under model  (\ref{eqn:logitmodel}).
%
%Without loss of generality, suppose the $n_k$ units in each matched set $k$ are arranged in increasing order of  their values $f_{ki}$ so $f_{k1} \leq f_{k2} \leq \ldots, f_{kn_k}$ for all $k$.  Let $U^+$ be the set of all stratified N-tuples $\mathbf{u}$ such that $u_{ki} \in \{0,1\}$ and $u_{k1} \leq u_{k2} \leq \ldots \leq u_{kn_k}$ for all $k$, and let the set of all stratified N-tuples $\mathbf{u}$ such that $u_{ki} \in \{0,1\}$ and $u_{k1} \geq u_{k2} \geq \ldots \geq u_{kn_k}$ for all $k$.  We will show the following, which generalizes Proposition 19 of  \citet{rosenbaum2002observational}:
%
%

To establish (\ref{eqn:omegabound}), we prove two lemmas, analogous to Propositions 17 and 18 of \citet{rosenbaum2002observational}.  The proof of Theorem 3% \ref{prop:omega_bound} 
then follows from Rosenbaum's proof of his Proposition 19, substituting our lemmas for Propositions 17 and 18 as appropriate.

\begin{lemma}
\label{lem:omega_AI}
$\omega(\mathbf{p}, \mathbf{u})$ is an arrangement-increasing function.
\end{lemma}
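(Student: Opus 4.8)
The plan is to show that $\omega(\mathbf{p}, \mathbf{u})$ is arrangement-increasing directly from Definition \ref{def:AI}, exploiting the fact that $\omega$ is an average of the arrangement-increasing indicator $h = I_C$ against the covariate-adaptive sensitivity-model weights, and that those weights themselves transform nicely under within-stratum swaps. Fix a stratum $k$ and indices $i, j$ with $(p_{ki} - p_{kj})(u_{ki} - u_{kj}) \leq 0$; I must show $\omega(\mathbf{p}, \mathbf{u}_{kij}) \geq \omega(\mathbf{p}, \mathbf{u})$, where $\mathbf{u}_{kij}$ swaps the $i$th and $j$th entries of $\mathbf{u}$ within stratum $k$. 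The key observation is that $\omega$ can be rewritten as a sum over which unit is treated in each stratum: writing $P(\mathbf{Z} = \mathbf{z})$ under model (\ref{eqn:logitmodel}), the weight on a configuration $\mathbf{z}$ factors across strata, and within stratum $k$ the factor for "unit $\ell$ treated" is proportional to $\text{odds}\{\lambda(X_{k\ell})\}\exp(\gamma u_{k\ell})$, i.e.\ proportional to $p_{k\ell}\exp(\gamma u_{k\ell})$ after normalization. Since swapping $u_{ki}$ and $u_{kj}$ only affects stratum $k$, I can hold all other strata fixed and reduce to a statement about a single stratum.

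The main step, then, is a single-stratum comparison. After conditioning on the treated-unit identities in all strata other than $k$, $\omega$ becomes a weighted average over the $n_k$ choices of treated unit in stratum $k$, with weights $w_\ell \propto p_{k\ell}\exp(\gamma u_{k\ell})$, of terms $H_\ell := $ (the value of $h$ when unit $\ell$ is treated in stratum $k$, averaged over the other strata with their fixed weights). Because $h$ is arrangement-increasing in $(\mathbf{z}, \mathbf{f})$ and the $\mathbf{f}$-entries in stratum $k$ are sorted increasingly ($f_{k1} \le \cdots \le f_{kn_k}$), the sequence $H_\ell$ is monotone nondecreasing in $\ell$ — putting the treatment indicator on a unit with a larger $f$-value can only increase an arrangement-increasing function. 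Swapping $u_{ki} \leftrightarrow u_{kj}$ replaces the weight vector $(w_1, \dots, w_{n_k})$ by one that moves mass from the smaller-$f$ to the larger-$f$ unit (this is exactly what $(p_{ki}-p_{kj})(u_{ki}-u_{kj}) \le 0$ encodes, after noting $p$ and $f$ need not be comonotone, so one has to be careful about which direction the mass moves — see below). A weighted average of a monotone sequence increases when weight is shifted toward the larger values, which gives $\omega(\mathbf{p}, \mathbf{u}_{kij}) \ge \omega(\mathbf{p}, \mathbf{u})$.

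The main obstacle I anticipate is precisely the bookkeeping in that last sentence: the hypothesis of Definition \ref{def:AI} for $\omega$ involves the sign of $(p_{ki} - p_{kj})(u_{ki} - u_{kj})$, whereas the monotonicity of $H_\ell$ runs with the ordering of the $f$-values, and there is no assumed relationship between the ordering of $p$ and the ordering of $f$ within a stratum. I expect to resolve this by following Rosenbaum's template (Proposition 17 of \citet{rosenbaum2002observational}) closely: either reduce to the case $n_k = 2$ by a standard argument that a swap of entries $i < j$ can be realized through adjacent transpositions whose hypotheses are preserved, or argue directly that because $h$ is arrangement-increasing jointly in both of its stratified arguments, the relevant inequality holds with $\mathbf{p}$ playing the role of the sorted argument rather than $\mathbf{f}$. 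The definition of arrangement-increasing is symmetric enough in its two arguments that, after the reduction to a single stratum, the condition $(p_{ki}-p_{kj})(u_{ki}-u_{kj}) \le 0$ is exactly the condition under which the reweighting helps, so the calculation should go through; the only real work is making the "shift mass toward larger values of a monotone sequence'' step rigorous, which is an elementary rearrangement inequality. Everything else — the factorization of the sensitivity-model weights across strata, the identification $\omega(\mathbf{f}, \mathbf{u}) = \alpha_{adapt}(\mathbf{Y}, \mathbf{u})$ when $h = I_C$ — is routine and parallels the exact-matching case.
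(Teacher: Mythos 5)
Your route is genuinely different from the paper's, and it does not go through as sketched. The paper never verifies Definition~\ref{def:AI} for $\omega$ directly: it writes $\omega(\mathbf{f},\mathbf{u})=\sum_{\mathbf{z}}h(\mathbf{z},\mathbf{f})\,g(\mathbf{z},\mathbf{u})$ with $g(\mathbf{z},\mathbf{u})=\exp\{\mathbf{z}^T(\boldsymbol{\kappa}+\gamma\mathbf{u})\}/\sum_{\mathbf{b}}\exp\{\mathbf{b}^T(\boldsymbol{\kappa}+\gamma\mathbf{u})\}$, checks by a one-line ratio computation that $g$ is arrangement-increasing in $(\mathbf{z},\mathbf{u})$, and then invokes the Hollander--Proschan--Sethuraman composition theorem. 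In that argument the scores $\mathbf{f}$ and the propensity terms $\boldsymbol{\kappa}$ are never compared to each other; the only ordering that matters is between $\mathbf{z}$ and $\mathbf{u}$. Your plan instead tries to verify the definition for $\omega$ itself via a single-stratum rearrangement inequality, which forces exactly the three-way comparison (of the $f$-, $\kappa$-, and $u$-orderings) that the composition theorem is designed to avoid.

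The obstacle you flag at the end is not bookkeeping; it is where the argument fails. Your key step --- ``a weighted average of a monotone sequence increases when weight is shifted toward the larger values'' --- is valid when the two affected weights are literally transposed, which is what happens in Rosenbaum's exact-matching case where $\kappa$ is constant within a stratum. Here the unnormalized weights are $w_\ell\propto\exp(\kappa_{k\ell}+\gamma u_{k\ell})$, so swapping $u_{ki}\leftrightarrow u_{kj}$ multiplies $w_i$ by $e^{-\gamma(u_{ki}-u_{kj})}$ and $w_j$ by $e^{+\gamma(u_{ki}-u_{kj})}$; when $\kappa_{ki}\neq\kappa_{kj}$ this is not a transposition and the stratum's normalizing constant changes. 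Writing $\bar H$ for the weighted mean of your $H_\ell$ and $\delta_\ell$ for the weight changes, the change in $\omega$ is proportional to $\delta_i(H_i-\bar H)+\delta_j(H_j-\bar H)$ with $\delta_i\le 0\le\delta_j$ but $|\delta_i|\neq\delta_j$ in general. If a third unit in the stratum with small $H$ carries most of the weight, both $H_i-\bar H$ and $H_j-\bar H$ are positive and the sign of this expression is governed by the relative sizes of $\kappa_{ki}$ and $\kappa_{kj}$, which are unconstrained relative to the $f$-ordering; one can write down a three-unit stratum where the swap strictly decreases $\omega$ even though the hypothesis of Definition~\ref{def:AI} holds. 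Your two proposed fixes do not repair this: reducing a non-adjacent swap to adjacent transpositions does not remove the other units from the stratum (it is their presence, not the non-adjacency, that breaks the inequality), and ``letting $\mathbf{p}$ play the role of the sorted argument'' does not engage with the fact that the monotonicity of $H_\ell$ is tied to $\mathbf{f}$ while the weights are tied to $\boldsymbol{\kappa}$ and $\mathbf{u}$. To complete the proof you need the composition-theorem device (or some equivalent that isolates the assignment probabilities $g(\mathbf{z},\mathbf{u})$ from the scores), not a direct rearrangement inequality for $\omega$; and note that even on that route the within-stratum heterogeneity of $\boldsymbol{\kappa}$ is precisely the point that demands care, since the normalizer of $g$ is no longer invariant to within-stratum permutations of $\mathbf{u}$.
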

\begin{proof}
We have chosen $h(\mathbf{z}, \mathbf{p})$ to be arrangement-increasing.  If in addition $g(\mathbf{z}, \mathbf{u})$ is arrangement-increasing, where
\[
g(\mathbf{z}, \mathbf{u}) = \frac{\exp\left\{\mathbf{z}^T(\boldsymbol{\kappa} + \gamma \mathbf{u}) \right\}}{\sum_{\mathbf{b} \in \boldsymbol{\Omega}}\exp\left\{\mathbf{b}^T(\boldsymbol{\kappa} + \gamma \mathbf{u}) \right\}}
\]
then by the composition theorem of \citet{hollander1977functions} $\omega(\mathbf{p}, \mathbf{u})$ is also arrangement-increasing, since it is a composition of $h$ and $g$ over a discrete uniform distribution.  Therefore the following direct demonstration that $g$ is arrangement-increasing completes the proof.  Choose any $\mathbf{z}, \mathbf{u}, k,i,j$ such that $(z_{ki} - z_{kj})(u_{ki} - u_{kj}) \leq 0$, and let $\mathbf{u}_{kij}$ as in Definition \ref{def:AI}.  The proof is trivial if $z_{ki} = z_{kj}$, so WLOG assume $z_{ki} = 1$ and $z_{kj} = 0$ (and therefore $u_{ki} \leq u_{kj}$).
\begin{align*}
\frac{g(\mathbf{z}, \mathbf{u}_{ij})}{g(\mathbf{z}, \mathbf{u})} = \frac{\exp\left\{\mathbf{z}^T(\boldsymbol{\kappa} + \gamma \mathbf{u}_{ij})\right\}}{\exp\left\{\mathbf{z}^T(\boldsymbol{\kappa} + \gamma \mathbf{u}_{})\right\}}
&= \prod_{k=1}^n \frac{\exp\left\{{z}_k(\kappa_k + \gamma (\mathbf{u}_{ij}))_k\right\}}{\exp\left\{{z}_k(\kappa_k + \gamma {u}_{k})\right\}} = \frac{\exp(\kappa_i + \gamma u_j)}{\exp(\kappa_i + \gamma {u}_i)} = \exp(\gamma(u_j - u_i)) \geq 0.
\end{align*}
so $g(\mathbf{z}, \mathbf{u}_{kij}) \geq g(\mathbf{z}, \mathbf{u})$.
\end{proof}

\begin{lemma}
\label{lem:omega_monotone}
Let $\mathbf{e}_{\ell i}$ be a vector containing a one in index $i$  of matched set $\ell$ and zeroes in all other indices.  For each fixed $\mathbf{f}, \mathbf{u}$, and $i$, $\omega(\mathbf{f}, \mathbf{u} + \delta\mathbf{e}_{\ell i})$ is monotone in $\delta$.
\end{lemma}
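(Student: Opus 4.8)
The plan is to exploit the fact that adding $\delta\mathbf{e}_{\ell i}$ to $\mathbf{u}$ changes the exponential weights in $\omega$ only through the single coordinate $z_{\ell i}$, which takes the value $0$ or $1$ for every $\mathbf{z}\in\boldsymbol{\Omega}$. Since $\mathbf{z}^T\mathbf{e}_{\ell i}=z_{\ell i}$,
\[
\omega(\mathbf{f},\mathbf{u}+\delta\mathbf{e}_{\ell i}) = \frac{\sum_{\mathbf{z}\in\boldsymbol{\Omega}}h(\mathbf{z},\mathbf{f})\,w(\mathbf{z})\,e^{\gamma\delta z_{\ell i}}}{\sum_{\mathbf{b}\in\boldsymbol{\Omega}}w(\mathbf{b})\,e^{\gamma\delta b_{\ell i}}}, \qquad w(\mathbf{z}):=\exp\{\mathbf{z}^T(\boldsymbol{\kappa}+\gamma\mathbf{u})\},
\]
and the factor $e^{\gamma\delta z_{\ell i}}$ equals $t:=e^{\gamma\delta}$ on the treatment vectors with $z_{\ell i}=1$ and equals $1$ on those with $z_{\ell i}=0$.

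First I would split each sum according to the value of $z_{\ell i}$, writing $\omega(\mathbf{f},\mathbf{u}+\delta\mathbf{e}_{\ell i})=(At+B)/(Ct+D)$, where $A=\sum_{\mathbf{z}:z_{\ell i}=1}h(\mathbf{z},\mathbf{f})w(\mathbf{z})$, $B=\sum_{\mathbf{z}:z_{\ell i}=0}h(\mathbf{z},\mathbf{f})w(\mathbf{z})$, $C=\sum_{\mathbf{z}:z_{\ell i}=1}w(\mathbf{z})$, and $D=\sum_{\mathbf{z}:z_{\ell i}=0}w(\mathbf{z})$ do not depend on $\delta$. Because matched set $\ell$ contains its treated unit, some $\mathbf{z}\in\boldsymbol{\Omega}$ has $z_{\ell i}=1$; because it also contains at least one control, some $\mathbf{z}\in\boldsymbol{\Omega}$ has $z_{\ell i}=0$; together with $w(\cdot)>0$ this gives $C,D>0$, so $Ct+D$ never vanishes for $t>0$.

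Then I would differentiate the linear-fractional map $t\mapsto(At+B)/(Ct+D)$, obtaining $(AD-BC)/(Ct+D)^2$, whose sign is constant in $t$. Hence $\omega(\mathbf{f},\mathbf{u}+\delta\mathbf{e}_{\ell i})$ is monotone in $t$, and since $\delta\mapsto t=e^{\gamma\delta}$ is itself monotone, monotonicity in $\delta$ follows by composition (it is strict precisely when $AD\neq BC$ and $\gamma\neq 0$, but the lemma asks only for monotonicity). I do not expect a genuine obstacle here: the substantive point is the structural observation that a perturbation of the single coordinate $u_{\ell i}$ rescales exactly the part of $\boldsymbol{\Omega}$ with $z_{\ell i}=1$ by a common factor, which is what collapses $\omega$ into a ratio of affine functions of $t$; the only other thing to verify is $C,D>0$. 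The argument uses no sign information about $h$, so it applies verbatim to the indicator $I_C$ that produces $\alpha_{adapt}$ and to the other choices of $h$ (such as the one underlying the Hodges--Lehmann estimate) referenced above.
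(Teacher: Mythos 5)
Your proof is correct and follows essentially the same route as the paper's: both decompose $\boldsymbol{\Omega}$ by the value of the $(\ell,i)$ coordinate to write $\omega$ as a ratio of affine functions of $e^{\gamma\delta}$ and then observe that the derivative of this linear-fractional map has constant sign. Your added verification that the denominator coefficients are strictly positive is a small point the paper asserts without comment, but the argument is the same.
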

\begin{proof} Let $\boldsymbol{\Omega}_1$ be the set of vectors $\mathbf{b} \in \boldsymbol{\Omega}$ for which $b_i = 1$, and let $\boldsymbol{\Omega}_0 = \boldsymbol{\Omega} - \boldsymbol{\Omega}_1$.
\begin{align*}
\omega(\mathbf{f}, \mathbf{u}+ \delta\mathbf{e}_i) & = \frac{\sum_{\mathbf{z} \in \boldsymbol{\Omega}}h(\mathbf{z}, \mathbf{f})\exp\left\{\mathbf{z}^T(\boldsymbol{\kappa} + \gamma \mathbf{u} +\gamma \delta\mathbf{e}_i) \right\}}{\sum_{\mathbf{b} \in \boldsymbol{\Omega}}\exp\left\{\mathbf{b}^T(\boldsymbol{\kappa} + \gamma \mathbf{u} + \gamma\delta\mathbf{e}_i) \right\}} \\
&= \frac{\sum_{\mathbf{z} \in \boldsymbol{\Omega}_0}h(\mathbf{z}, \mathbf{f})\exp\left\{\mathbf{z}^T(\boldsymbol{\kappa} + \gamma \mathbf{u}) \right\} + \sum_{\mathbf{z} \in \boldsymbol{\Omega}_1}h(\mathbf{z}, \mathbf{f})\exp\left\{\mathbf{z}^T(\boldsymbol{\kappa} + \gamma \mathbf{u}) +\gamma \delta \right\}}{\sum_{\mathbf{b} \in \boldsymbol{\Omega}_0}\exp\left\{\mathbf{b}^T(\boldsymbol{\kappa} + \gamma \mathbf{u}) \right\} + \sum_{\mathbf{b} \in \boldsymbol{\Omega}_1}\exp\left\{\mathbf{b}^T(\boldsymbol{\kappa} + \gamma \mathbf{u}) + \gamma\delta \right\}}\\
&= \frac{A_0 + \exp(\delta\gamma)A_1}{D_0 + \exp(\delta\gamma)D_1}
\end{align*}
where $A_i, D_i$ do not depend on $\delta$ and the $D_i$ are all strictly positive.   From this point on the proof is identical to that of Proposition 19 in \citet{rosenbaum2002observational}: the partial derivative with respect to $\delta$ (computed below) has constant sign, which is sufficient for monotonicity.
\begin{align*}
\frac{\partial \omega(\mathbf{f}, \mathbf{u}+ \delta\mathbf{e}_{\ell i})}{ \partial \delta} = \frac{(D_0A_1 - A_0D_1)\gamma_\ell\exp(\gamma_\ell\delta)}{\left[D_0 + D_1\exp(\gamma_\ell\delta)\right]^2}
\end{align*}
\end{proof}

\subsection{Alternative large-sample null distribution accounting for propensity score estimation}
\label{subsec:m_est}

 In this section we construct and give intuition for an alternative variance estimator for the difference-in-means statistic under covariate-adaptive randomization inference that accounts for estimation of the propensity score.  To accomplish this we represent both  propensity score estimation and calculation of the final test statistic as tasks within a common M-estimation framework.  %While we adopt notation and ideas from \citet{stefanski2002calculus}, some adaptations are necessary to represent the matched structure of the data and to accommodate our finite population approach to inference.

 %\subsection{M-estimation setup}
 
 While M-estimators are typically used in settings where independent samples from an infinite superpopulation are available and the goal is to estimate a parameter of that infinite population \citep{stefanski2002calculus}, we focus instead on a setting with an infinite sequence of ever-larger finite populations, in each of which only the treatment random variable $\mathbf{Z}$ is random.  In addition, some subjects within each finite population are grouped into matched sets, and treatment indicators for subjects within the same matched set are highly dependent; on the other hand, propensity score estimation typically leverages all unmatched individuals as well as those in matched sets, and it is more reasonable to think of unmatched individuals' treatments as mutually independent.  As such, we adopt new notation as follows.

First, for a given finite population, we let $\mathbf{O}_{k} = \{\mathbf{Y}_{k}, \mathbf{Z}_k, \mathbf{X}_k\}$ represent the set of observed data associated with all the units in matched sets $k = 1, \ldots K$. %While this representation of the data allows us to represent the post-match test statistic, it doesn't fully capture the propensity score fitting process since  unmatched units are also used for that step.  To allow representation of both tasks in a common M-estimatioin framework,
 We  also introduce quantities $\mathbf{O}_{K+1}, \ldots, \mathbf{O}_{K'}$ where each $\mathbf{O}_k$ with $k > K$ gives the $Y,Z, X$ information associated with a single unmatched unit, so $n_k = 1$ for these units.  Letting $m = \sum^{K'}_{k=1}k$, We define $\mathbf{Y}$,  $\mathbf{Z}$, and $\mathbf{Y}(z)$ as M-tuples collecting values from all $K'$ sets, with $\mathbf{X}$ defined as the analogous $m \times p$ matrix.  We let  $\mathbf{Z}_k$ be independent of $\mathbf{Z}_{k'}$ for all $1 \leq k, k' \leq K', k \neq k'$, conditional on $\mathbf{Y}(1), \mathbf{Y}(0), \mathbf{X}$.
 
We assume that the propensity score is fit using M-estimation (such as a logistic regression).
 %We characterize as a  M-estimator 
%We start in a general way by assuming only that the propensity score itself is fit by M-estimation. 
 Let $\theta \in \mathbb{R}^p$ be the parameter vector for the propensity score model, and let 
 
 \[
 {\boldsymbol \psi} (\mathbf{O}_k, \theta)= (\psi_1(\mathbf{O}_k, \theta), \ldots, \psi_p(\mathbf{O}_k, \theta))^T
 \]
  be the set of associated functions that define the M-estimator via the estimating equations:
 \begin{align*}
\sum^{K'}_{k=1} {\boldsymbol \psi}(\mathbf{O}_k, \theta) = \mathbf{0}.
 \end{align*}
Note that for convenience we assume matched sets are given here and will not be influenced by the estimated propensity score (see Section 2.3 of the main manuscript for more discussion of such assumptions and their implications). %Although it may seem weird that some elements $\mathbf{O}_k$ contain multiple observations while others contain only single observations, while most propensity score fitting procedures treat all observations as independent units, in the finite sample framework we obtain the same result by simply choosing the  ${\boldsymbol \psi}$ functions for multiple-unit $\mathbf{O}_k$s as sums of the individual-unit versions.
 We augment these equations with three more components, one giving the proportion $\eta$ of sets from 1 to $K'$ that are included in the match, one giving the mean  $\mu$ of the difference-in-means statistic, and one giving the difference-in-means statistic $T$ itself.
 \begin{align*}
 \sum^{K'}_{k=1}\left(\begin{array}{c} 
 {\boldsymbol \psi}(\mathbf{O}_k, \theta)  \\
 1/\eta - 1\{n_k > 1\} \\
\eta  1\{n_k > 1\} \left\{ \mu - \sum^{n_k}_{i=1}Y_{ki}\left(\frac{n_k \cdot \text{odds}_{ki}(\theta)}{(n_k-1)\sum^{n_k}_{j=1}\text{odds}_{kj}(\theta)} - \frac{1}{n_k -1}\right)\right\} \\
 \eta  1\{n_k > 1\}\left\{T  - \sum^{n_k}_{i=1}Y_{ki}\frac{n_kZ_{ki} - 1}{n_k - 1} + \mu \right\}
 \end{array} \right) = \mathbf{0}
 \end{align*}
 We let ${\boldsymbol \psi}^{full}$ represent this expanded set of estimating equations.   %$\eta$ is equal to the large-sample limit of $K'/K$ and its role is essentially to correct the denominators of the M-statistics for the components that rely only on the $K$ matched samples rather than the full set of data indexed up to $K'$.
We may obtain parameter estimates $(\widehat{\theta}, \widehat{\mu}, T)$ by solving this equation (where $T$ is simply the actual difference-in-means statistic rather than an estimate of it). 

Assume that $\theta_0, \mu_0, \eta_0$ and $T_0$  exist as unique solutions to the following equation:
\[
\lim_{K' \longrightarrow \infty}\frac{1}{K'}\sum^{K'}_{k=1}E\left[{\boldsymbol \psi}^{full}(\mathbf{O}_k, \theta, \eta, \mu, T) \mid \mathbf{Y}(1), \mathbf{Y}(0), \mathbf{X}\right].
\]
In short, $\theta_0$, $\mu_0$, and $\eta_0$ are the large-sample limits of the propensity score parameter, the mean of the difference-in-means statistic covariate-adaptive randomization distribution, the ratio $K'/K$, and the test statistic itself across our sequence of growing finite populations  under the sharp null hypothesis.  Note that under these conditions we know $T_0 = 0$.

We now wish to estimate the asymptotic variance of $T$ under the sharp null hypothesis of no effect of treatment, accounting for the estimation of the propensity score.  We construct an estimator using the sandwich variance approach described for traditional infinite-superpopulation M-estimators in \citet{stefanski2002calculus}.  Specifically, we construct an estimator for one scalar entry of the following matrix, analogous to the (stabilized) asymptotic variance-covariance matrix of the parameter vector in the traditional infinite superpopulation version of M-estimation:
%\[
%\widehat{\text{Var}}(T) = \frac{1}{K'}\widehat{A}(\widehat{\theta}, \widehat{\eta}, \widehat{\mu})^{-1}\widehat{B}(\widehat{\theta},  \widehat{\eta}, \widehat{\mu})[\widehat{A}(\widehat{\theta}, \widehat{\eta}, \widehat{\mu})^{-1}]^T
%\]
\[
V(\theta_0, \eta_0, \mu_0) = %\frac{1}{K'}
A(\theta_0, \eta_0, \mu_0, 0)^{-1}B(\theta_0,  \eta_0, \mu_0, 0)[A(\theta_0,  \eta_0, \mu_0, 0)^{-1}]^T
\]
 where $A(\cdot)$ and $B(\cdot)$ are defined as follows: %$\widehat{A}(\widehat{\theta}, \widehat{ \eta}, \widehat{\mu})$ and $\widehat{B}(\widehat{\theta},  \widehat{\eta}, \widehat{\mu_0})$ are estimates of the quantities $A(\theta_0, \eta_0, \mu_0,0)$ and $B(\theta_0, \eta_0, \mu_0,0)$, defined as follows:

% which we can do by extracting the bottom right coordinate of the variance-covariance matrix for $\widehat{\theta}$, which has the following form \citep{stefanski2002calculus}:
%\[
%\frac{1}{K'}A(\theta_0, \eta_0, \mu_0, 0)^{-1}B(\theta_0,  \eta_0, \mu_0, 0)[A(\theta_0,  \eta_0, \mu_0, 0)^{-1}]^T
%\]
%where $\theta_0,  \eta_0$ and $\mu_0$ are the limits of $\theta, \eta,$, $\mu$, in the infinite sequence of finite populations; 0 is the large-sample limit of $T$  in this same regime under the null hypothesis.
\begin{align*}
A(\theta_0, \eta_0, \mu_0,0) &= 
\lim_{K' \longrightarrow \infty}\frac{1}{K'}\sum^{K'}_{k=1}E[-\nabla_{\theta, \eta, \mu,T}, {\boldsymbol \psi}^{full}(\mathbf{O}_k, \theta_0,  \eta_0, \mu_0, 0)] \\
B(\theta_0, \mu_0,0) &= \lim_{K' \longrightarrow \infty}\frac{1}{K'}\sum^{K'}_{k=1}E[{\boldsymbol \psi}^{full}(\mathbf{O}_k, \theta_0,  \eta_0, \mu_0, 0){\boldsymbol \psi}^{full}(\mathbf{O}_k, \theta_0,  \eta_0, \mu_0, 0)^T]
\end{align*}

%NEED OTHER REGULARITY CONDITIONS FOR ALL THESE TO EXIST, SEE \citep{stefanski2002calculus}.  
Note that the formulas differ from the traditional ones in \citet{stefanski2002calculus}, which take expectation over a sampling distribution for all  observed elements $O_i$, by taking the limit of a sample average over the non-random elements $(Y_{ki})$; we assume here that those limits exist and that the limit $A(\theta_0, \eta_0, \mu_0)$ is invertible. We acknowledge that our use of this formula in the finite population is heuristic in the sense that we do not provide formal regularity conditions or a result guaranteeing % for these assumptions to hold or a formal result showing that
that the asymptotic variance of $T$ is equal to $V(\theta)$ within our finite population framework, since this exceeds the scope of what we can accomplish in the supplement.

  %I AM ALSO PLAYING A BIT FAST AND LOOSE WITH COMMUTATION OF LIMITS AND FINITE-SAMPLE SUMS BELOW; UNDER ENOUGH REGULARITY CONDITIONS THIS IS FINE BUT NEED TO GO BACK AND BE MORE CAREFUL.  
%One regularity condition that will definitely be needed is $\frac{K}{K'} \longrightarrow c$ with $c \in (0,\infty)$.  %The $K^2$ denominator for $B(\cdot)$ arises because there is actually a double sum, but since $E[{\boldsymbol \psi}^{full}(\cdot)] = 0$ and ${\boldsymbol \psi}^{full}(\cdot)$ components are independent for distinct values of $k$, it reduces to a single sum.
%\begin{landscape}
%We now examine these quantities in more explicit form for our estimator $\widehat{\text{Var}}(T)$, we first
We now give more explicit forms for $A(\cdot)$ and $B(\cdot)$ to provide motivation for our specific estimator of ${V}(\theta_0, \eta_0, \mu_0)$ , using the fact that $E\psi_j(\theta_0,  \eta_0, \mu_0, 0) = 0$ to simplify. 

\small
\begin{align*}
  &A(\theta_0, \eta_0 \mu_0, 0) =\\ %&= \lim_{K' \longrightarrow \infty}\frac{1}{K'}\sum^{K'}_{k=1}E\left(\begin{array}{cccc} -\nabla_{\theta}{\boldsymbol \psi}(\mathbf{O}_k, \theta_0) & 0 &0 & 0 \\
%0 & \eta_0^{-2} & 0 & 0 \\
 %\eta_0  1\{n_k > 1\} \sum^{n_k}_{i=1}Y_{ki}\frac{n_k}{n_k-1}\left(\frac{\partial}{\partial \theta_{1}} \frac{ \text{odds}_{ki}(\theta_0)}{\sum^{n_k}_{j=1} \text{odds}_{kj}(\theta_0)}\quad \cdots \quad\frac{\partial}{\partial \theta_{p}} \frac{ \text{odds}_{ki}(\theta_0)}{\sum^{n_k}_{j=1} \text{odds}_{kj}(\theta_0)}\right)  &  -1\{n_k > 1\} \left\{ \mu - \sum^{n_k}_{i=1}Y_{ki}\left(\frac{n_k \cdot \text{odds}_{ki}(\theta)}{(n_k-1)\sum^{n_k}_{j=1}\text{odds}_{kj}(\theta)} - \frac{1}{n_k -1}\right)\right\} &  -\eta_0 1\{n_k > 1\}  & 0 \\
 %\mathbf{0}_{1 \times p} & 1\{n_k > 1\}\left\{T  - \sum^{n_k}_{i=1}Y_{ki}\frac{n_kZ_{ki} - 1}{n_k - 1} + \mu \right\} &  -\eta_0 1\{n_k > 1\}   & -\eta_0 1\{n_k > 1\}    
 %\end{array}\right)\\
 &\lim_{K' \longrightarrow \infty}\left(\begin{array}{cccc} -\frac{1}{K'}\sum^{K'}_{k=1}E\left[\nabla_{\theta}{\boldsymbol \psi}(\mathbf{O}_k, \theta_0)\right] & 0 & 0 & 0 \\
0 & \eta_0^{-2} & 0 & 0 \\
  - \frac{\eta_0}{K'}\sum^{K}_{k=1}\sum^{n_k}_{i=1}Y_{ki}\frac{n_k}{n_k-1}\left(\frac{\partial}{\partial \theta_{1}} \frac{ \text{odds}_{ki}(\theta_0)}{\sum^{n_k}_{j=1} \text{odds}_{kj}(\theta_0)}\quad \cdots \quad\frac{\partial}{\partial \theta_{p}} \frac{ \text{odds}_{ki}(\theta_0)}{\sum^{n_k}_{j=1} \text{odds}_{kj}(\theta_0)}\right)  &0%  \frac{1}{K'}\sum^{K}_{k=1}\left\{ \mu - \sum^{n_k}_{i=1}Y_{ki}\left(\frac{n_k \cdot \text{odds}_{ki}(\theta)}{(n_k-1)\sum^{n_k}_{j=1}\text{odds}_{kj}(\theta)} - \frac{1}{n_k -1}\right)\right\} 
   &  -1%-\eta_0\frac{K}{K'} 
   & 0 \\
 \mathbf{0}_{1 \times p} & 0% \frac{1}{K'}\sum^{K}_{k=1}\left\{T  - \sum^{n_k}_{i=1}Y_{ki}\frac{n_kZ_{ki} - 1}{n_k - 1} + \mu \right\} 
 & -1 & -1%  -\eta_0 \frac{K}{K'}  &-\eta_0 \frac{K}{K'}  
 \end{array} \right)
 \end{align*}
 
 \begin{align*}
B(\theta_0, \eta_0, \mu_0, 0)  &=  \lim_{K' \longrightarrow \infty}\frac{1}{K'}\sum^{K'}_{k=1}E\left( \begin{array}{cccc} B_{\theta\theta} & B_{\theta\eta} & B_{\theta \mu} & B_{\theta T}\\
B_{\theta\eta}^T & B_{\eta\eta} & B_{\eta \mu} & B_{\eta T}\\
B_{\theta\mu}^T & B_{\eta\mu}^T & B_{\mu \mu} & B_{\mu T}\\
B_{\theta T}^T & B_{\eta T}^T & B_{\mu T}^T & B_{TT}
\end{array}\right)
 \end{align*}
 where
 \begin{align*}
 B_{\theta \theta} &= \frac{1}{K'}\sum^{K'}_{k=1}E\left\{{\boldsymbol \psi}(\mathbf{O}_k, \theta_0){\boldsymbol \psi}(\mathbf{O}_k, \theta_0)^T 
 \right\}\\
  B_{\theta \eta} &= \frac{1}{K'}\sum^{K'}_{k=1}E\left\{{\boldsymbol \psi}(\mathbf{O}_k, \theta_0)\left(\eta_0^{-1} - 1\{n_k > 1\} \right)
 \right\} = -\frac{\eta_0}{K'}\sum^K_{k=1}E\left\{ { \boldsymbol \psi}(\mathbf{O}_k, \theta_0)\right\}\\
  B_{\theta \mu} &= \frac{1}{K'}\sum^{K'}_{k=1}E\left\{ {\boldsymbol \psi}(\mathbf{O}_k, \theta_0)\eta_01\{n_k > 1\}\left[ \mu_0 - \sum^{n_k}_{i=1}Y_{ki}\left(\frac{n_k \cdot \text{odds}_{ki}(\theta_0)}{(n_k-1)\sum^{n_k}_{j=1}\text{odds}_{kj}(\theta_0)} - \frac{1}{n_k -1}\right)\right] 
 \right\}\\
 &= \frac{\eta_0}{K'}\sum^K_{k=1}E\left\{ { \boldsymbol \psi}(\mathbf{O}_k, \theta_0)\left[\mu_0 - \sum^{n_k}_{i=1}Y_{ki}\left(\frac{n_k \cdot \text{odds}_{ki}(\theta_0)}{(n_k-1)\sum^{n_k}_{j=1}\text{odds}_{kj}(\theta_0)} - \frac{1}{n_k -1}\right) \right] \right\} 
\\
  B_{\theta  T} &= \frac{1}{K'}\sum^{K'}_{k=1}E\left\{{ \boldsymbol \psi}(\mathbf{O}_k, \theta_0)\eta_01\{n_k > 1\}\left[ T  - \sum^{n_k}_{i=1}Y_{ki}\frac{n_kZ_{ki} - 1}{n_k - 1} + \mu  \right] 
 \right\}\\
 &=\frac{\eta_0}{K'}\sum^K_{k=1}E\left\{{ \boldsymbol \psi}(\mathbf{O}_k, \theta_0)\left[\mu_0 - \sum^{n_k}_{i=1}Y_{ki}\frac{n_kZ_{ki} - 1}{n_k - 1}  \right]\right\} \\
  B_{\eta \eta} &= \frac{1}{K'}\sum^{K'}_{k=1}E\left\{ \left(\eta_0^{-1} - 1\{n_k > 1\} \right)
 \right\} = \eta_0^{-2} + \frac{K}{K'}(1 -2\eta_0^{-1}) \\
  B_{\eta \mu} &= \frac{1}{K'}\sum^{K'}_{k=1}E\left\{1\{n_k > 1\}\left(1 - \eta_0 \right)\left[ \mu - \sum^{n_k}_{i=1}Y_{ki}\left(\frac{n_k \cdot \text{odds}_{ki}(\theta_0)}{(n_k-1)\sum^{n_k}_{j=1}\text{odds}_{kj}(\theta_0)} - \frac{1}{n_k -1}\right)\right]
 \right\}\\
 &=  \frac{1 - \eta_0}{K'} \sum^K_{k=1}\left[ \mu_0 - \sum^{n_k}_{i=1}Y_{ki}\left(\frac{n_k \cdot \text{odds}_{ki}(\theta_0)}{(n_k-1)\sum^{n_k}_{j=1}\text{odds}_{kj}(\theta_0)} - \frac{1}{n_k -1}\right)\right] \\
  B_{\eta T} &= \frac{1}{K'}\sum^{K'}_{k=1}E\left\{1\{n_k > 1\}\left(1 - \eta_0 \right) \left[ T  - \sum^{n_k}_{i=1}Y_{ki}\frac{n_kZ_{ki} - 1}{n_k - 1} + \mu  \right] 
 \right\}\\
 &=  \frac{1 - \eta_0}{K'} \sum^K_{k=1} \left[ \mu_0  - \sum^{n_k}_{i=1}Y_{ki}\left(\frac{n_k \cdot \text{odds}_{ki}(\theta_0)}{(n_k-1)\sum^{n_k}_{j=1}\text{odds}_{kj}(\theta_0)} - \frac{1}{n_k -1}\right)  \right] \\
  B_{\mu \mu} &= \frac{1}{K'}\sum^{K'}_{k=1}E\left\{\eta_0^21\{n_k > 1\}\left[ \mu - \sum^{n_k}_{i=1}Y_{ki}\left(\frac{n_k \cdot \text{odds}_{ki}(\theta_0)}{(n_k-1)\sum^{n_k}_{j=1}\text{odds}_{kj}(\theta_0)} - \frac{1}{n_k -1}\right)\right]^2 
 \right\}\\
 &=  \frac{ \eta_0^2}{K'} \sum^K_{k=1}\left[ \mu_0 - \sum^{n_k}_{i=1}Y_{ki}\left(\frac{n_k \cdot \text{odds}_{ki}(\theta_0)}{(n_k-1)\sum^{n_k}_{j=1}\text{odds}_{kj}(\theta_0)} - \frac{1}{n_k -1}\right)\right]^2 \\
  B_{\mu T} &= \frac{1}{K'}\sum^{K'}_{k=1}E\left\{eta_0^21\{n_k > 1\}\left[ \mu - \sum^{n_k}_{i=1}Y_{ki}\left(\frac{n_k \cdot \text{odds}_{ki}(\theta_0)}{(n_k-1)\sum^{n_k}_{j=1}\text{odds}_{kj}(\theta_0)} - \frac{1}{n_k -1}\right)\right] \left[ T  - \sum^{n_k}_{i=1}Y_{ki}\frac{n_kZ_{ki} - 1}{n_k - 1} + \mu  \right]
 \right\}\\
 &= \frac{ \eta_0^2}{K'} \sum^K_{k=1}\left[ \mu_0 - \sum^{n_k}_{i=1}Y_{ki}\left(\frac{n_k \cdot \text{odds}_{ki}(\theta_0)}{(n_k-1)\sum^{n_k}_{j=1}\text{odds}_{kj}(\theta_0)} - \frac{1}{n_k -1}\right)\right]^2\\
  B_{TT} &= \frac{1}{K'}\sum^{K'}_{k=1}E\left\{\eta_0^21\{n_k > 1\}\left[ T  - \sum^{n_k}_{i=1}Y_{ki}\frac{n_kZ_{ki} - 1}{n_k - 1} + \mu  \right]^2 
 \right\}\\
 &=  \frac{ \eta_0^2}{K'} \sum^K_{k=1}\left\{\mu_0^2 - 2\mu_0\ \sum^{n_k}_{i=1}Y_{ki}\left(\frac{n_k \cdot \text{odds}_{ki}(\theta_0)}{(n_k-1)\sum^{n_k}_{j=1}\text{odds}_{kj}(\theta_0)} - \frac{1}{n_k -1}\right) + E\left[ \sum^{n_k}_{i=1}Y_{ki}\frac{n_kZ_{ki} - 1}{n_k - 1} \right]^2\right\}
 \end{align*}
 
% \begin{align*}
%B(\theta_0, \eta_0, \mu_0, 0)  &= 
%\lim_{K' \longrightarrow \infty}\frac{1}{K'}\sum^{K'}_{k=1}E\left( \begin{array}{cccc} & 
%& 
%& \\
%\cdot & &  &  \\
%\cdot & \cdot & 
%& \ \\
%\cdot & \cdot  & \cdot & \end{array} \right)
%\end{align*}

\normalsize

%Here the $\cdot$ elements in blocks of $B$ are shorthand for transposed versions of the corresponding  upper-diagonal element. 
%%Note that in the middle-left blockwise submatrix of $A$ we abuse notation by putting vector gradients in the denominator of a fraction with a vector gradient in the numerator; this should be interpreted as in R, where each coordinate of the numerator vector is divided by the corresponding coordinate of the denominator vector.  
%Finally, the expectation in the formula for B can be carried through, affecting only the $\boldsymbol \psi$ and $Z_{ki}$ terms (each of which has expectation $p_{ki}$), and zeroes can be substituted in for $T$ terms.
%\small
%\[
%\lim_{K' \longrightarrow \infty} \left( \begin{array}{cccc} \frac{1}{K'}\sum^{K'}_{k=1}E( {\boldsymbol \psi}(\mathbf{O}_k, \theta_0){\boldsymbol \psi}(\mathbf{O}_k, \theta_0)^T )
%& 
%&& \\
%\cdot &&&  \\
%\cdot & \cdot & 
%&  \\
%\cdot & \cdot  & \cdot &   \end{array} \right)
%\]
%\normalsize

%We now consider the form of the variance of $T$, i.e. the bottom right coordinate of the variance-covariance matrix.

  For ease of presentation we let $A(\theta_0)$ and $B(\theta_0)$ be analogous to $A(\theta_0, \eta_0, \mu_0, 0)$ and  $B(\theta_0, \eta_0, \mu_0, 0)$  for the smaller M-estimation problem of fitting the propensity score alone, and  we define  $A(\theta_0, \eta_0)$, $B(\theta_0, \eta_0)$, $A(\theta_0, \eta_0, \mu_0)$ and $B(\theta_0, \eta_0, \mu_0)$ analogously. %be the versions for the M-estimation problem involving only the propensity score and mean parameters. % we let $\mathbf{a}_i(\theta_0)$ be the $i$th row of $A(\theta_0)$.  
We now invert $A(\theta_0, \eta_0, \mu_0, 0)$.    We proceed by representing it as a block matrix:% using the blockwise matrix inversion formula as follows:

\begin{align*}
A(\theta_0, \eta_0, \mu_0, 0)^{-1} &= \left(\begin{array}{cc} 
A(\theta_0, \eta_0,  \mu_0) & \mathbf{0} \\
(\mathbf{0},-1) & -1
\end{array}\right)^{-1}
=  \left(\begin{array}{cc} 
A(\theta_0, \eta_0,  \mu_0)^{-1} & \mathbf{0} \\
(\mathbf{0},-1)A(\theta_0, \eta_0,  \mu_0)^{-1} & -1
\end{array}\right)
%\left(\begin{array}{cc} 
%C & D \\
%\mathbf{0} & L
%\end{array}\right)
%\end{align*}^{-1} &= 
%\left(\begin{array}{cc} 
%C^{-1} & \mathbf{0} \\
%-L^{-1} & M
%\end{array}\right)
\end{align*}
We use a similar strategy to compute $A(\theta_0, \eta_0,  \mu_0)^{-1}$:
\begin{align*}
A(\theta_0,  \eta_0,  \mu_0)^{-1}  &=\left(\begin{array}{cc} 
A(\theta_0,  \eta_0) & \mathbf{0} \\
\left(\lim_{K' \longrightarrow \infty}\frac{\eta_0}{K}\sum^K_{k=1}\sum^{n_k}_{i=1}Y_{ki}\frac{n_k}{n_k-1}\mathbf{d}_k(\theta_0), 0 \right) & -1
\end{array}\right)^{-1}\\
&=  \left(\begin{array}{cc} 
A(\theta_0,  \eta_0)^{-1} & \mathbf{0} \\
\left(\lim_{K' \longrightarrow \infty}\frac{\eta_0}{K}\sum^K_{k=1}\sum^{n_k}_{i=1}Y_{ki}\frac{n_k}{n_k-1}\bdk(\theta_0), 0 \right)A(\theta_0,  \eta_0)^{-1}  & -1
\end{array}\right)
\end{align*}
where $\mathbf{d}_k(\theta_0) = \left(\frac{\partial}{\partial \theta_{1}} \frac{  \text{odds}_{ki}(\theta_0)}{\sum^{n_k}_{j=1}\text{odds}_{kj}(\theta_0)}\quad \cdots \quad\frac{\partial}{\partial \theta_{p}} \frac{ \text{odds}_{ki}(\theta_0)}{\sum^{n_k}_{j=1}\text{odds}_{kj}(\theta_0)}\right)$ for each $k = 1, \ldots, K$.
Finally, we compute $A(\theta_0, \eta_0)^{-1}$:
\begin{align*}
A(\theta_0,  \eta_0)^{-1} &= \left(\begin{array}{cc} 
A(\theta_0) & \mathbf{0} \\
\mathbf{0} & \eta_0^{-2}%\lim_{K' \longrightarrow \infty}\frac{\eta_0}{K}\sum^K_{k=1} \sum^{n_k}_{i=1}Y_{ki}\frac{n_k}{n_k-1}\left(\frac{\partial}{\partial \theta_{1}} \frac{  \text{odds}_{ki}(\theta_0)}{\sum^{n_k}_{j=1}\text{odds}_{kj}(\theta_0)}\quad \cdots \quad\frac{\partial}{\partial \theta_{p}} \frac{ \text{odds}_{ki}(\theta_0)}{\sum^{n_k}_{j=1} \text{odds}_{kj}(\theta_0)}\right) & -1
\end{array}\right)^{-1}
=  \left(\begin{array}{cc} 
A(\theta_0)^{-1} & \mathbf{0} \\
\mathbf{0} & \eta_0^{2}%\lim_{K' \longrightarrow \infty}\frac{\eta_0}{K}\sum^K_{k=1}\sum^{n_k}_{i=1}Y_{ki}\frac{n_k}{n_k-1}\left(\frac{\partial}{\partial \theta_{1}} \frac{  \text{odds}_{ki}(\theta_0)}{\sum^{n_k}_{j=1}\text{odds}_{kj}(\theta_0)}\quad \cdots \quad\frac{\partial}{\partial \theta_{p}} \frac{ \text{odds}_{ki}(\theta_0)}{\sum^{n_k}_{j=1}\text{odds}_{kj}(\theta_0)}\right)A(\theta_0,  \eta_0)^{-1}  & -1
\end{array}\right)
\end{align*}

We now evaluate our quantity of interest, entry $((p + 3), (p+3))$  of the matrix $V(\theta_0,\eta_0,\mu_0)$, which we will denote $v_T(\theta_0,\eta_0,\mu_0)$.
\begin{align*}
%\lim_{K' \longrightarrow \infty} K'
v_T(\theta_0,\eta_0,\mu_0) = \mathbf{a}(\theta_0)B(\theta_0, \eta_0, \mu_0, 0)\mathbf{a}(\theta_0)^T
\end{align*}
where 
\begin{align*}
\mathbf{a}(\theta_0) =
& \left(\begin{array}{cccc} -\lim_{K' \longrightarrow \infty}\frac{\eta_0}{K'}\sum^K_{k=1}\sum^{n_k}_{i=1}Y_{ki}\frac{n_k}{n_k-1}\bdk(\theta_0) A(\theta_0)^{-1} & 0  & 1 &-1\end{array}\right).
\end{align*}
Expanding this, we obtain the following:
\small
\begin{align*}
&%\lim_{K' \longrightarrow \infty} K'
v_T(\theta_0,\eta_0,\mu_0) =\\
& \left\{\lim_{K' \longrightarrow \infty}\frac{1}{K'}\sum^K_{k=1}\sum^{n_k}_{i=1}Y_{ki}\frac{n_k}{n_k-1}\bdk(\theta_0)\right\}A(\theta_0)^{-1} B(\theta_0)[A(\theta_0)^{-1}]^T\left\{\lim_{K' \longrightarrow \infty}\frac{\eta_0}{K'}\sum^K_{k=1}\sum^{n_k}_{i=1}Y_{ki}\frac{n_k}{n_k-1}\bdk(\theta_0)\right\}^T\\
&-2 \left\{\lim_{K' \longrightarrow \infty}\frac{\eta_0}{K'}\sum^K_{k=1}\sum^{n_k}_{i=1}Y_{ki}\frac{n_k}{n_k-1}\bdk(\theta_0)\right\}A(\theta_0)^{-1}\Bigg\{\\\
&\hspace{10em}\left.\lim_{K' \longrightarrow \infty} \frac{\eta_0}{K'}\sum^K_{k=1}E\left\{ { \boldsymbol \psi}(\mathbf{O}_k, \theta_0)\left[\mu - \sum^{n_k}_{i=1}Y_{ki}\left(\frac{n_k \cdot \text{odds}_{ki}(\theta_0)}{(n_k-1)\sum^{n_k}_{j=1}\text{odds}_{kj}(\theta_0)} - \frac{1}{n_k -1}\right) \right]
\right\}  \right\}\\
& -2 \left\{\lim_{K' \longrightarrow \infty}\frac{\eta_0}{K'}\sum^K_{k=1}\sum^{n_k}_{i=1}Y_{ki}\frac{n_k}{n_k-1}\bdk(\theta_0)\right\}A(\theta_0)^{-1}
\left\{\lim_{K' \longrightarrow \infty}\frac{\eta_0}{K'}\sum^K_{k=1}E\left\{{ \boldsymbol \psi}(\mathbf{O}_k, \theta_0)\left[ \mu  - \sum^{n_k}_{i=1}Y_{ki}\frac{n_kZ_{ki} - 1}{n_k - 1}   \right]\right\}  \right\}\\
%\lim_{K' \longrightarrow \infty}\frac{1}{K^2}\sum^K_{k=1}\sum^K_{m=1}\sum^{n_k}_{i=1}\sum^{n_k}_{\ell=1}Y_{ki}Y_{k\ell}\frac{n_k}{(n_k-1)(n_\ell -1)}\left(\frac{ \frac{\partial}{\partial \theta_{1}} \text{odds}_{ki}(\theta_0)}{\sum^{n_k}_{j=1} \frac{\partial}{\partial \theta_{1}}\text{odds}_{kj}(\theta_0)}\quad \cdots \quad\frac{ \frac{\partial}{\partial \theta_{p}} \text{odds}_{ki}(\theta_0)}{\sum^{n_k}_{j=1} \frac{\partial}{\partial \theta_{p}}\text{odds}_{kj}(\theta_0)}\right)\left(n_kE\left[{ \boldsymbol \psi}(Z, \theta_0, \mu_0, 0)Z_{k\ell}\right] - \mathbf{1}\right) \\
%&+\left\{\lim_{K' \longrightarrow \infty}\frac{\eta_0}{K'}\sum^K_{k=1}\sum^{n_k}_{i=1}Y_{ki}\frac{1}{n_k-1}\left(n_kE\left[{ \boldsymbol \psi}(\mathbf{O}_k, \theta_0, \mu_0, 0)Z_{k\ell}\right] - \mathbf{1}\right) \right\}^T\left\{\lim_{K' \longrightarrow \infty}\frac{\eta_0}{K'}\sum^K_{k=1}\sum^{n_k}_{i=1}Y_{ki}\frac{1}{n_k-1}\left(n_kE\left[{ \boldsymbol \psi}(\mathbf{O}_k, \theta_0, \mu_0, 0)Z_{k\ell}\right] - \mathbf{1}\right) \right\}
& +  \lim_{K' \longrightarrow \infty}\frac{\eta_0^2}{K'}\sum^K_{k=1}E\left[ \sum^{n_k}_{i=1}Y_{ki}\frac{n_kZ_{ki} - 1}{n_k - 1} \right]^2- \lim_{K' \longrightarrow \infty}\frac{\eta_0^2}{K'}\sum^K_{k=1} \left[\sum^{n_k}_{i=1}Y_{ki}\left(\frac{n_k \cdot \text{odds}_{ki}(\theta_0)}{(n_k-1)\sum^{n_k}_{j=1}\text{odds}_{kj}(\theta_0)} - \frac{1}{n_k -1}\right)\right]^2 \\
&= \left\{\lim_{K' \longrightarrow \infty}\frac{\eta_0}{K'}\sum^K_{k=1}\sum^{n_k}_{i=1}Y_{ki}\frac{n_k}{n_k-1}\bdk(\theta_0)\right\}A(\theta_0)^{-1} B(\theta_0)[A(\theta_0)^{-1}]^T\left\{\lim_{K' \longrightarrow \infty}\frac{\eta_0}{K'}\sum^K_{k=1}\sum^{n_k}_{i=1}Y_{ki}\frac{n_k}{n_k-1}\bdk(\theta_0)\right\}^T\\
& -2 \left\{\lim_{K' \longrightarrow \infty}\frac{\eta_0}{K'}\sum^K_{k=1}\sum^{n_k}_{i=1}Y_{ki}\frac{n_k}{n_k-1}\bdk(\theta_0)\right\}A(\theta_0)^{-1}
\Bigg\{\\\
&\hspace{10em}\left.\lim_{K' \longrightarrow \infty} \frac{\eta_0}{K'}\sum^K_{k=1}E\left\{ { \boldsymbol \psi}(\mathbf{O}_k, \theta_0)\left[\mu - \sum^{n_k}_{i=1}Y_{ki}\left(\frac{n_k \cdot \text{odds}_{ki}(\theta_0)}{(n_k-1)\sum^{n_k}_{j=1}\text{odds}_{kj}(\theta_0)} - \frac{1}{n_k -1}\right) \right] 
\right\}  \right\}\\
& -2 \left\{\lim_{K' \longrightarrow \infty}\frac{\eta_0}{K'}\sum^K_{k=1}\sum^{n_k}_{i=1}Y_{ki}\frac{n_k}{n_k-1}\bdk(\theta_0)\right\}A(\theta_0)^{-1}
\left\{\lim_{K' \longrightarrow \infty}\frac{\eta_0}{K'}\sum^K_{k=1}E\left\{{ \boldsymbol \psi}(\mathbf{O}_k, \theta_0)\left[ \mu  - \sum^{n_k}_{i=1}Y_{ki}\frac{n_kZ_{ki} - 1}{n_k - 1}   \right]\right\}  \right\}\\%\lim_{K \longrightarrow \infty}\frac{1}{K^2}\sum^K_{k=1}\sum^K_{m=1}\sum^{n_k}_{i=1}\sum^{n_k}_{\ell=1}Y_{ki}Y_{k\ell}\frac{n_k}{(n_k-1)(n_\ell -1)}\left(\frac{ \frac{\partial}{\partial \theta_{1}} \text{odds}_{ki}(\theta_0)}{\sum^{n_k}_{j=1} \frac{\partial}{\partial \theta_{1}}\text{odds}_{kj}(\theta_0)}\quad \cdots \quad\frac{ \frac{\partial}{\partial \theta_{p}} \text{odds}_{ki}(\theta_0)}{\sum^{n_k}_{j=1} \frac{\partial}{\partial \theta_{p}}\text{odds}_{kj}(\theta_0)}\right)\left(n_kE\left[{ \boldsymbol \psi}(Z, \theta_0, \mu_0, 0)Z_{k\ell}\right] - \mathbf{1}\right) \\
%&+\left\{\lim_{K \longrightarrow \infty}\frac{1}{K}\sum^K_{k=1}\sum^{n_k}_{i=1}Y_{ki}\frac{1}{n_k-1}\left(n_kE\left[{ \boldsymbol \psi}(\mathbf{O}_k, \theta_0, \mu_0, 0)Z_{k\ell}\right] - \mathbf{1}\right) \right\}^T\left\{\lim_{K \longrightarrow \infty}\frac{1}{K}\sum^K_{k=1}\sum^{n_k}_{i=1}Y_{ki}\frac{1}{n_k-1}\left(n_kE\left[{ \boldsymbol \psi}(\mathbf{O}_k, \theta_0, \mu_0, 0)Z_{k\ell}\right] - \mathbf{1}\right) \right\}
& + \lim_{K' \longrightarrow \infty}\frac{\eta_0^2}{K'}\sum^K_{k=1}\sum^{n_k}_{i=1}\left(\frac{n_k}{n_k-1}\right)^2Y_{ki}p_{ki}\left\{Y_{ki}(1- p_{ki}) - \sum^{n_k}_{j\neq i}Y_{kj}p_{kj}\right\} 
%\lim_{K \longrightarrow \infty}\frac{1}{K}\sum^K_{k=1}\left\{(T + \mu)\left(T + \mu - \sum^{n_k}_{i=1}Y_{ki}\left(\frac{n_k \cdot \text{odds}_{ki}(\theta)}{(n_k-1)\sum^{n_k}_{j=1}\text{odds}_{kj}(\theta)} - \frac{1}{n_k -1}\right)\right) \right\}\\
\end{align*}

\normalsize

Notice that the last term is identical to the variance of $T$ when the true propensity scores are known, derived above (after normalizing both sides by $K$). 
%\subsection{Estimation of variance of $T$}
While $v_T(\theta_0,\eta_0,\mu_0)$ cannot be calculated in practice without knowledge of the true parameters, the moments of the derivatives of the ${\boldsymbol \psi}$ functions, and the limits in $K$, we can estimate it by substituting estimated parameters and sample moments/estimates as follows. For convenience, we let $\mathbf{O}_{k}^\ell = \{\mathbf{Y}_{k}, \mathbf{Z}_{k}^\ell, \mathbf{X}_k\}$ where $\mathbf{Z}_{k}^\ell$ be the length-$n_k$ vector containing a 1 at element $\ell$ and zeroes otherwise.  Note also that to estimate the variance of $T$, rather than the variance of $\sqrt{K'}(T-\mu)$, we must also divide our estimate for $v_T(\theta_0,\eta_0,\mu_0)$ by $K'$.

\footnotesize
\begin{align*}
&\widehat{Var}(T) = \frac{1}{K'}\widehat{v}_T(\widehat{\theta}, \widehat{\eta}, \widehat{\mu}) = \\
& \left\{\frac{1}{K}\sum^{K}_{k=1}\sum^{n_k}_{i=1}Y_{ki}\frac{n_k}{n_k-1}\bdk(\widehat{\theta}_0)\right\}\frac{A_n(\widehat{\theta}_0)^{-1} B_n(\widehat{\theta}_0)[A_n(\widehat{\theta}_0)^{-1}]^T}{K'}\left\{\frac{1}{K}\sum^{K}_{k=1}\sum^{n_k}_{i=1}Y_{ki}\frac{n_k}{n_k-1}\bdk(\widehat{\theta}_0)\right\}^T\\
& -\frac{2}{K'} \left\{\frac{1}{K}\sum^K_{k=1}\sum^{n_k}_{i=1}Y_{ki}\frac{n_k}{n_k-1}\bdk(\widehat{\theta}_0)\right\}A_n(\widehat{\theta}_0)^{-1}
\left\{ \frac{1}{K}\sum^K_{k=1}\sum^{n_k}_{\ell =1}\frac{\text{odds}_{k\ell}(\widehat{\theta}_0)}{\sum^{n_k}_{j=1}\text{odds}_{kj}(\widehat{\theta}_0)}{ \boldsymbol \psi}(\mathbf{O}_k^\ell, \widehat{\theta_0})\Bigg[ \right.\\
&\hspace{4em}\left.\left.\frac{1}{K}\sum^K_{k'=1}\sum^{n_{k'}}_{i=1}Y_{ki}\left(\frac{n_{k'} \cdot \text{odds}_{k'i}(\theta)}{(n_{k'}-1)\sum^{n_{k'}}_{j=1}\text{odds}_{k'j}(\theta)} - \frac{1}{n_{k'} -1}\right) - \sum^{n_k}_{i=1}Y_{ki}\left(\frac{n_k \cdot \text{odds}_{ki}(\widehat{\theta_0})}{(n_k-1)\sum^{n_k}_{j=1}\text{odds}_{kj}(\widehat{\theta_0})} - \frac{1}{n_k -1}\right) \right] 
 \right\}\\
& -\frac{2}{K'} \left\{\frac{1}{K}\sum^K_{k=1}\sum^{n_k}_{i=1}Y_{ki}\frac{n_k}{n_k-1}\bdk(\widehat{\theta}_0)\right\}A_n(\widehat{\theta}_0)^{-1}
\left\{\frac{1}{K}\sum^K_{k=1}\sum^{n_k}_{\ell =1}\frac{\text{odds}_{k\ell}(\widehat{\theta}_0)}{\sum^{n_k}_{j=1}\text{odds}_{kj}(\widehat{\theta}_0)}{ \boldsymbol \psi}(\mathbf{O}_k^\ell, \widehat{\theta}_0)\Bigg[ \right.\\
&\hspace{17em}\left.\left. \frac{1}{K}\sum^K_{k'=1}\sum^{n_{k'}}_{i=1}Y_{ki}\left(\frac{n_{k'} \cdot \text{odds}_{k'i}(\theta)}{(n_{k'}-1)\sum^{n_{k'}}_{j=1}\text{odds}_{k'j}(\theta)} - \frac{1}{n_{k'} -1}\right)   - \sum^{n_k}_{i=1}Y_{ki}\frac{n_kZ^\ell_{ki} - 1}{n_k - 1}   \right] \right\}\\
&+ \frac{1}{K^2}\sum^K_{k=1}\sum^{n_k}_{i=1}\left(\frac{n_k}{n_k-1}\right)^2Y_{ki}\left(\frac{\text{odds}_{ki}(\widehat{\theta}_0)}{\sum^{n_k}_{\ell=1}\text{odds}_{k\ell}(\widehat{\theta}_0)}\right)\left\{Y_{ki}\left[1- \left(\frac{\text{odds}_{ki}(\widehat{\theta}_0)}{\sum^{n_k}_{\ell=1}\text{odds}_{k\ell}(\widehat{\theta}_0)}\right)\right] - \sum^{n_k}_{j\neq i}Y_{kj}\left(\frac{\text{odds}_{kj}(\widehat{\theta}_0)}{\sum^{n_k}_{\ell=1}\text{odds}_{k\ell}(\widehat{\theta}_0)}\right)\right\} 
%&+\left\{\frac{1}{K}\sum^K_{k=1}\sum^{n_k}_{i=1}Y_{ki}\frac{1}{n_k-1}\left(n_kE\left[{ \boldsymbol \psi}(\mathbf{O}_k, \widehat{\theta}_0, \widehat{\mu}_0, T)Z_{k\ell}\right] - \mathbf{1}\right) \right\}^T\left\{\frac{1}{K}\sum^K_{k=1}\sum^{n_k}_{i=1}Y_{ki}\frac{1}{n_k-1}\left(n_kE\left[{ \boldsymbol \psi}(\mathbf{O}_k, \widehat{\theta}_0, \widehat{\mu}_0, T)Z_{k\ell}\right] - \mathbf{1}\right) \right\}
\end{align*}
\normalsize
%Note that this differs from the $V_n(\widehat{\theta})$ of \citet{stefanski2002calculus} because we evaluate the first moment of $Z_{ki}$ terms wherever possible using the model rather than the empirical mean.  We can also substitute a difference variance estimator for $\widehat{\theta}$  (when available) in place of the $A_n(\widehat{\theta}_0)^{-1} B_n(\widehat{\theta}_0)[A_n(\widehat{\theta}_0)^{-1}]^T$ component in the first variance term.
%\end{landscape}
We now give a closed form for the $\ell$th element of $\bdk(\theta_0)$, i.e. %for ${ \boldsymbol \psi}(\mathbf{O}_k, \widehat{\theta}_0, \widehat{\mu}_0, T)$ and 
%$\left(
$\frac{\partial}{\partial \theta_{\ell}}
 \frac{ \text{odds}_{ki}({\theta}_0)}{\sum^{n_k}_{j=1}\text{odds}_{kj}({\theta}_0)}$, %\quad \cdots \quad\frac{\partial}{\partial \theta_{p}} \frac{ \text{odds}_{ki}({\theta}_0)}{\sum^{n_k}_{j=1} \text{odds}_{kj}({\theta}_0)}
 %\right)$ 
 in the special case of the logistic regression model defined by $\log(P(Z=1|X)/P(Z=0|X)) = X\theta$ where the first column of $X$ consists entirely of 1s (and where we index the $\ell$th column of $X_{ki}$ as $X_{\ell ki}$).
\begin{align*}
\frac{ \text{odds}_{ki}({\theta})}{\sum^{n_k}_{j=1}\text{odds}_{kj}({\theta})} &= \frac{\exp(X_{ki}\theta)}{\sum^{n_k}_{j=1}\exp(X_{kj}\theta)} = \frac{1}{\sum^{n_k}_{j=1}\exp[(X_{kj}-X_{ki})\theta]} \\
\frac{\partial}{\partial \theta_{\ell}}\frac{ \text{odds}_{ki}({\theta})}{\sum^{n_k}_{j=1}\text{odds}_{kj}({\theta})} &= -\frac{1}{\left\{\sum^{n_k}_{j=1}\exp[(X_{kj}-X_{ki})\theta]\right\}^2}\sum^{n_k}_{j=1}\exp[(X_{kj}-X_{ki})\theta](X_{\ell kj} -  X_{\ell k i})
\end{align*}
This quantity will be zero for $\ell = 1$ since $X_{1 k j} = X_{1 k i} = 1$ for all $i,j$.
The estimating equations  ${ \boldsymbol \psi}(\mathbf{O}_k, \widehat{\theta}_0)$ are discussed in many standard textbooks and have the following form in our notation:
\[
\psi_j(\mathbf{O}_k, \widehat{\theta}_0) = \frac{1}{K'}\sum^{K'}_{k=1}\sum^{n_k}_{i=1}\left(Z_{ki} - \frac{1}{1 + \exp(-X_{ki}\theta)}\right)X_{j,ki}
\]
where $X_{1,ki} = 1$ for all $k,i$.  In turn, the gradient of $\psi_j$ has the following form:
\[
\nabla \psi_j(\mathbf{O}_k, \widehat{\theta}_0) =  -\sum^{n_k}_{i=1}\left( \begin{array}{ccccc} \frac{\exp(-X_{ki}\theta)X_{1,ki}X_{j,ki}}{[1 + \exp(-X_{ki}\theta)]^2}, & \cdots &, \frac{\exp(-X_{ki}\theta)X^2_{j,ki}}{[1 + \exp(-X_{ki}\theta)]^2},  & \cdots & ,\frac{\exp(-X_{ki}\theta)X_{p,ki}X_{j,ki}}{[1 + \exp(-X_{ki}\theta)]^2} 
\end{array}\right)
\]
Sample averages of these gradients (multiplied by $-1$ and evaluated at estimated values of $\theta_0$) form the rows of the matrix $A_n(\widehat{\theta}_0)$ in the variance estimate given above.  Note also that the term $A_n(\widehat{\theta}_0)^{-1} B_n(\widehat{\theta}_0)[A_n(\widehat{\theta}_0)^{-1}]^T$ is simply an estimate of the variance-covariance matrix for propensity score estimation; in our framework, in which finite populations exhibit internal matched-set structure and in which individual $O_k$ terms may correspond to clusters of units rather than distinct subjects, this is a clustered covariance estimate such as that of  \citet{liang1986longitudinal} with clustering over matched sets (and unmatched individuals treated as their own clusters).

The above all assumes that the propensity score is estimated in the same sample in which the match is conducted.  Note that Theorems 1-2 instead assume the propensity score estimation is done in a different sample.  In this case the variance estimation problem becomes simpler and does not actually require an M-estimation framework, since the propensity score estimates and the $Z_{ki}$s are independent.  Instead, the delta method can be used to calculate the variance of $\widehat{\mu}$ and this variance can be added to the estimated variance of $T$ to obtain an overall variance.  More specifically,  we can write:
\begin{align*}
h(\theta) = \sum^{n_k}_{i=1}Y_{ki}\left(\frac{n_k \cdot \text{odds}_{ki}(\theta)}{(n_k-1)\sum^{n_k}_{j=1}\text{odds}_{kj}(\theta)} - \frac{1}{n_k -1}\right)
\end{align*}
Let $\Sigma$ be the asymptotic variance of $\widehat{\theta}$.  Then by the delta method, 
\begin{align*}
\text{Var}(h(\theta)) \approx \nabla h(\theta)^T\Sigma \nabla h(\theta).
\end{align*}
where 
\begin{align*}
\nabla h(\theta) = \frac{1}{K}\sum^K_{k=1}\sum^{n_k}_{i=1}Y_{ki}\left(\frac{n_k}{n_k-1}\right)\left(\frac{\partial}{\partial \theta_{1}} \frac{ \text{odds}_{ki}(\theta)}{\sum^{n_k}_{j=1}\text{odds}_{kj}(\theta)}\quad \cdots \quad\frac{\partial}{\partial \theta_{p}} \frac{ \text{odds}_{ki}(\theta)}{\sum^{n_k}_{j=1} \text{odds}_{kj}(\theta)}\right)
\end{align*}
This quantity may be estimated by substituting $\widehat{\theta}$ for $\theta$.  Note that the resulting variance estimate is almost identical to $\widehat{\text{Var}}(T)$ as given above for the case of in-sample estimation, except with no cross-terms and with matrix $\Sigma$, which must be estimated in the pilot sample, in place of the in-sample covariance matrix for $\widehat{\theta}$.% given by $A(\theta_0)^{-1} B(\theta_0)[A(\theta_0)^{-1}]^T$.

\subsection{Additional simulation results}
\label{subsec:addtl_sims}
%Weak signal results and robustness checks for simulations}

We provide further detail on the simulation settings explored in Section 6 % \ref{sec:sims} 
of the main manuscript and give results for several alternate settings.  First, for the primary setting considered in the main manuscript, Figure \ref{fig:sim_densities} shows smoothed density plots comparing the true propensity score distribution across treatment and control groups for sample draws from each of six different simulation settings affecting the treatment model.  The density plots are scaled by the relative size of the two groups; in large samples, regions of the plots where the control density exceeds the treated density indicate propensity score values at which near-exact matching on the propensity score is possible, while regions with larger treated density suggest regions in which either poorer matches must be accepted or treated units must be trimmed \citep{li2013weighting}; the number given in the title of the plot is an estimate of the proportion of the probability mass in the treated group that overlaps with the scaled control density, indicating easy matchability.  In all the settings shown here, most treated units can be matched closely but in most cases there is some region in which matching is more difficult, ensuring that inexact matches occur systematically.  Of course these plots are smoothed depictions that mask the underlying discreteness of the data, particularly in the $n=100$ case, but they suggest the general difficulty of the matching problem.

The remaining figures repeat the analyses given in Figure 1 from the main manuscript% \ref{fig:sim_heatmaps} 
and Figure \ref{fig:sim_densities} for several other simulation settings as discussed in Section 6. % \ref{subsec:simresults}.  
In particular, Figures \ref{fig:ps_dens} - \ref{fig:ps_signal} give results for a weak propensity score setting where the treated and control groups are more similar than in the primary simulations; Figure \ref{fig:rejection_heatmaps} gives results averaging only over simulations in which good covariate balance was achieved in the sense that absolute standardized differences in means for all measured covariates were under 0.2 after matching; and Figure \ref{fig:oos_heatmaps} gives results for a setting in which propensity scores were estimated in a large external sample.  The weak-signal results show similar patterns to the primary simulations but the size of type I error violations is substantially reduced and uniform randomization inference sometimes achieves Type I error control.  The density plots reveal that the weak-signal settings correspond to a setting in which the distribution of true propensity scores is very similar between groups; when propensity score distributions show such similarity in real datasets this suggests that uniform randomization inference may be warranted. The other two simulation settings produce patterns of Type I error rates almost identical to those in Figure 1%\ref{fig:sim_heatmaps} 
in the main manuscript.  %The nonlinear models used in the alternate nonlinear specification are specified as follows:
%The true propensity score is then given by one of the following 
%two %four 
%functions, one that specifies the logit of treatment as a linear function of the 
%columns of $X$ and
%another specifying it as a nonlinear function of those columns:
%\begin{align}
%\text{logit}\left[P(Z=1 \mid X) \right] &= \log\left(\frac{0.3}{0.7}\right) + 
%    \tau \cdot X_1\\

%\text{logit}\left[P(Z=1 \mid X) \right] &= \log\left(\frac{0.3}{0.7}\right) + 
%    \frac{\tau}{\sqrt{265}}\left(X_1 + X_1^3 \right)%\\
%\text{logit}\left[P(Z=1 \mid X) \right] &= \log\left(\frac{0.3}{0.7}\right) + 
%    \frac{\tau}{\sqrt{42}}\left(X_1 + 3.5X_3 + 4.5X_5 + 1.5X_7 + 2.5X_9 \right)\\
%\text{logit}\left[P(Z=1 \mid X) \right] &= \log\left(\frac{0.3}{0.7}\right) -
%    \frac{5.5}{\sqrt{378}} +\\ 
%    &\hspace{2em} \frac{\tau}{\sqrt{378}}\left(X_1 + 3.5X_3 + 4.5X_5 + 1.5X_7 + 
%    2.5X_9 + 4X_1^3 + 2.5\cdot \text{sign}(X_3)\sqrt{|X_3|} + 5.5X_5^2 \right)
%\end{align}

%{
%\singlespace
%\bibliographystyle{asa}
%\bibliography{../../../bibmaster/matching}}

\begin{figure}
\includegraphics{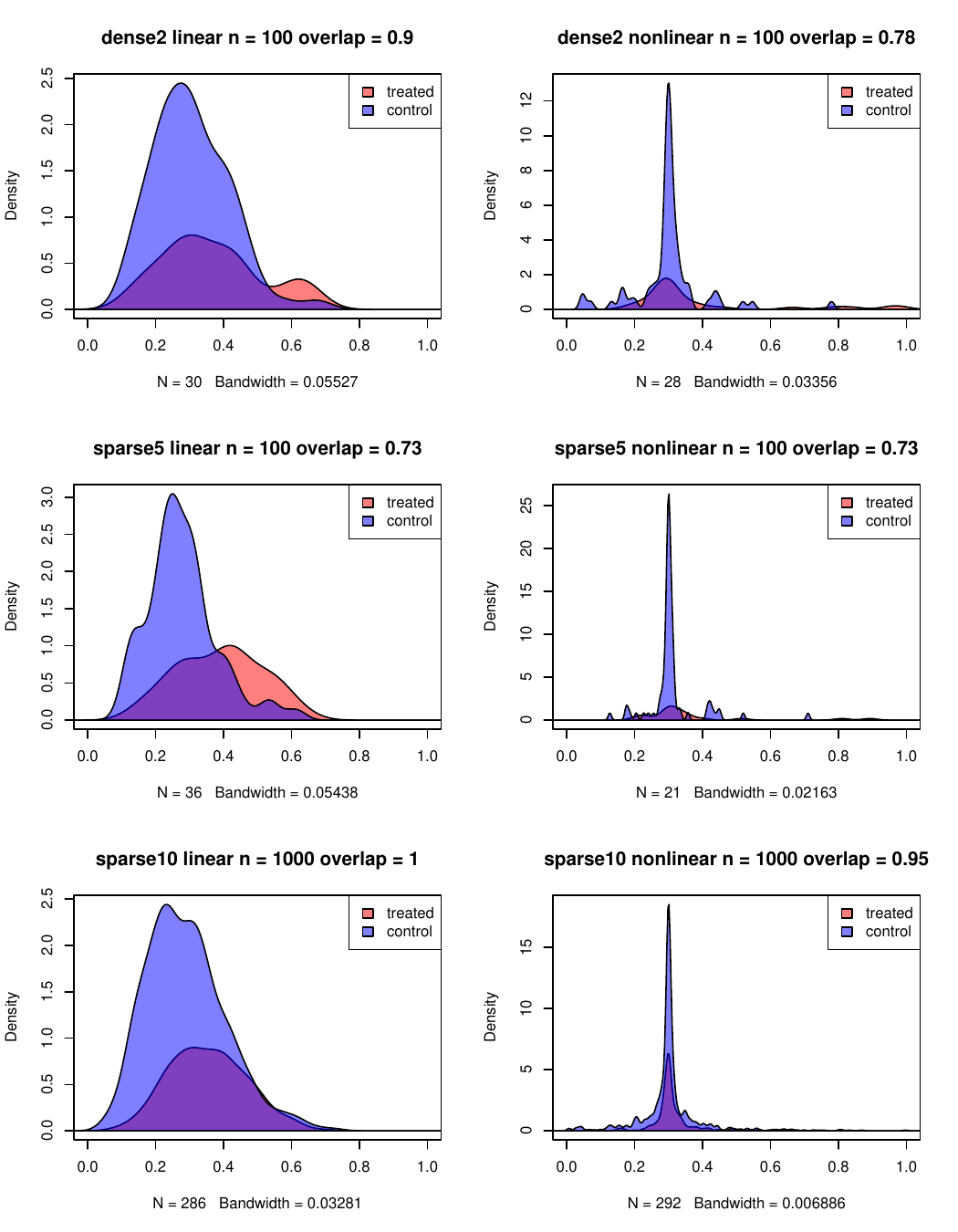}
\caption{\small Smoothed and scaled density plots of one random draw each from six different simulation settings for the treatment model. The first and second columns contrast weak and strong propensity score signals ($\tau = 0.2$ vs. $\tau = 0.6$), and the three rows contrast three different sizes for the initial dataset: $(n,p) = (100,2), (100,5)$, and (1000,10) respectively.}
\label{fig:sim_densities}
\end{figure}

\begin{figure}
\centering
\includegraphics{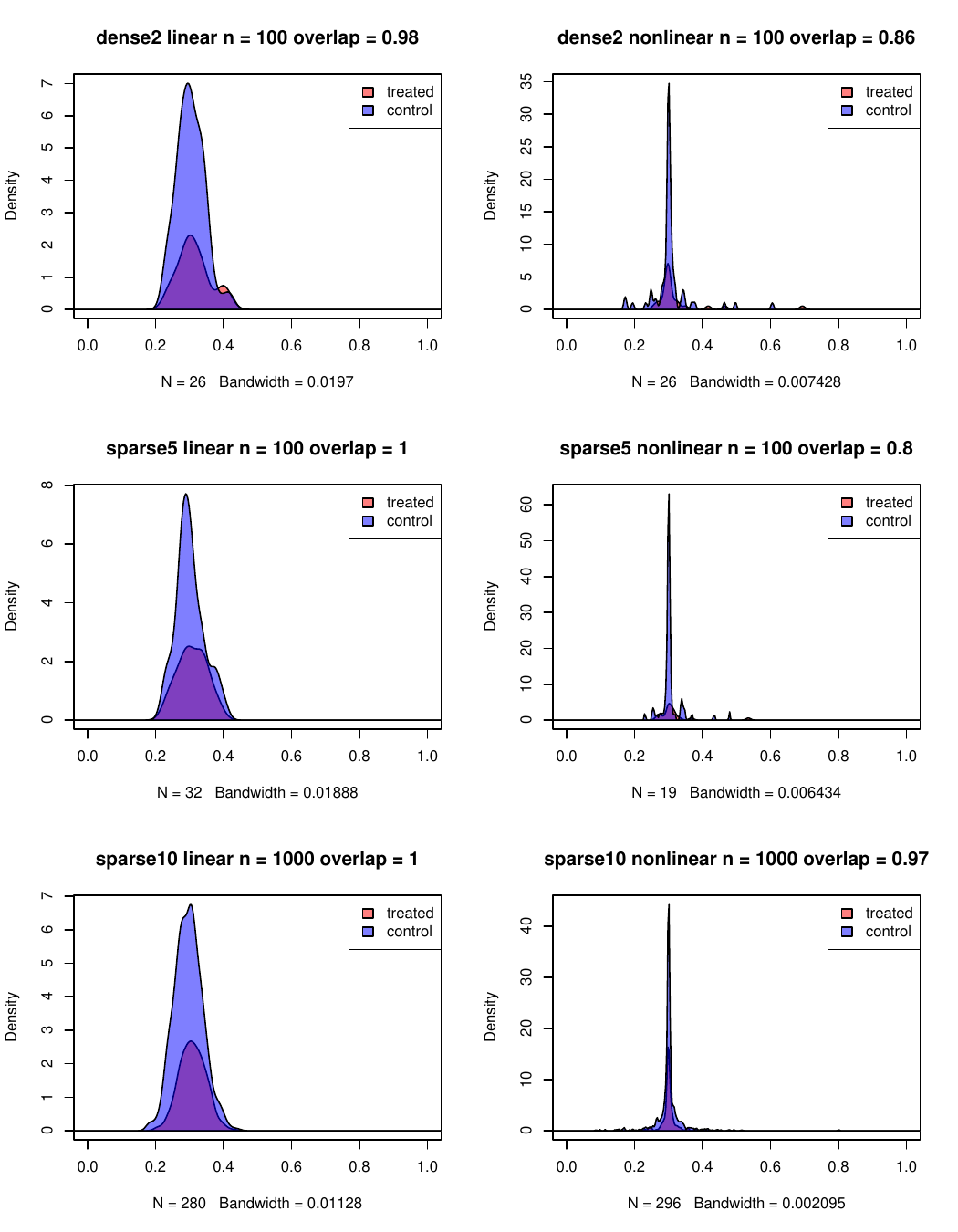}
\caption{\small Relative density of treated and control groups by propensity score value under a weak propensity score signal.  For more detail, see caption to Figure \ref{fig:sim_densities}.}
\label{fig:ps_dens}
\end{figure}

\begin{figure}
\centering
\includegraphics{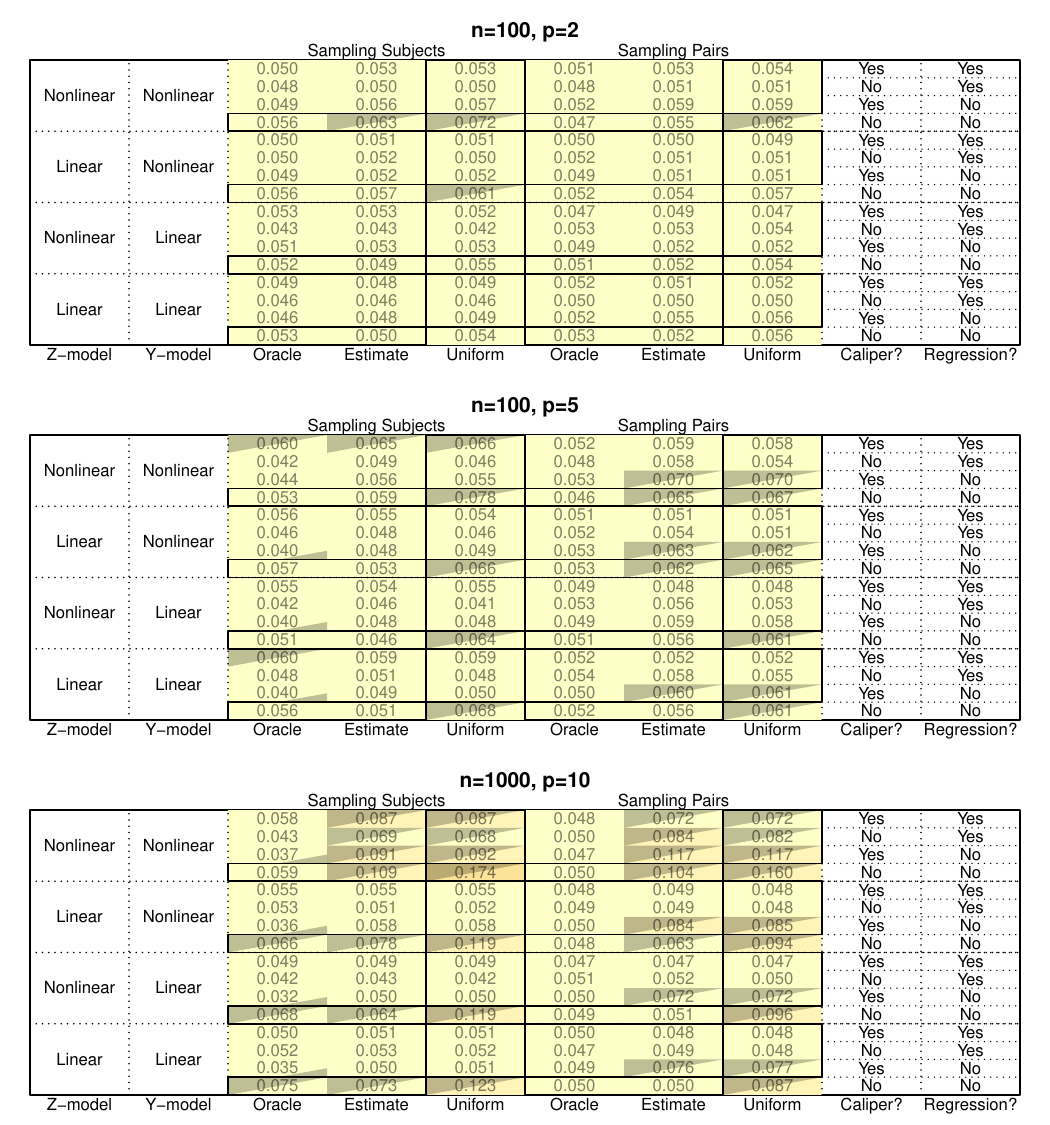}
\caption{\small Type I error results under weak propensity score signal for uniform and covariate-adaptive inference across multiple simulation settings. Each of the three tables corresponds to a separate dataset size.  Within each table the first three columns  contrast three inferential approaches when the subjects are sampled independently prior to matching; the last three columns do the same comparison when treatment assignments within matched sets are sampled independently after matching.  The rows of the table demonstrate different combinations of calipers and regression adjustment and correct or incorrect specification of treatment and outcome models.  Numbers give type I error rates with colors associated to their magnitude; triangles indicate that a one-sample z-test rejected the null hypothesis that the error rate was 0.05 (under a Bonferroni correction scaled to the number of results across the entire figure), with a large upper triangle indicating a positive z-statistic and a small lower triangle indicating a negative z-statistic.} %For a more thorough description of how the tables are organized see the caption to Figure \ref{fig:sim_heatmaps}.}
\label{fig:ps_signal}
\end{figure}

%\begin{figure}
%\centering
%\includegraphics{pics/nonlinear_density.pdf}
%\caption{\small Relative density of treated and control groups by propensity score value under alternative nonlinear models for treatment and outcome.}
%\label{fig:nonlinear_dens}
%\end{figure}
%
%\begin{figure}
%\centering
%\includegraphics{pics/nonlinear_heatmaps.pdf}
%\caption{\small Type I error results under alternative nonlinear models for treatment and outcome for uniform and covariate-adaptive inference across multiple simulation settings.  For more details on the color-coding scheme see Figure \ref{fig:sim_heatmaps}.}
%\label{fig:nonlinear_heatmaps}
%\end{figure}

\begin{figure}
\centering
\includegraphics{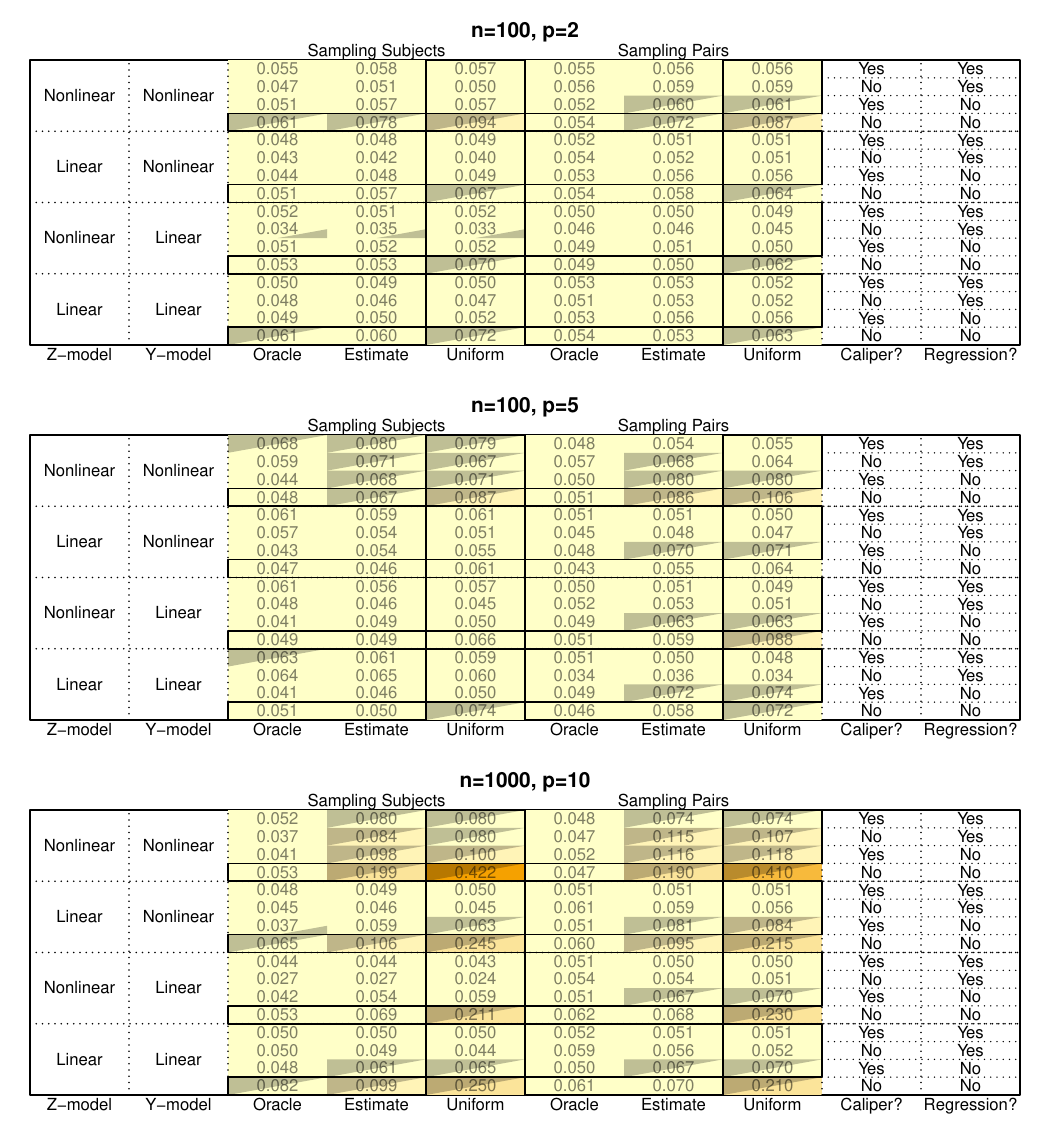}
\caption{\small Type I error results conditional on good covariate balance for uniform and covariate-adaptive inference across multiple simulation settings.  For a more thorough description of how the tables are organized see the caption to Figure \ref{fig:ps_signal}.}\label{fig:rejection_heatmaps}
\end{figure}

\begin{figure}
\centering
\includegraphics{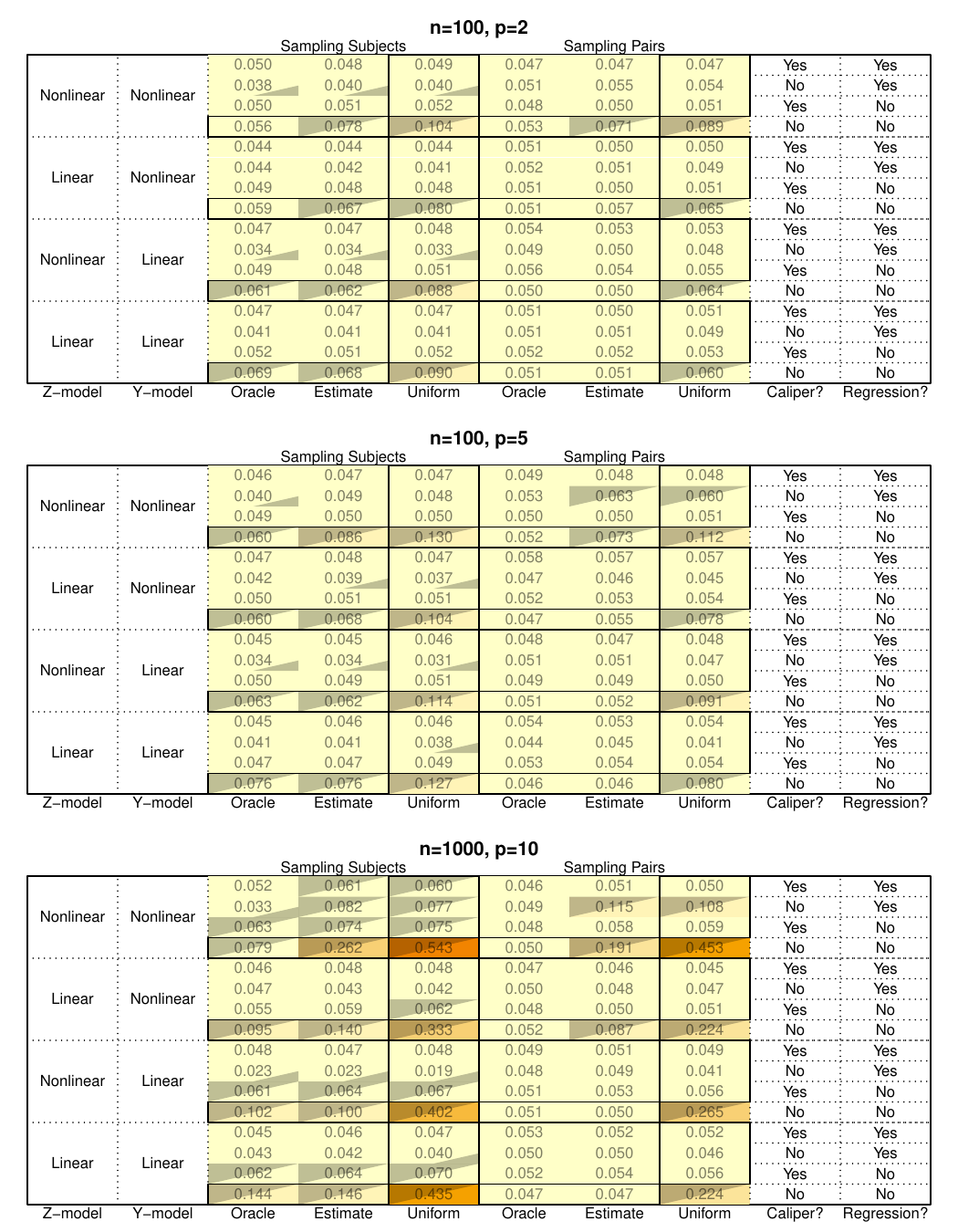}
\caption{\small Type I error results for uniform and covariate-adaptive inference across multiple simulation settings, using out-of-sample propensity scores.  For a more thorough description of how the tables are organized see the caption to Figure \ref{fig:ps_signal}.}
\label{fig:oos_heatmaps}
\end{figure}

\end{document}